\newtheorem{definition}{Definition}
\newtheorem{theorem}{Theorem}[section]
\newtheorem{corollary}[theorem]{Corollary}
\newtheorem{lemma}[theorem]{Lemma}
\newtheorem{remark}[theorem]{Remark}
\newtheorem{fact}[theorem]{Fact}
\newcommand{\PRS}{\textnormal{PRS}}
\newcommand\abs[1]{\left|#1\right|}
\newcommand{\negl}{\textnormal{negl}}
\newcommand{\skgen}{\textit{SKGen}}
\newcommand{\pkgen}{\textit{PKGen}}
\newcommand{\sign}{\textit{Sign}}
\newcommand{\verify}{\textit{Verify}}
\title{On black-box separations of quantum digital signatures\\ from pseudorandom states}
\author{Andrea Coladangelo\thanks{Paul G. Allen School of Computer Science and Engineering, University of Washington. Email: \texttt{coladan@cs.washington.edu}.} \and Saachi Mutreja\thanks{Columbia University. Email: \texttt{saachi@berkeley.edu}.}}
\begin{document}

\maketitle
\begin{abstract}
It is well-known that digital signatures can be constructed from one-way functions in a black-box way. While one-way functions are essentially the minimal assumption in classical cryptography, this is not the case in the quantum setting. A variety of qualitatively weaker and \emph{inherently quantum} assumptions (e.g.\ EFI pairs, one-way state generators, and pseudorandom states) are known to be sufficient for non-trivial quantum cryptography.

While it is known that commitments, zero-knowledge proofs, and even multiparty computation can be constructed from these assumptions, it has remained an open question whether the same is true for quantum digital signatures schemes (QDS). In this work, we show that there \emph{does not} exist a black-box construction of a QDS scheme \emph{with classical signatures} from pseudorandom states with linear, or greater, output length. Our result complements that of Morimae and Yamakawa (2022), who described a \emph{one-time} secure QDS scheme with classical signatures, but left open the question of constructing a standard \emph{multi-time} secure one.
\end{abstract}
\newpage
{\small\tableofcontents}
\newpage

\section{Introduction}
One of the foundational goals of cryptography is to study the \emph{minimal assumptions} needed to construct cryptographic functionalities of interest. While the existence of one-way functions is generally considered to be the minimal assumption that is useful for cryptography, a recent line of work, initiated by Kretschmer \cite{kretschmer2021quantum}, has shown that this may not be the case \emph{in a quantum world}. Since Kretschmer's result, the topic has seen a surge of interest, with many recent works constructing cryptography from assumptions that are potentially weaker than one-way functions \cite{ananth2022cryptography, morimae2022quantum, morimae2022one, ananth2023pseudorandom, ananth2023pseudorandomStrings, khurana2023commitments}.

These constructions are based on novel primitives whose security is formulated in terms of the hardness of an \emph{inherently quantum} problem. The first example of such a primitive, a pseudorandom state (PRS), was proposed by Ji, Liu, and Song \cite{ji2018pseudorandom}. A PRS can be thought of as the quantum analogue of a pseudorandom generator (PRG), and it refers to an ensemble of efficiently preparable quantum states that are \emph{computationally} indistinguishable from Haar random. Other more recent examples are EFI pairs \cite{brakerski2022computational}, and one-way state generators \cite{morimae2022one}. These inherently quantum primitives are sometimes collectively referred to as ``\emph{MicroCrypt}''\footnote{
\emph{MicroCrypt} is an addition to Impagliazzo's five worlds \cite{impagliazzo1995personal}. This is a world in which one-way functions do not exist, but inherently quantum primitives, like PRS, exist, and thus non-trivial cryptography is possible. As far as we know, the term was coined by Tomoyuki Morimae.}. They are especially interesting for two reasons. First, they are qualitatively weaker than one-way functions: while one-way functions imply all of them, Kretschmer showed that they are provably not sufficient to construct one-way functions when used in a black-box way \cite{kretschmer2021quantum}. Second, these primitives are sufficient to construct many cryptographic primitives of interest, namely commitments, zero-knowledge proofs, symmetric-key encryption, and even oblivious transfer and multiparty computation \cite{ananth2022cryptography, morimae2022quantum, morimae2022one, ananth2023pseudorandom, ananth2023pseudorandomStrings, khurana2023commitments}. These results are counterintuitive because in a classical world all of these primitives require, at the very least, one-way functions! The new constructions circumvent this requirement because some component of the construction involves quantum states, e.g.\ the communication in the case of commitments or oblivious transfer, and the ciphertext in the case of symmetric-key encryption. In light of this, what more can we hope to construct in MicroCrypt, and what is beyond reach? In the rest of this introduction, we focus on PRS, as they imply all other known primitives in MicroCrypt.

One of the most important primitives whose relationship to MicroCrypt is still elusive are digital signatures. While classical digital signatures imply one-way functions, and are thus beyond the reach of MicroCrypt, recent works have explored the possibility of constructing digital signature schemes where the public key is a quantum state. In particular, Morimae and Yamakawa \cite{morimae2022one} construct a \emph{one-time} secure digital signature scheme with quantum public keys from pseudo random states in a black-box way. One-time security means that the adversary is only allowed to make \emph{one} query to the signing oracle, before attempting to produce a valid signature of a different message. Morimae and Yamakawa's construction is a quantum public-key version of the classical Lamport signature scheme. 
However, it is unclear how to extend their construction to satisfy the standard notion of \emph{multi-time} security (where the adversary is allowed an arbitrary polynomial number of queries to the signing oracle). One of the main obstacles is that the public keys are quantum states (the classical approach involves signing public keys, and signing quantum states is known to be impossible in general \cite{alagic2021can}). Morimae and Yamakawa thus leave open the question of whether there exists a black-box construction of a quantum digital signature scheme with multi-time security from a PRS. This is the central question that we focus on in this work.
\begin{center}
\emph{Does there exist a black-box construction of multi-time secure quantum digital signatures from PRS?} 
\end{center}
On the one hand, the barrier in extending Morimae and Yamakawa's scheme seems fundamental, and coming up with entirely new schemes is always a difficult endeavour. On the other hand, Kretschmer's original separation of PRS from one-way functions \cite{kretschmer2021quantum} is the only black-box separation involving MicroCrypt that we are aware of, and thus known techniques are fairly limited. 

\subsection{Our results}
We provide a partial answer to the above question on the negative side. Namely, we show the following.

\begin{theorem}[Informal]
There is a quantum oracle $O$ relative to which:
\begin{enumerate} 
    \item PRS with linear, or greater, output length exist.
    \item No digital signature scheme with a \emph{quantum} public key, and classical secret key and signatures, exists.\footnote{Our separation rules out QDS schemes where the message length is at least logarithmic in the security parameter. For QDS schemes where the message length is shorter (e.g.\ $1$-bit messages), we rule out QDS schemes under the notion of multi-time security where the adversary succeeds as long as it produces \emph{any} valid signature that has not been seen before (even of a message that was previously queried). See remark \ref{rem:technicality1} for more details.}
    \end{enumerate}
\end{theorem}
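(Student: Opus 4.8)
The plan is to exhibit one quantum oracle $O = (\mathcal{U}, \mathcal{B})$, where $\mathcal{U} = \{U_n\}_n$ is a family of Haar-random unitaries ($U_n$ acting on $\Theta(n)$ qubits) and $\mathcal{B}$ is an auxiliary ``forging'' oracle, and to verify the two items separately. For item~1, the PRS generator, on key $k\in\{0,1\}^n$, outputs $U_n\ket{k}\ket{0\cdots 0}$ --- a Kretschmer-style construction with linear output length (longer outputs come from enlarging $U_n$). For item~2, fix any candidate QDS scheme $(\skgen,\pkgen,\sign,\verify)$ making black-box use of the PRS (which, relative to $O$, amounts to oracle access to $\mathcal{U}$), with classical secret key $sk$, classical signatures, and quantum public key $\rho_{pk}$; I will describe a polynomial-query forger $A^{O}$ that breaks it.

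The forger queries the signing oracle on $q=\mathrm{poly}(n)$ uniformly random messages, collecting a classical transcript $\tau=\{(m_i,\sigma_i)\}_i$, and keeps a few copies of the (published) public key $\rho_{pk}$. It then calls $\mathcal{B}$ on the classical circuit descriptions, on $\tau$, on the copies of $\rho_{pk}$, and on a fresh random message $m^*$; the oracle $\mathcal{B}$ performs unbounded offline computation to find a secret key $sk'$ that is consistent with $\tau$ --- each $\sigma_i$ lies in the support of $\sign^{\mathcal{U}}(sk',m_i)$ --- and whose induced public key matches $\rho_{pk}$, and returns $\sign^{\mathcal{U}}(sk',m^*)$. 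The forger outputs $(m^*,\sign^{\mathcal{U}}(sk',m^*))$, re-randomizing if a held copy of $\rho_{pk}$ rejects. Correctness of this attack reduces to a quantitative \emph{learning} lemma, which I expect to be the main technical step: once $q$ is a sufficiently large polynomial, with high probability \emph{any} $\tau$-consistent $sk'$ with matching public key produces, on a fresh random $m^*$, a signature that verifies against the \emph{true} $\rho_{pk}$. Morally this is a quantum strengthening of the classical fact that a signing oracle cannot be multi-time unforgeable unless the map $sk\mapsto(\rho_{pk},\tau)$ is hard to invert: enough honest signatures, together with a few copies of the public key, pin down the signing behaviour of $sk$ up to the equivalence relevant for verification.

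The delicate part --- and the main obstacle --- is item~1: showing that adjoining $\mathcal{B}$ to $\mathcal{U}$ does not destroy the pseudorandomness of $\{U_n\ket{k,0}\}$. The threat is concrete, since an oracle able to invert $k\mapsto U_n\ket{k,0}$ (recover a key from copies of its state) immediately breaks the PRS, and this is exactly what a naive forging oracle handed the quantum public key would do --- which is also why one must not, and does not, break the \emph{one-time} scheme of Morimae and Yamakawa with zero signing queries. The resolution I would pursue exploits the fact that the signing oracle, and hence $\tau$, is \emph{classical}: the useful power of $\mathcal{B}$ is essentially a classical, $\mathsf{PSPACE}$-like search over secret keys consistent with a classical transcript, so a would-be PRS distinguisher routing its quantum challenge through $\mathcal{B}$ must first collapse it to classical data, which --- for a genuine pseudorandom state and for a Haar-random state alike --- is Porter--Thomas and hence conveys nothing about the planted key; and on inputs that do not correspond to a genuine attack on a correct scheme, $\mathcal{B}$ can be defined to behave trivially. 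Carrying this out requires (a) pinning down $\mathcal{B}$ so that its dependence on $\mathcal{U}$ near the planted subspace is suppressed, (b) a simulation argument showing $\mathcal{B}$'s answers to a polynomial-query distinguisher are reproducible from its classical power plus polynomially many $\mathcal{U}$-queries, and (c) the Haar-measure / unstructured-search lower bound underlying Kretschmer's separation of PRS from one-way functions, which applies precisely because the linear output length places the planted state in an exponential-dimensional space where polynomially many copies are statistically useless. (For message lengths below $\log\lambda$, the notion of a ``fresh'' $m^*$ must be replaced by the weaker winning condition of Remark~\ref{rem:technicality1}, but the same forger applies.)
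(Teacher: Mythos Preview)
Your proposal has two genuine gaps, and both correspond to places where the paper does real work.

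\textbf{The oracle design.} In the paper the auxiliary oracle is a \emph{fixed classical} oracle $\mathcal{Q}$ for an $\mathsf{EXP}$-complete problem, chosen \emph{independently} of $\mathcal{U}$. This makes item~1 essentially free: Kretschmer already proved PRS security survives any fixed extra oracle that does not see $\mathcal{U}$. Your $\mathcal{B}$, by contrast, must evaluate $\sign^{\mathcal{U}}(sk',\cdot)$ and compare quantum public keys, so it depends on $\mathcal{U}$ and even takes quantum input. You recognise the danger and sketch a ``simulation argument'' for it, but the paper's resolution is structurally different and cleaner: the \emph{adversary}, not the oracle, runs a polynomial-query procedure $\textsf{Sim-Haar}$ (process tomography for small dimensions, $T$-designs for large ones, justified by Haar concentration) to replace $VerPKGen^{\mathcal{U}}$ and $VerPKGenSign^{\mathcal{U}}$ by $\mathcal{U}$-free circuits $VerPKGen'$, $VerPKGenSign'$, and only then hands these classical descriptions to $\mathcal{Q}$. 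No delicate analysis of how $\mathcal{B}$ interacts with the PRS is needed.

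\textbf{The learning lemma is false as stated, and the public-key fix does not save it.} Consistency of $sk'$ with the transcript $\tau=\{(m_i,\sigma_i)\}$ says that $\pkgen(sk')$ \emph{accepts} signatures produced by $sk^*$; it says nothing about the direction you need, namely that $\pkgen(sk^*)$ accepts signatures produced by $sk'$. The paper gives the explicit counterexample: an $sk'$ whose $\verify(\pkgen(sk'),\cdot,\cdot)$ accepts everything is trivially $\tau$-consistent yet may be a useless signer. You try to patch this by also requiring ``matching public key,'' but $\rho_{pk}$ is a quantum state of which you hold only polynomially many copies, while the search is over $2^{\lambda}$ candidate keys; a SWAP-test-style check cannot be run inside a brute-force search without consuming exponentially many copies, and handing the copies to $\mathcal{B}$ is exactly the move that jeopardises PRS security. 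The paper's actual fix is a combinatorial argument that does not touch $\rho_{pk}$ at all: starting from the set $\mathsf{Consistent}$, it iteratively intersects two subsets, $goodSigner_j$ (keys whose signatures are accepted by everyone left) and $stingy_j$ (keys that accept few others), proves $|S_{j+1}|\le \tfrac{9}{10}|S_j|$, and shows $sk^*\in goodSigner_j$ always. Either $sk^*$ eventually falls out of $stingy_j$ (then a random key in $S_j$ signs acceptably for $sk^*$ with constant probability), or the sets shrink to $\{sk^*\}$. This nested-subset trick is the heart of the proof and is absent from your plan.
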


The oracle is similar to the one that Kretschmer used to separate PRS and one-way functions \cite{kretschmer2021quantum}. Our analysis builds on a key technique introduced in the same work, but 
needs to circumvent several additional roadblocks that we discuss in the technical overview (Section \ref{sec:techoverview}). We believe that these roadblocks may not be unique to this setting, and that our proof ideas (described in Section \ref{sec:techoverview2}) might find application elsewhere. As a corollary, our result implies the following.

\begin{corollary}[Informal] 
\label{cor:1}
There does not exist a fully black-box construction of a digital signature scheme with a quantum public key (and classical secret key and signatures) from a PRS with linear, or greater, output length.\footnote{In fact, our result is slightly stronger than this. We rule out a fully black-box construction where secret keys can also be quantum states, as long as the secret key generation algorithm does not make queries to the PRS generation algorithm. See Remark \ref{rem:technicality2} for the details.}
\end{corollary}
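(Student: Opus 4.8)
The plan is to derive the corollary from the Theorem by the standard ``oracle separation implies black-box separation'' argument (the quantum analogue of Impagliazzo--Rudich), so the only real content is bookkeeping about what a fully black-box construction provides. Recall that a fully black-box construction of a multi-time secure QDS scheme (with quantum public key and classical secret key and signatures) from a \PRS{} consists of two components: (i) a \emph{construction}, i.e.\ a tuple of quantum polynomial-time (QPT) oracle algorithms $(\skgen^{(\cdot)},\pkgen^{(\cdot)},\sign^{(\cdot)},\verify^{(\cdot)})$ which, given black-box access to any correct \PRS{} generator $G$, forms a correct QDS scheme; and (ii) a \emph{security reduction}, i.e.\ a QPT oracle algorithm $\mathcal{R}^{(\cdot),(\cdot)}$ such that for every \PRS{} generator $G$ and every QPT adversary $\mathcal{A}$ breaking the multi-time security of the scheme instantiated with $G$, the algorithm $\mathcal{R}^{\mathcal{A},G}$ breaks the pseudorandomness of $G$ with non-negligible advantage. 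Crucially, $\mathcal{R}$ and the construction algorithms are fixed once and for all and query nothing beyond their explicit black-box inputs.

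First I would assume, towards a contradiction, that such a construction exists, and then instantiate everything relative to the oracle $O$ from the Theorem. By item~1 of the Theorem there is a \PRS{} generator $G^{O}$ with linear (or greater) output length that is pseudorandom against every QPT algorithm with oracle access to $O$. Plugging $G^{O}$ into the construction yields, relative to $O$, a QDS scheme $\Sigma^{G^{O}}$ with quantum public key and classical secret key and signatures. By item~2 of the Theorem, $\Sigma^{G^{O}}$ cannot be multi-time secure, so there is a QPT adversary $\mathcal{A}$ with oracle access to $O$ that breaks its security with non-negligible advantage. Feeding $\mathcal{A}$ to the security reduction, the composed algorithm $\mathcal{R}^{\mathcal{A},G^{O}}$ is again a QPT algorithm with oracle access to $O$ (a polynomial-time machine composed with $\mathcal{A}$ and with the efficient generator $G^{O}$, all of which only query $O$), and by the reduction guarantee it breaks the pseudorandomness of $G^{O}$ with non-negligible advantage --- contradicting item~1. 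Hence no fully black-box construction exists. The stronger claim in the footnote, allowing the secret key to be a quantum state provided \skgen{} makes no queries to the \PRS{} generator, would follow in exactly the same way from the correspondingly stronger version of item~2.

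The step I expect to be the main obstacle is \emph{not} in the corollary itself --- which is essentially routine --- but in establishing the Theorem it rests on: namely, designing $O$ so that items~1 and~2 hold simultaneously, i.e.\ so that the Haar-random-like component of $O$ yields a genuine \PRS{} that remains pseudorandom even against adversaries wielding the additional power in $O$ needed to break \emph{every} QDS scheme with classical signatures. Within the corollary the only point needing care is the quantification: one must ensure the adversary supplied by item~2 is genuinely QPT (so the composition with $\mathcal{R}$ stays QPT and $O$-aided) and that ``breaking'' is measured consistently on both sides, up to the polynomial loss of $\mathcal{R}$; with the Theorem phrased as above both are immediate, and if the Theorem's separation were only infinitely often, the conclusion would merely weaken to an infinitely-often black-box separation, which still rules out fully black-box constructions.
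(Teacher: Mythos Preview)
Your proposal is correct and is essentially the paper's own argument: the paper deduces the corollary from the oracle separation via the contrapositive of its Theorem~\ref{thm:quantumIR} (the quantum Impagliazzo--Rudich statement in Section~\ref{sec:black-box-constructions}), which is exactly the ``assume a fully black-box construction, instantiate with the relativized PRS, obtain a relativized QDS, contradict item~2'' reasoning you spell out. The only nuance the paper adds is a careful taxonomy of black-box access (unitary vs.\ isometry vs.\ unitary-plus-inverse), concluding that the separation rules out constructions with unitary access but not necessarily with access to the inverse; your argument is compatible with this refinement.
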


To the best of our knowledge, this is the first non-trivial black-box separation involving MicroCrypt beyond Kretschmer's original separation. We point out that our separation is in the same setting as Morimae and Yamakawa's positive result (PRS of linear output length, and classical secret key and signatures). Thus it directly answers, in the negative, their question of whether their approach could be extended to yield multi-time security.


How should one view a black-box separation? The vast majority of known cryptographic constructions are \emph{fully black-box}. This means that:
\begin{itemize}
\item[(i)] The construction of primitive $\mathcal{Q}$ from primitive $\mathcal{P}$ does not make use of the ``code'' of $\mathcal{P}$, but only uses $\mathcal{P}$ as a black-box.
\item[(ii)] There exists a black-box security reduction, i.e.\ given an adversary $A$ that breaks $\mathcal{Q}$, there exists an adversary $A'$ that breaks $\mathcal{P}$ by using $A$ as a black-box.
\end{itemize}
Showing that such a fully black-box reduction does not exist rules out the most natural class of constructions, and establishes that any attempt to construct $\mathcal{Q}$ from $\mathcal{P}$ \emph{must} violate either (i) or (ii). From the point of view of ``cryptographic complexity'', the black-box separation establishes that primitive $\mathcal{Q}$ is, at the very least, not qualitatively weaker than primitive $\mathcal{P}$, and possibly stronger.

The approach of exhibiting an oracle separation as a means to prove the impossibility of a black-box construction was introduced in a seminal work of Impagliazzo and Rudich \cite{impagliazzo1989limits}. In Section \ref{sec:black-box-constructions}, we include a formal discussion of the relationship between oracle separations and black-box constructions \emph{in a quantum world}, i.e.\ a world in which oracles are unitary and constructions are quantum algorithms. Such a discussion, to the best of our knowledge, was missing despite recent works on the topic. The summary is that, when talking about a black-box construction of primitive $\mathcal{Q}$ from primitive $\mathcal{P}$ in a quantum world, one needs to be careful about defining the kind of ``access to $\mathcal{P}$'' that is available to the construction. One natural definition is that ``access to $\mathcal{P}$'' means having access to a ``unitary implementation'' of $\mathcal{P}$. However, a natural question is: is access to the ``inverse'' also available? Perhaps unsurprisingly, if one wants to rule out a black-box construction \emph{with access to the inverse}, then one should exhibit a separation relative to a pair of oracles $(O,O^{-1})$. 

We point out that our result (Corollary \ref{cor:1}) only rules out a fully black-box construction \emph{without} access to the inverse. This limitation is shared by Kretschmer's separation of PRS and one-way functions (this is not by coincidence, but it is rather because our result leverages some of techniques used there).

\subsection{Open questions}
Our result comes short of a full answer to the general question of the relationship between digital signatures and PRS in two respects:
\begin{enumerate}
\item First, our result only rules out a black-box construction of digital signatures from PRS with long output (linear or greater). However, can digital signatures be constructed from PRS with short output (sublinear)? We should point out that, unlike for classical PRGs, for which the output can be stretched, and also (trivially) shrunk, the relationship between PRS with short and long output is still very much unclear. We do not know whether one implies the other (and if so in what direction), or whether they are incomparable. Recent work \cite{ananth2023pseudorandomStrings} shows that PRS with short output can be used to construct primitives (e.g.\ QPRGs) that we do not know how to construct from PRS with long output. So it seems at least in principle possible that there could be a black-box construction of digital signatures from PRS with short output.
\item Second, our result only applies to digital signatures with a quantum public key, but with classical secret key and signatures. If we allow the latter to be quantum as well, then is there a construction? On the one hand, it is unclear to us how this relaxation may be helpful in realizing a construction. On the other, our current techniques to prove a separation run into a barrier in this setting, which we discuss in the technical overview, and in further detail in Section \ref{sec:barrier}.
\end{enumerate}

\section*{Acknowledgements}
We thank Fermi Ma for several helpful discussions. We also thank Or Sattath for drawing our attention to the fact that our result only separates quantum digital signatures from PRS of linear, or greater, output length, and that a black-box construction may exist from PRS with shorter output length.
SM is supported by AFOSR award FA9550-21-1-0040, NSF CAREER award CCF-2144219, and the Sloan Foundation. This work was partially completed while the authors were visiting the Simons Institute for the Theory of
Computing.

\section{Technical Overview}
\label{sec:techoverview}
We give a detailed informal overview of our result that PRS with linear, or greater, output length, cannot be used to construct, in a black-box way, a digital signature scheme with quantum public key, and classical secret key and signatures.
For convenience, from here on, we simply refer to the latter type of scheme as a QDS.

To show this result, it is sufficient to construct an oracle (classical or quantum) relative to which PRS exist but QDS do not (see Section \ref{sec:black-box-constructions} for more details about why this is sufficient).
Before explaining our approach, we give a (slightly informal) definition  of a QDS scheme and its security. 

A QDS scheme is specified by a tuple of algorithms $(\skgen, \pkgen, \sign, Verify)$ satisfying the following:
\begin{itemize}
    \item $\skgen(1^{\lambda}) \rightarrow sk$: is a QPT algorithm that takes as input  $1^{\lambda}$, and outputs a classical secret key $sk$.
    \item $\pkgen(\textit{sk}) \rightarrow \ket{pk}$: is a deterministic algorithm that takes as input a secret key $\textit{sk}$, and outputs the quantum state $\ket{pk}$.\footnote{To clarify, $\ket{pk}$ is allowed to be an arbitrary pure state (not necessarily a standard basis state).} We additionally require $\ket{pk}$ to be fixed given $sk$, i.e.\ the algorithm $\skgen$ is deterministic (it consists of a fixed unitary quantum circuit acting on the input $sk$, and some auxiliary registers).\footnote{We include this requirement so that the notion of ``quantum'' public key is a little more faithful to the spirit of a classical public key. This requirement ensures that the party in possession of the secret key can generate multiple copies of the corresponding public key. Note that for a completely classical digital signature scheme this requirement is without loss of generality, since any randomness used in the generation procedure can be included in the secret key.}
      
    \item $\sign(\textit{sk,m}) \rightarrow \sigma$: is a QPT algorithm that takes as input a secret key \textit{sk} and a classical message \textit{m}, and outputs a classical signature $\sigma$.

    \item $\verify(\ket{pk},m,\sigma) \rightarrow \mathsf{accept}/\mathsf{reject}$: is a QPT algorithm that takes as input a public key $\ket{pk}$, a message $m$, and a candidate signature $\sigma$, and outputs $\mathsf{accept}$ or $\mathsf{reject}$.
\end{itemize}
We take messages and the secret key to be of length $\lambda$ for simplicity. In this work, we focus on standard \emph{multi-time} security, defined in terms of the following ``unforgeability'' game between an adversary $\mathcal{A}$ and a challenger $\mathcal{C}$.
\begin{itemize}
    \item[(i)] $\mathcal{C}$ samples $sk \leftarrow \skgen(1^{\lambda})$, and $\mathcal{A}$ receives polynomially many copies of $\ket{pk} = \pkgen(\textit{sk})$.
    \item[(ii)] $\mathcal{A}$ obtains from $\mathcal{C}$ the signatures of polynomially many messages of its choice.
    \item[(iii)] $\mathcal{A}$ sends a pair $(m,\sigma)$ to $\mathcal{C}$, where $m$ is not among the previously signed messages.
    \item[(iv)] $\mathcal{A}$ wins the game if $\verify(pk,m,\sigma)$ accepts.
\end{itemize}
The QDS scheme is \emph{multi-time} secure if any quantum polynomial time adversary has negligible winning probability in this game.

\paragraph{The separating oracle.}
We are now ready to describe the oracle relative to which PRS exist, but QDS schemes of type (1) does not. As mentioned earlier, the oracle is similar to the one used by Kretschmer in \cite{kretschmer2021quantum}. The oracle $O$ consists of a pair of oracles $(\mathcal{U}, \mathcal{Q})$, where $\mathcal{Q}$ is a classical oracle solving a fixed $\mathsf{EXP}$-complete problem, and $\mathcal{U}$ is a collection of Haar-random unitaries $\{\mathcal{U}_{\ell}\}_{\ell \in \mathbb{N}}$, where each $\mathcal{U_{\ell}}$ is an indexed list of $2^{\ell}$ Haar-random unitaries acting on $\ell$ qubits. 

\paragraph{Why is $(\mathcal{U},\mathcal{Q})$ a natural choice of oracle?}
First of all, relative to $\mathcal{U}$, there is a trivial construction of a PRS: on input a seed $k$, apply the unitary from $\mathcal{U}_{|k|}$ with index $k$ to the state $\ket{0}^{\otimes |k|}$. Importantly, this construction is still secure even in the presence of the second oracle $\mathcal{Q}$, provided $\mathcal{Q}$ is fixed independently of the sampled $\mathcal{U}$ (this is shown in Lemma 31 of \cite{kretschmer2021quantum}).

 The hard part of our result is constructing, for any QDS scheme relative to these oracles, an adversary $\mathcal{A}^{\mathcal{U},\mathcal{Q}}$ that breaks it. In this technical overview, as a warm-up, we start by considering the case of a QDS scheme with a \emph{classical} public key (and classical signatures). In this case, a simple approach suffices: $\mathcal{A}$ uses $\mathcal{Q}$ to perform a ``brute-force'' search for a signature that passes the verification procedure (this ``brute-force'' search can be performed because $\mathcal{Q}$ solves an $\mathsf{EXP}$-complete problem). Since $\mathcal{Q}$ is a function that is fixed \emph{before} $\mathcal{U}$ is sampled (informally speaking, $\mathcal{Q}$ does not have access to $\mathcal{U}$), the brute-force search approach has to be combined with a technique introduced by Kretschmer \cite{kretschmer2021quantum} to simulate the queries that the verification procedure makes to $\mathcal{U}$. We then move on to the case of a QDS scheme with a \emph{quantum} public key (and classical signatures). Here, there are several challenges that prevent us from using the same ``brute-force'' search approach. We describe these challenges, and an approach that overcomes them.

\subsection{Warm up: oracle separation between PRS and QDS with \emph{classical} public key}
\label{sec:techoverview1}
Consider a QDS scheme $(\skgen^{\mathcal{U},\mathcal{Q}}, \pkgen^{\mathcal{U},\mathcal{Q}}, \sign^{\mathcal{U},\mathcal{Q}}, Verify^{\mathcal{U},\mathcal{Q}})$ with a classical public key. Unless specified otherwise, we will always consider schemes with classical secret keys and signatures. Our task is to construct an adversary $\mathcal{A}^{\mathcal{U},\mathcal{Q}}$ that wins the multi-time unforgeability game with non-negligible probability.
 
\paragraph{How does $\mathcal{A}$ use $\mathcal{Q}$?}
For concreteness, suppose signatures for some message $m$ are $\lambda$-bit strings. Let $pk$ be the public key. The simplest approach to finding a valid signature for message $m$, is to do a ``brute-force search'' over the space of $\lambda$-bit strings for a string that is accepted by $\verify(pk, \cdot)$. One needs to be a bit more careful though since $\verify$ makes queries to both $\mathcal{U}$ and $\mathcal{Q}$. Ignoring $\mathcal{U}$ for a moment, since $\mathcal{Q}$ solves a fixed $\textsf{EXP}$-complete problem, the brute-force search naively corresponds to an $\mathsf{EXP}^{\mathsf{EXP}}$ problem (which may thus be outside of $\mathsf{EXP}$). However, since $\verify$ is a $QPT$ algorithm (and thus can only make polynomial-size queries to $\mathcal{Q}$), this brute-force search actually corresponds to an $\mathsf{EXP}$ problem (and can thus be reduced to an instance of the $\mathsf{EXP}$-complete problem solved by $\mathcal{Q}$)\footnote{For more details, see the second footnote within Algorithm~\ref{algo:findsk}}.

The more delicate issue is that, while the algorithm $\verify$ has a succinct description, unfortunately the oracle $\mathcal{U}$ does not. Moreover, the oracle $\mathcal{U}$ is sampled \emph{after} $\mathcal{Q}$ is fixed, and so even an inefficient description of $\mathcal{U}$ cannot be ``hardcoded'' into $\mathcal{Q}$. As anticipated, this issue can be resolved using a technique by Kretschmer \cite{kretschmer2021quantum}, which we describe below.\footnote{As an alternative to this classical oracle $\mathcal{Q}$, one might also consider a quantum oracle $\mathcal{Q}$ that makes queries to $\mathcal{U}$. Given exponentially many queries to $\mathcal{U}$, such an oracle would allow the adversary to easily break the QDS scheme. However, such an oracle would also break the security of the PRS built using $\mathcal{U}$.} This approach still runs into several fundamental issues when the public key is \emph{quantum}. We describe these issues, and how to overcome them, in Section \ref{sec:techoverview2}, but for now we focus on the case of a \emph{classical} public key.

\begin{remark}
For simplicity, in the rest of this section, we will often describe $\mathcal{Q}$ as an exponential-time algorithm. Formally, however, $\mathcal{Q}$ is a fixed function computing a fixed \textsf{EXP}-complete problem. So, when describing $\mathcal{Q}$ as an exponential-time algorithm what we formally mean is: we cast the underlying problem that the algorithm is solving as an $\mathsf{EXP}$ problem, and then reduce it to the particular \textsf{EXP}-complete problem computed by $\mathcal{Q}$.
\end{remark}

\paragraph{Simulating queries to $\mathcal{U}$.}
The technique relies crucially on the strong concentration property of the Haar measure. The corollary of this property that is most relevant here is the following (stated informally). Let $C$ be a quantum circuit that makes $poly(\lambda)$ queries to a Haar random unitary acting on $\lambda$ qubits. Then, with overwhelming probability over (independently) sampling two such Haar random unitaries $U$ and $U'$, the output distributions of the circuits $C^U$ and $C^{U'}$ (on, say, the $\ket{0}$ input) are within a small constant TV distance of each other. In fact, the concentration is strong enough to support a union bound over \emph{all} standard basis inputs. So, with overwhelming probability over $U$ and $U'$, the output distributions of $C^U$ and $C^{U'}$ on \emph{all} standard basis inputs, are within a small constant TV distance of each other. 

In our setting, $C$ is the circuit $Verify(pk,m, \cdot)$ for some message $m$, which makes $T$ queries to the family $\mathcal{U}$. Thanks to the above concentration property, $\mathcal{Q}$ can now perform the brute-force search ``without access to $\mathcal{U}$'' by simply replacing the oracle calls of $Verify(pk,m, \cdot)$ with unitary $T$-designs. This will perfectly simulate $T$ queries to freshly sampled Haar random unitaries. 

There is still one remaining subtlety. Recall that $\mathcal{U}$ is a \emph{family} of unitaries $\{\mathcal{U}_{\ell}\}_{\ell \in \mathbb{N}}$, where each $\mathcal{U_{\ell}}$ is an indexed list of $2^{\ell}$ Haar-random unitaries acting on $\ell$ qubits. However, the concentration property only holds for Haar random unitaries acting on a \emph{large enough number of qubits}. This issue can be circumvented because for smaller dimensions, up to $O(\log(\lambda))$ qubits, the unitaries can be ``learnt'' efficiently (in $O(poly(\lambda))$ queries) by performing process tomography.

We can package Kretschmer's technique into one procedure, which we will denote as $\textsf{Sim-Haar}$. The latter procedure has two parameters $\eta$ and $\delta$. It has oracle access to $\mathcal{U}$, and takes as input the description of a quantum circuit $C$ (with a one-bit output) that makes queries to $\mathcal{U}$, and it outputs another quantum circuit $C'$ (with a one-bit output) that \emph{does not} make queries to $\mathcal{U}$. The guarantee of $\textsf{Sim-Haar}$ is that, with probability $1-e^{-\eta}$ over $\mathcal{U}$ and the randomness of the procedure, it holds that, for a given $x$, 
$$\Big|\Pr[C^{\mathcal{U}}(x) = 1] - \Pr[C'(\ket{x}) = 1]\Big| \leq \delta \,.$$
The runtime of $\textsf{Sim-Haar}$ (which includes queries to $\mathcal{U}$) is $poly(|C|, T, \eta, 1/\delta)$, where $|C|$ is the size of $C$.

To put things together, the adversary $\mathcal{A}^{\mathcal{U}, \mathcal{Q}}$ for the QDS scheme picks an arbitrary message $m$. It first obtains a circuit $Verify'$ by running $\textsf{Sim-Haar}^{\mathcal{U}}$ on input $Verify(pk, m, \cdot)$ (with a sufficiently large $\eta$, and a small constant $\delta$). Then, it invokes $\mathcal{Q}$ to search for a signature $\sigma$ such that $Verify'(\sigma)$ accepts with probability greater than a sufficiently large constant, and finally it outputs $(m, \sigma)$.

\subsection{Oracle Separation between PRS and QDS with \emph{quantum} public key}
\label{sec:techoverview2}

There are several issues when trying to extend the previous `brute force' attack to a QDS scheme with a quantum public key.

\paragraph{The first issue with a quantum $pk$.} Let the public key be a quantum state $\ket{pk}$. The first issue is syntactical: since $\mathcal{Q}$ is a classical oracle, how does $\mathcal{A}$ describe the circuit $Verify(\ket{pk}, \cdot)$ to $\mathcal{Q}$? The natural way to fix this is to consider a \emph{quantum} oracle $\mathcal{Q}$ (i.e.\ one that can take quantum inputs) that is still independent of $\mathcal{U}$. However, even then, $\mathcal{Q}$ would in general need exponentially many copies of $\ket{pk}$ in order to run a brute force search algorithm. This is because the public key state is potentially disturbed every time the circuit $Verify$ is run.\footnote{Note that attempts to ``uncompute'' the circuit and recover $\ket{pk}$ fail in general for several reasons. One of them is that correctness and soundness are not perfect, so negligible errors can add up to a noticeable quantity when performing a brute force search.}


\paragraph{An alternative approach:} Note that we should have expected the previous approach to fail. This is because it did not make use of the adversary's ability to make queries to the signing oracle. Since there exists a black-box construction of a one-query secure QDS scheme from a PRS \cite{morimae2022quantum}, any adversary breaking the QDS scheme must necessarily make use of signing queries (and in fact polynomially many of them). We consider the following alternative approach. 

\begin{itemize}
\item $\mathcal{A}$ makes polynomially many queries to the signing oracle, obtaining message-signature pairs $(m_i, \sigma_i)$.
\item $\mathcal{A}$ uses $\mathcal{Q}$ to find a set of secret keys that are ``consistent'' with all of the $(m_i, \sigma_i)$. We will refer to this set as $Consistent$. That is, $sk \in Consistent$ if, for all $(m_i, \sigma_i)$,
$\Pr[Verify^{\mathcal{U, Q}}(\pkgen^{\mathcal{U,Q}}(sk), m_i, \sigma_i) = \textsf{accept}]$ is greater than some treshold, for example $\frac{9}{10}$. Once again, just like in Subsection \ref{sec:techoverview1}, $\mathcal{Q}$ is independent of $\mathcal{U}$, and so we need to first obtained a simulated version of the circuit \\
$Verify^{\mathcal{U, Q}}(\pkgen^{\mathcal{U,Q}}(\cdot), \cdot, \cdot)$ using the \textsf{Sim-Haar} procedure described in Subsection \ref{sec:techoverview1}. Recall that the guarantee of \textsf{Sim-Haar} is that, with high probability over $\mathcal{U}$, on \emph{all} classical inputs $sk, m, \sigma$ the simulated circuit's acceptance probability is close to the original. 

For the rest of the section, whenever we refer to a circuit that originally made queries to $\mathcal{U}$, we will simply assume that we are utilizing a simulated version of that circuit obtained using \textsf{Sim-Haar}, and we will drop $\mathcal{U}$ from the notation. For ease of notation, since $\mathcal{Q}$ is fixed, we will also drop $\mathcal{Q}$ from the notation.
\item $\mathcal{A}$ signs a fresh message using a uniformly random key from $Consistent$.
\end{itemize}

\paragraph{The main challenge:}
Observe that the set of consistent secret keys are, by definition, those $sk$ such that, for all $(m_i, \sigma_i)$, $\Pr[Verify(\pkgen(sk), \cdot) = \mathsf{accept}] > 9/10$. By a suitable concentration bound (taking the number of queried message-signature pairs to be a large enough polynomial), we have that, with overwhelming probability over the $(m_i, \sigma_i)$, such $sk$ are also such that, for most $m$,
\begin{equation}
\Pr[Verify(\pkgen(sk), m, \sigma)  =\mathsf{accept}] \geq \Omega(1) \,, \textnormal{ where } \sigma \leftarrow \sign(sk^*, m) \,. \label{eq:reverse good signer}
\end{equation}
In other words, the secret keys in $Consistent$ accept (with high probability) not only the queried message, signature pairs, but also \emph{fresh} signatures signed using the true secret key $sk^*$. 

Unfortunately, this is not quite the guarantee we are looking for! What we want is the reverse (i.e.\ we would like the roles of $sk$ and $sk^*$ to be swapped): an $sk$ such that, for most $m$,
\begin{equation}
\label{eq:goodsigner}
\Pr[Verify(\pkgen(sk^*), m, \sigma)  =\mathsf{accept}] \geq \Omega(1) \,,\textnormal{ where } \sigma \leftarrow \sign(sk, m) \,.  
\end{equation}
 To make the issue concrete, consider an $sk$ for which $Verify(\pkgen(sk), \cdot, \cdot)$ simply accepts everything with probability $1$. Such an $sk$ would certainly be in the consistent set of secret keys, but it may not be capable of generating signatures that are accepted by the true secret key $sk^*$ (and the existence of such an $sk$ does not appear to contradict any property of a QDS scheme).

To summarize, so far we have identified a consistent set of secret keys that clearly contains $sk^*$, but is potentially exponentially large, and may contain secret keys that do not satisfy Equation \ref{eq:goodsigner}, i.e. they are not ``good signers''. So, how do we proceed?

\paragraph{Finding a sequence of smaller and smaller subsets containing $sk^*$.} We describe an iterative procedure that identifies smaller and smaller subsets of $Consistent$, which are \emph{guaranteed to still contain} $sk^*$ (or else there is an easy way to find a signature accepted by $\pkgen(sk^*)$). Eventually, these subsets only contain the true secret key $sk^*$.

Before describing the iterative procedure, we define the following terms:
\begin{itemize}
\item \emph{``Good signer'' secret keys}: Informally, a secret key $sk \in Consistent$ is a ``good signer'' if $\frac{9}{10}$ of the other secret keys in $Consistent$ accept signatures generated using $sk$ on a constant fraction of the message space, with constant probability. More precisely,
$sk \in Consistent$ is a good signer iff the following is true: $|accept_{sk}| \geq \frac{9}{10}\cdot |Consistent|,$ where $accept_{sk}$ is the set of all secret keys $sk' \neq sk$ such that at least $\frac{1}{8}$ fraction of the message space satisfies the following: $\Pr[Verify(\pkgen(sk'), m, \sign(sk,m)] > \frac{1}{8}$. The exact constants in this and the next definition are not important.
\item \emph{``Stingy'' secret keys}:  Informally, a secret key $sk \in Consistent$ is ``stingy'' if it does not accept most signatures generated by most secret keys in $Consistent$.
More precisely, $sk \in Consistent$ is stingy iff the following is true:  $|friends_{sk}| \leq \frac{1}{2} \cdot |Consistent| $, where $friends_{sk}$ is the set of all secret keys $sk' \neq sk$ such that $sk \in accept_{sk'}$. 
\end{itemize}

The two sets above can be defined analogously with respect to a set of secret keys $S$, which is not necessarily the set $Consistent$. In this case, we denote them as $GoodSigner_S$ and $Stingy_S$.

\noindent We are now ready to define the following sequence of nested subsets of $Consistent$. Let $S_0 = Consistent$.
For $j \in [T]$, where $T$ is a large enough polynomial in $\lambda$, define 
     $$ S_j = GoodSigner_{S_{j-1}} \cap Stingy_{S_{j-1}} \,.$$


\noindent \textbf{Observation 1:} The sets $S_j$ shrink in size very quickly. More precisely, one can show that $|S_j| \leq \frac{9}{10} |S_{j-1}|$ (or $|S_j| = 1$). This is a somewhat straightforward combinatorial argument based on the fact that the $GoodSigner$ and $Stingy$ sets impose conflicting restrictions on their elements.

\vspace{2mm}
\noindent \textbf{Observation 2:}
The second crucial observation is that $sk^* \in GoodSigner_{S_j}$ for all $j$. This is for the following reason. By definition, the $S_j$ are all subsets of $Consistent$. Moreover, provided the set of queried message-signature pairs $(m_i, \sigma_i)$ is a sufficiently large polynomial, then, with overwhelming probability over the queried pairs (by a concentration bound), \emph{all} secret keys $sk \in Consistent$ accept most signatures generated using the true secret key $sk^*$ (we argued this earlier too in Equation \eqref{eq:reverse good signer}). In other words, with overwhelming probability, \emph{all} $sk \in Consistent$ belong to $accept_{sk^*}$. We emphasize that this part of the proof crucially relies on the fact that the QDS adversary can make \emph{polynomially} many signing queries (this is why our black-box separation does not contradict the one-query secure black-box construction in \cite{morimae2022quantum}).

\vspace{1mm}
Observation 2 implies that, for all $j$, $sk^* \in S_j$ if and only if $sk^* \in Stingy_{S_{j-1}}$. There are two cases:
\begin{enumerate}
\item $sk^* \in Stingy_{S_j}$ for all $j$. Then it must be that $S_T = \{sk^*\}$ (because of the shrinking property of the $S_j$). 
\item $sk^* \notin Stingy_{S_j}$ for some $j$. Then, note that, by definition of $Stingy_{S_j}$, this implies that $\pkgen(sk^*)$ accepts signatures generated by a constant fraction of $sk$ in $S_j$, with a constant probability (for a constant fraction of messages). More formally, $sk^* \in accept_{sk}$ for a constant fraction of $sk$.
\end{enumerate}

Returning to our adversary for the QDS scheme, $\mathcal{A}$ asks $\mathcal{Q}$ to do the following: compute the sets $S_j$ as defined above, and output \emph{one} uniformly random element from each $S_j$ (note that this is an exponential-time computation, and so it can be ``run'' by $\mathcal{Q}$). $\mathcal{A}$ picks a uniformly random secret key from this (polynomial-size) set, and uses it to sign a uniformly random message. By the properties we proved above, the list of secret keys output by $\mathcal{Q}$ contains $sk^*$ (in case 1), or contains (in case 2), with constant probability, an $sk$ such that $sk^* \in accept_{sk}$. Therefore, $\mathcal{A}$ wins the unforgeability game with inverse-polynomial probability.

\section{Preliminaries}
\subsection{Basic Notation}
Throughout the paper, $[n]$ denotes the set of integers $\{1,2,\dots, n\}$. If $X$ is a probability distribution, we use $x \sim X$ to denote that $x$ is sampled according to $X$. A function $f$ is \emph{negligible} if for every constant $c>0$, $f(n)\leq \frac{1}{n^c}$ for all sufficiently large $n$. 
We use the abbreviation QPT for a quantum polynomial time algorithm. We use the notation $A^{(\cdot)}$ to refer to an algorithm (classical or quantum) that makes queries to an oracle.

\subsection{Quantum Information}
We use $TD(\rho,\sigma)$ to denote the trace distance between density matrices $\rho$ and $\sigma$. For a quantum channel $\mathcal{A}$, we let $\|A\|_{\diamond}$ denote its \emph{diamond norm}. The diamond norm and trace distance satisfy the following relation:
\begin{fact}\label{fact}\cite{NC}
     Let $\mathcal{A}$ and $\mathcal{B}$ be quantum channels and $\rho$ be a density matrix. Then,
     \[
     TD(\mathcal{A}(\rho),\mathcal{B}(\rho)) \leq ||\mathcal{A}-\mathcal{B}||_{\diamond}
     \]
 \end{fact}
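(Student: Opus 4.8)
The plan is to prove this directly from the definition of the diamond norm, with no need for any operational (distinguishability) detour. Recall that for a Hermiticity-preserving superoperator $\Phi$ (which $\mathcal{A}-\mathcal{B}$ is, being a difference of quantum channels), the diamond norm is
\[
\|\Phi\|_{\diamond} \;=\; \sup_{k\ge 1}\ \sup_{\rho'}\ \big\|(\Phi\otimes \mathrm{id}_k)(\rho')\big\|_1 \,,
\]
where the inner supremum ranges over all density matrices $\rho'$ on the input system tensored with a $k$-dimensional ancilla, and $\|\cdot\|_1$ is the trace norm. The only facts I would invoke are (i) this definition, and (ii) that the trace distance is exactly half the trace norm, $TD(\rho,\sigma)=\tfrac12\|\rho-\sigma\|_1$.

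The argument is then a one-line specialization. Take $\Phi=\mathcal{A}-\mathcal{B}$, choose the trivial ancilla $k=1$, and plug in the particular input $\rho'=\rho$. Monotonicity of the supremum gives $\|\mathcal{A}(\rho)-\mathcal{B}(\rho)\|_1 \le \|\mathcal{A}-\mathcal{B}\|_{\diamond}$. Combining with $TD(\mathcal{A}(\rho),\mathcal{B}(\rho))=\tfrac12\|\mathcal{A}(\rho)-\mathcal{B}(\rho)\|_1$ yields
\[
TD(\mathcal{A}(\rho),\mathcal{B}(\rho)) \;\le\; \tfrac12\|\mathcal{A}-\mathcal{B}\|_{\diamond} \;\le\; \|\mathcal{A}-\mathcal{B}\|_{\diamond}\,,
\]
which is the claim (in fact with a factor-of-two of slack).

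There is no real obstacle here; the only thing to be careful about is bookkeeping of normalization conventions. Different sources normalize the diamond norm differently (some build in a factor of $\tfrac12$, some restrict the supremum to $k$ equal to the input dimension, some take it over all operators $X$ with $\|X\|_1\le 1$ rather than over density matrices), but each such convention differs from the one above only by a constant in $[\tfrac12,1]$, and in every case the inequality survives precisely because the statement is an inequality rather than an equality. I would also remark that restricting to density matrices is without loss of generality for $\mathcal{A}-\mathcal{B}$: since it is Hermiticity-preserving, the extreme points relevant to the supremum can be taken to be (pure, hence) density matrices, so the definition above agrees with the induced-trace-norm definition. This point is standard and I would simply cite it (e.g.\ \cite{NC}) rather than reprove it.
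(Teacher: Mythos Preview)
Your argument is correct and is exactly the standard one-line derivation: a density matrix $\rho$ has $\|\rho\|_1=1$, so it lies in the feasible set of the diamond-norm supremum (with trivial ancilla), giving $\|\mathcal{A}(\rho)-\mathcal{B}(\rho)\|_1\le\|\mathcal{A}-\mathcal{B}\|_\diamond$, and halving yields the claim with a factor of two to spare. The paper itself does not prove this statement at all; it is recorded as a Fact with a citation to \cite{NC}, so there is nothing to compare beyond noting that your write-up supplies the elementary justification the paper omits.
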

\subsection{Haar measure and its concentration}
We use $\mathbb{U}(N)$ to denote the group of $N \times N$ unitary matrices, and $\mu_N$ to denote the Haar measure on $\mathbb{U}(N)$. 
Given a metric space $(\mathcal{M},d)$ where $d$ denotes the metric on the set $\mathcal{M}$, a function $f: \mathcal{M}\rightarrow \mathbb{R}$ is $\mathcal{L}$-lipschitz if for all $x,y \in \mathcal{M},\left|f(x)-f(y)\right|\leq \mathcal{L}\cdot d(x,y)$. The following inequality involving Lipschitz continuous functions captures the strong concentration of Haar measure.
\begin{theorem}[\cite{Mec19}]
\label{thm:conc}
 Given $N_1,N_2, \dots, N_k \in \mathbb{N}$, let $X = \mathbb{U}(N_1)\bigoplus \dots \bigoplus  \mathbb{U}(N_k) $ be the space of block diagonal unitary matrices with blocks of size $N_1,N_2,\dots ,N_k$. Let $\nu= \nu_1 \times \dots \times \nu_k$ be the product of Haar measures on $X$. Suppose that $f: X\rightarrow \mathbb{R}$ is $\mathcal{L}$-Lipshitz with respect to the Frobenius norm. Then for every $t>0$,
 \[
 \Pr_{U \leftarrow \nu}[f(U) \geq \mathbb{E}_{V \leftarrow \nu}[f(V)]+t]\leq \exp({-\frac{(N-2)t^2}{24L^2}}) \,,
 \]
 where $N=\min{N_1,\dots ,N_k}$.
 \end{theorem}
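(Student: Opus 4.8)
The plan is to derive this inequality from three ingredients: sub-Gaussian concentration on a \emph{single} unitary group, the tensorization property of the log-Sobolev inequality, and the Herbst argument. \textbf{Step 1 (single group).} First I would establish that for a single $M\times M$ unitary group $\mathbb{U}(M)$ with Haar measure $\mu_M$ and the Frobenius (Hilbert--Schmidt) metric, every $\mathcal{L}$-Lipschitz $g$ obeys a bound of the form $\Pr_{U\leftarrow\mu_M}[g(U)\geq \mathbb{E} g+t]\leq \exp(-(M-2)t^2/(24\mathcal{L}^2))$. The natural route is via curvature: the special unitary group $SU(M)$, equipped with the bi-invariant metric coming from the Hilbert--Schmidt inner product, has Ricci curvature bounded below by a positive quantity proportional to $M$, so by the Bakry--\'Emery criterion it satisfies a log-Sobolev inequality with constant $O(1/M)$ (equivalently, one can invoke the Gromov--L\'evy isoperimetric inequality for positively curved manifolds). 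Passing from $SU(M)$ to the full group $\mathbb{U}(M)$ is the one subtle point, discussed below, and is responsible for the shift $M\mapsto M-2$.

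\textbf{Step 2 (tensorization).} An element of $X$ is a tuple $(U_1,\dots,U_k)$ with $U_i\in\mathbb{U}(N_i)$, realized as the block-diagonal matrix $\mathrm{diag}(U_1,\dots,U_k)$; for two such tuples the squared Frobenius distance is $\sum_i \|U_i-V_i\|_F^2$, so the metric on $X$ is exactly the $\ell_2$-combination of the factor metrics. Consequently, if $f$ is $\mathcal{L}$-Lipschitz on $X$, then fixing all but one coordinate, each partial function $U_i\mapsto f(\dots,U_i,\dots)$ is $\mathcal{L}$-Lipschitz on $\mathbb{U}(N_i)$. The standard tensorization property of the log-Sobolev inequality — a product of spaces satisfying LSI with constants $c_i$ satisfies LSI (for the product/Frobenius gradient) with constant $\max_i c_i$ — then shows that $\nu=\nu_1\times\cdots\times\nu_k$ satisfies a log-Sobolev inequality whose constant is governed by the \emph{worst} factor, which by Step 1 corresponds to the \emph{smallest} dimension $N=\min_i N_i$. \textbf{Step 3 (Herbst).} Finally I would apply the Herbst argument: a measure satisfying an LSI with constant $c$ obeys $\Pr[f\geq \mathbb{E} f+t]\leq \exp(-t^2/(2c\mathcal{L}^2))$ for every $\mathcal{L}$-Lipschitz $f$; plugging in the constant from Steps 1--2 yields the claimed tail bound with denominator $24\mathcal{L}^2/(N-2)$ in the exponent.

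\textbf{Main obstacle.} The delicate step is the treatment of the flat $\mathbb{U}(1)$ phase inside $\mathbb{U}(M)$ in Step 1. Writing $\mathbb{U}(M)$ as (a quotient of) $SU(M)\times(\mathbb{R}/c\mathbb{Z})$, the circle factor has zero Ricci curvature, so a naive curvature lower bound for $\mathbb{U}(M)$ is $0$ and gives no concentration whatsoever. The workaround is to represent a Haar-random element of $\mathbb{U}(M)$ as a Lipschitz image of a Haar-random element of a genuinely positively curved group (for instance a suitable special unitary group of slightly larger size, or by embedding via an auxiliary coordinate), and transport the concentration inequality along that map; the dimension loss incurred in this embedding is precisely what degrades $M$ to $M-2$ and is the source of the $-2$ in the statement. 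Once this single-group bound with the correct dependence on dimension is in hand, Steps 2 and 3 are entirely routine.
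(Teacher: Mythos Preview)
The paper does not prove this theorem at all: it is quoted verbatim as a result from \cite{Mec19} and used as a black box, so there is no ``paper's own proof'' to compare against. Your sketch is in fact the standard proof strategy from that reference (log-Sobolev on $SU(M)$ via the Bakry--\'Emery curvature criterion, tensorization to the product, then Herbst), so the approach is correct and faithful to the cited source.

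One small remark on your ``main obstacle'': the way Meckes actually handles the flat phase is not by embedding $\mathbb{U}(M)$ into a larger special unitary group, but rather by a direct coupling. One writes a Haar element of $\mathbb{U}(M)$ as $\theta V$ where $V$ is Haar on $SU(M)$ and $\theta$ is an independent scalar phase, and then observes that any $\mathcal{L}$-Lipschitz $f$ on $\mathbb{U}(M)$ induces, for each fixed $\theta$, an $\mathcal{L}$-Lipschitz function on $SU(M)$; concentration on $SU(M)$ (with the genuine Ricci lower bound $\tfrac{M}{2}$ in the Hilbert--Schmidt metric) then controls $f$ around its \emph{conditional} mean $\mathbb{E}[f\mid\theta]$, and a short additional argument using the Lipschitz bound on the circle factor absorbs the remaining fluctuation in $\theta$ at the cost of weakening the constant from $M$ to $M-2$. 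Functionally this is the same idea you describe, just realized as a coupling rather than an embedding.
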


 \begin{lemma}[\cite{kretschmer2021quantum}]
 \label{lem:lipshitz}
     Let $A^{(\cdot)}$ be a quantum algorithm that makes $T$ queries to an oracle, and let $\ket{\psi}$ be any input to $A^{(\cdot)}$. Define $f(U)=\Pr[A^{U}(\ket{\psi})=1]$. Then, $f$ is $2T$-Lipshitz with respect to the Frobenius norm.
 \end{lemma}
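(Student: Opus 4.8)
The plan is a standard hybrid (telescoping) argument over the $T$ oracle calls, combined with the elementary comparison between the operator norm and the Frobenius norm. Fix two unitaries $U, U' \in \mathbb{U}(N)$ and the input $\ket{\psi}$. I would write the action of $A^{(\cdot)}$ as an alternating product of fixed unitaries $V_0, V_1, \dots, V_T$ (the algorithm's own gates, acting on the query register together with a workspace register) interleaved with oracle calls of the form $O \otimes I$. For $j \in \{0, 1, \dots, T\}$, let $\ket{\psi_j}$ be the state obtained by running $A$ with its first $j$ oracle calls answered by $U$ and its remaining $T-j$ oracle calls answered by $U'$; thus $\ket{\psi_0}$ is the honest run using $U'$ throughout and $\ket{\psi_T}$ the honest run using $U$ throughout.

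First I would bound $\|\ket{\psi_T} - \ket{\psi_0}\|$ by telescoping. Consecutive hybrids $\ket{\psi_j}$ and $\ket{\psi_{j-1}}$ differ only in whether the $j$-th oracle call applies $U$ or $U'$; since every map applied after that call ($V_j, \dots, V_T$ and the later oracle calls) is identical in the two runs and is unitary, hence norm-preserving, we get $\|\ket{\psi_j} - \ket{\psi_{j-1}}\| = \|\bigl((U - U') \otimes I\bigr)\ket{\phi_j}\|$ for the appropriate intermediate unit vector $\ket{\phi_j}$, which is at most $\|(U - U') \otimes I\|_{\mathrm{op}} = \|U - U'\|_{\mathrm{op}} \le \|U - U'\|_F$. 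Summing the $T$ such terms via the triangle inequality yields $\|\ket{\psi_T} - \ket{\psi_0}\| \le T\,\|U - U'\|_F$. (The same bound holds verbatim if some of the $T$ queries are to $U^{\dagger}$, or to a controlled version of $U$, since those operations are at the same operator-norm distance from their $U'$ counterparts.)

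Finally I would pass from states to acceptance probabilities. Writing $\Pi$ for the projector onto the ``output $1$'' outcome, we have $f(U) = \|\Pi\ket{\psi_T}\|^2$ and $f(U') = \|\Pi\ket{\psi_0}\|^2$. Using $|a^2 - b^2| = (a+b)|a-b| \le 2|a-b|$ for $a, b \in [0,1]$, together with the reverse triangle inequality and the fact that $\Pi$ is a contraction, I would conclude
\[
|f(U) - f(U')| \le 2\,\bigl|\,\|\Pi\ket{\psi_T}\| - \|\Pi\ket{\psi_0}\|\,\bigr| \le 2\,\|\Pi(\ket{\psi_T} - \ket{\psi_0})\| \le 2\,\|\ket{\psi_T} - \ket{\psi_0}\| \le 2T\,\|U - U'\|_F \,,
\]
which is exactly the claimed $2T$-Lipschitz bound.

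I do not expect a genuine obstacle here; the only points requiring a little care are (i) setting up the hybrids so that consecutive ones differ in exactly one oracle call with everything downstream identical and unitary, which is what keeps the per-query error from amplifying, and (ii) the inequality $\|M \otimes I\|_{\mathrm{op}} = \|M\|_{\mathrm{op}} \le \|M\|_F$, which is immediate but is precisely where the passage from operator-norm perturbations to the Frobenius norm (and hence the constant in front of $T$) enters.
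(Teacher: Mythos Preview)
The paper does not give its own proof of this lemma; it simply states it with a citation to \cite{kretschmer2021quantum}. Your hybrid-argument proof is correct and is essentially the standard derivation (and matches how Kretschmer proves it): telescope over the $T$ oracle calls, use $\|U-U'\|_{\mathrm{op}}\le\|U-U'\|_F$ at each step, and convert a state-vector distance of at most $T\|U-U'\|_F$ into an acceptance-probability gap of at most $2T\|U-U'\|_F$ via $|a^2-b^2|\le 2|a-b|$ for $a,b\in[0,1]$.
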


\subsection{Quantum pseudorandomness}

The notion of pseudorandom quantum states was introduced in \cite{ji2018pseudorandom}. The following is a formal definition. We use $\sigma_d$ to denote the Haar measure on $d$-dimensional pure quantum states.
\begin{definition}[Pseudorandom Quantum State (PRS)]\label{def: prs}
A Pseudorandom Quantum State (PRS) is a pair of $QPT$ algorithms $(\mathsf{GenKey},\mathsf{GenState})$ such that the following holds. There exists $n:\mathbb{N} \rightarrow \mathbb{N}$, and a family $\{\mathcal{K}_{\lambda}\}_{\lambda \in \mathbb{N}}$ of subsets of $\{0,1\}^*$ such that:
	   \begin{itemize}
			\item $\mathsf{GenKey}(1^{\lambda}) \rightarrow k$: Takes as input a security parameter $\lambda$, and outputs a key $k \in \mathcal{K}_{\lambda}$.
			\item $\mathsf{GenState}(k) \rightarrow \ket{\PRS(k)}$: Takes as input a key $k \in \mathcal{K}_{\lambda}$, for some $\lambda$, and outputs an $n(\lambda)$-qubit state. We additionally require that the state on input $k$ be unique, and we denote this as $\ket{\PRS(k)}$.
        \end{itemize}
Moreover, the following holds. For any (non-uniform) QPT quantum algorithm $A$, and any $m = poly$, there exists a negligible function $\negl$ such that, for all $\lambda \in \mathbb{N}$,
			$$
			\abs{\Pr_{k \gets \mathsf{GenKey}(1^{\lambda})}[A\big( \ket{\PRS(k)}^{\otimes m(\lambda)} \big) = 1] -
				\Pr_{\ket{\psi} \gets \sigma_{2^{n(\lambda)}}}[A\big(\ket{\psi}^{\otimes m(\lambda)} \big) = 1]} \leq \negl(n) \,.$$
  \end{definition}

\begin{definition}[Pseudorandom unitary transformations (PRU) (\cite{JLS2018})]
Let $n: \mathbb{N} \rightarrow \mathbb{N}$. Let $\{\mathcal{U}_{\lambda}\}_{\lambda \in \mathbb{N}}$ be a family of unitaries where $\mathcal{U}_{\lambda}$ is a family of $n(\lambda)$-qubit unitaries $\{U_k\}_{k \in \{0,1\}^{\lambda}}$. We say that $\{\mathcal{U}_{\lambda}\}_{\lambda \in \mathbb{N}}$ is pseudorandom if the following conditions hold:
\begin{enumerate}
    \item (Efficient computation) There is a QPT algorithm $G$ that implements
$U_k$ on input $k$, meaning that for any $n$-qubit input $\ket{\psi}$, $G(k,\ket{\psi})= U_k\ket{\psi}$.
    \item (Computationally indistinguishable) For any QPT algorithm $A^{(\cdot)}$, there exists a negligible function $\negl$ such that, for all $\lambda$,
    \[
    \left|\Pr_{k \gets \{0,1\}^{\lambda}}[A^{U_k}(1^{\lambda})=1]-\Pr_{U \leftarrow \mu_{2^n}}[A^{U}(1^{\lambda})=1]\right|\leq \negl(\lambda) \,.
    \]
\end{enumerate}
    
\end{definition}

\section{On quantum oracle separations and black-box constructions}
\label{sec:black-box-constructions}
In this section, we clarify what we mean by a ``black-box construction'' of primitive $\mathcal{Q}$ from primitive $\mathcal{P}$ when the primitives involve \emph{quantum} algorithms (and possibly quantum state outputs). We also clarify the relationship between a \emph{quantum} oracle separation of $\mathcal{P}$ and $\mathcal{Q}$ and the (im)possibility of a black-box construction of one from the other. To the best of our knowledge, while black-box separations in the quantum setting have been the topic of several recent works, a somewhat formal treatment of the terminology and basic framework is missing. This section also appears verbatim in the concurrent work \cite{chen2024power}.

In the quantum setting, it is not immediately obvious what the correct notion of ``black-box access'' is. There are a few reasonable notions of what it means for a construction to have ``black-box access'' to another primitive. We focus on three variants: \emph{unitary} access, \emph{isometry} access, and access to \emph{both the unitary and its inverse}.

The summary is that, similarly to the classical setting, a \emph{quantum} oracle separation of primitives $\mathcal{P}$ and $\mathcal{Q}$ (i.e.\ a quantum oracle relative to which $\mathcal{P}$ exists but $\mathcal{Q}$ does not) implies the impossibility of a black-box construction of $\mathcal{Q}$ from $\mathcal{P}$, but with one caveat: the type of oracle separation corresponds directly to the type of black-box construction that is being ruled out. For example,  the oracle separation needs to be ``closed under giving access to the inverse of the oracle'', i.e.\ the separation needs to hold relative to an oracle \emph{and} its inverse. We start by introducing some terminology.

\paragraph{Terminology.} 
A quantum channel is a CPTP (completely-positive-trace-preserving) map. The set of quantum channels captures all admissible ``physical'' processes in quantum information, and it can be thought of as the quantum analogue of the set of functions  $f: \{0,1\}^* \rightarrow \{0,1\}^*$. 

For the purpose of this section, a quantum channel is specified by a family of unitaries $\{U_n\}_{n \in \mathbb{N}}$ (where $U_n$ acts on an input register of size $n$, and a work register of some size $s(n)$). The quantum channel maps an input (mixed) state $\rho$ on $n$ qubits to the (mixed) state obtained as follows: apply $U_n (\cdot) U_n^{\dagger}$ to $\rho \otimes (\ket{0}\bra{0})^{\otimes s(n)}$; measure a subset of the qubits; output a subset of the qubits (measured or unmeasured). We say that the family $\{U_n\}_{n \in \mathbb{N}}$ is a \emph{unitary implementation} of the quantum channel. We say that the quantum channel is QPT if it possesses a unitary implementation $\{U_n\}_{n \in \mathbb{N}}$ that is additionally a uniform family of efficiently computable unitaries. In other words, the quantum channel is implemented by a QPT algorithm.

One can also consider the family of isometries $\{V_n\}_{n \in \mathbb{N}}$ where $V_n$ takes as input $n$ qubits, and acts like $U_n$, but with the work register fixed to $\ket{0}^{s(n)}$, i.e.\
$V_n: \ket{\psi} \mapsto U_n(\ket{\psi}\ket{0}^{\otimes s(n)})$. We refer to $\{V_n\}_{n \in \mathbb{N}}$ as the \emph{isometry implementation} of the quantum channel. 

We will also consider QPT algorithms with access to some oracle $\mathcal{O}$. In this case, the unitary (resp. isometry) implementation $\{U_n\}_{n \in \mathbb{N}}$ should be \emph{efficiently computable given access to $\mathcal{O}$}.

Before diving into formal definitions, a bit informally, a \emph{primitive} $\mathcal{P}$ can be thought of as a set of conditions on tuples of algorithms $(G_1, \ldots, G_k)$. 
For example, for a digital signature scheme, a valid tuple of algorithms is a tuple $(\textit{Gen}, \textit{Sign}, \textit{Verify})$ that satisfies ``correctness'' (honestly generated signatures are accepted by the verification procedure with overwhelming probability) and ``security'' (formalized via an unforgeability game). Equivalently, one can think of the tuple of algorithms $(G_1, \ldots, G_k)$ as a \emph{single} algorithm $G$ (with an additional control input). 

A thorough treatment of black-box constructions and reductions in the classical setting can be found in \cite{reingold2004notions}. Our definitions are a quantum analog of those in \cite{reingold2004notions}. They follow the latter style whenever possible, and they depart from it whenever necessary.

\begin{definition}
A \emph{primitive} $\mathcal{P}$ is a pair $\mathcal{P} = (\mathcal{F}_{\mathcal{P}}, \mathcal{R}_{\mathcal{P}})$\footnote{Here $\mathcal{F}_{\mathcal{P}}$ should be thought of as capturing the ``correctness'' property of the primitive, while $\mathcal{R}_{\mathcal{P}}$ captures ``security''.} where $\mathcal{F}_{\mathcal{P}}$ is a set of quantum channels, and $\mathcal{R}_{\mathcal{P}}$ is a relation over pairs $(G, A)$ of quantum channels, where $G \in \mathcal{F}_{\mathcal{P}}$.

A quantum channel $G$ is an \emph{implementation} of $\mathcal{P}$ if $G \in \mathcal{F}_{\mathcal{P}}$. If $G$ is additionally a QPT channel, then we say that $G$ is an \emph{efficient implementation} of $\mathcal{P}$ (in this case, we refer to $G$ interchangeably as a QPT channel or a QPT algorithm). 

A quantum channel $A$ (usually referred to as the ``adversary'') $\mathcal{P}$-breaks $G \in \mathcal{F}_{\mathcal{P}}$ if $(G, A) \in \mathcal{R}_{\mathcal{P}}$. We say that $G$ is a \emph{secure implementation} of $\mathcal{P}$ if $G$ is an implementation of $\mathcal{P}$ such that no QPT channel $\mathcal{P}$-breaks it. The primitive $\mathcal{P}$ \emph{exists} if there exists an efficient and secure implementation of $\mathcal{P}$.

Let $U$ be a unitary (resp.\ isometry) implementation of $G \in \mathcal{P}$. Then, we say that $U$ is a \emph{unitary (resp. isometry) implementation} of $\mathcal{P}$. For ease of exposition, we also say that quantum channel $A$ $\mathcal{P}$-breaks $U$ to mean that $A$ $\mathcal{P}$-breaks $G$.

\end{definition}
Since we will discuss oracle separations, we give corresponding definitions \emph{relative to an oracle}. Going forward, for ease of exposition, we often identify a quantum channel with the algorithm that implements it.
\begin{definition}[Implementations relative to an oracle]
\label{def:oracle-implementation}
Let $\mathcal{O}$ be a unitary (resp. isometry) oracle. An \emph{implementation} of primitive $\mathcal{P}$ relative to $\mathcal{O}$ is an oracle algorithm $G^{(\cdot)}$ such that $G^{\mathcal{O}} \in \mathcal{P}$\footnote{We clarify that here $G^{\mathcal{O}}$ is only allowed to query the unitary $\mathcal{O}$, not its inverse. However, as will be the case later in the section, $\mathcal{O}$ itself could be of the form $\mathcal{O} = (W, W^{-1})$ for some unitary $W$.}. We say the implementation is efficient if $G^{(\cdot)}$ is a QPT oracle algorithm.

Let $U$ be a unitary (resp.\ isometry) implementation of $G^{\mathcal{O}}$. Then, we say that $U$ is a \emph{unitary (resp.\ isometry) implementation} of $\mathcal{P}$ \emph{relative to $\mathcal{O}$}.
\end{definition}


\begin{definition}
\label{def:exist-relative-to-oracle}
    We say that a primitive $\mathcal{P}$ exists relative to an oracle $\mathcal{O}$ if: 
\begin{itemize}
	\item[(i)] There exists an efficient implementation $G^{(\cdot)}$ of $\mathcal{P}$ relative to $\mathcal{O}$, i.e.\ $G^{\mathcal{O}} \in \mathcal{P}$ (as in Definition~\ref{def:oracle-implementation}).
	\item[(ii)] The security of $G^{\mathcal{O}}$ holds against all QPT adversaries that have access to $\mathcal{O}$. More precisely, for all QPT $A^{(\cdot)}$, $(G^{\mathcal O},A^{\mathcal O})\notin \mathcal{R}_{\mathcal{P}}$.
\end{itemize} 
\end{definition}

\vspace{3mm}
There are various notions of black-box constructions and reductions (see, for example, \cite{reingold2004notions}). Here, we focus on (the quantum analog of) the notion of a \emph{fully black-box construction}. We identify and define three analogs based on the type of black-box access available to the construction and the security reduction.

\begin{definition}
\label{def:bb-unitary}
A QPT algorithm $G^{(\cdot)}$ is a \emph{fully black-box construction of $\mathcal{Q}$ from \textbf{unitary access} to $\mathcal{P}$} if the following two conditions hold:
\begin{enumerate}
	\item (\emph{black-box construction with unitary access}) For every unitary implementation $U$ of $\mathcal{P}$, $G^{U}$ is an implementation of $\mathcal{Q}$.
	\item (\emph{black-box security reduction with unitary access}) There is a QPT algorithm $S^{(\cdot)}$ such that, for every unitary implementation $U$ of $\mathcal{P}$, every adversary $A$ that $\mathcal{Q}$-breaks $G^U$, and every unitary implementation $\tilde{A}$ of $A$, it holds that $S^{\tilde{A}}$ $\mathcal P$-breaks $U$.
\end{enumerate}
\end{definition}

\begin{definition}
\label{def:bb-isometry}
	A QPT algorithm $G^{(\cdot)}$ is a \emph{fully black-box construction of $\mathcal{Q}$ from \textbf{isometry access} to $\mathcal{P}$} if the following two conditions hold:
	\begin{enumerate}
	\item (\emph{black-box construction with isometry access}) For every isometry implementation $V$ of $\mathcal{P}$, $G^{V}$ is an implementation of $\mathcal{Q}$.
	\item (\emph{black-box security reduction with isometry access}) There is a QPT algorithm $S^{(\cdot)}$ such that, for every isometry implementation $V$ of $\mathcal{P}$, every adversary $A$ that $\mathcal{Q}$-breaks $G^V$, and every isometry implementation $\tilde{A}$ of $A$, it holds that $S^{\tilde{A}}$ $\mathcal P$-breaks $V$.
\end{enumerate}
\end{definition}

\begin{definition}
\label{def:bb-unitary-and-inverse}
	A QPT algorithm $G^{(\cdot)}$ is a \emph{fully black-box construction of $\mathcal{Q}$ from $\mathcal{P}$ \textbf{with access to the inverse}} if the following two conditions hold:
	\begin{enumerate}
		\item (\emph{black-box construction with access to the inverse}) For every unitary implementation $U$ of $\mathcal{P}$, $G^{U, U^{-1}}$ is an implementation of $\mathcal{Q}$.
		\item (\emph{black-box security reduction with access to the inverse}) There is a QPT algorithm $S^{(\cdot)}$ such that, for every unitary implementation $U$ of $\mathcal{P}$, every adversary $A$ that $\mathcal{Q}$-breaks $G^{U, U^{-1}}$, and every unitary implementation $\tilde{A}$ of $A$, it holds that $S^{\tilde{A}, \tilde{A}^{-1}}$ $\mathcal P$-breaks $U$\footnote{One could define even more variants of "fully black-box constructions" by separating the type of access that $G$ has to the implementation of $\mathcal{P}$ from the type of access that $S$ has to $A$ (currently they are consistent in each of Definitions \ref{def:bb-unitary}, \ref{def:bb-isometry}, and \ref{def:bb-unitary-and-inverse}). Here, we choose to limit ourselves to the these three definitions.}.
	\end{enumerate}
\end{definition}

We now clarify the relationship between a \emph{quantum} oracle separation of primitives $\mathcal{P}$ and $\mathcal{Q}$ and the (im)possibility of a black-box construction of one from the other. 

The following is a quantum analog of a result by Impagliazzo and Rudich~\cite{impagliazzo1989limits} (formalized in \cite{reingold2004notions} using the above terminology).
\begin{theorem}
	\label{thm:quantumIR}
	 Suppose there exists a fully black-box construction of primitive $\mathcal{Q}$ from unitary (resp.\ isometry) access to primitive $\mathcal{P}$. Then, for every unitary (resp.\ isometry) $\mathcal{O}$, if $\mathcal{P}$ exists relative to $\mathcal{O}$, then $\mathcal{Q}$ also exists relative to $\mathcal{O}$.
\end{theorem}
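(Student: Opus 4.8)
The plan is to prove Theorem~\ref{thm:quantumIR} by contraposition, following the classical Impagliazzo--Rudich template but being careful about what ``black-box access'' means in the quantum setting (as set up in Definitions~\ref{def:bb-unitary} and~\ref{def:bb-isometry}). Fix a fully black-box construction $G^{(\cdot)}$ of $\mathcal{Q}$ from unitary access to $\mathcal{P}$, with associated security reduction $S^{(\cdot)}$. Let $\mathcal{O}$ be any unitary oracle relative to which $\mathcal{P}$ exists, and let $H^{(\cdot)}$ be an efficient implementation of $\mathcal{P}$ relative to $\mathcal{O}$, so that $H^{\mathcal{O}} \in \mathcal{F}_{\mathcal{P}}$ and no QPT $A^{(\cdot)}$ has $(H^{\mathcal O}, A^{\mathcal O}) \in \mathcal{R}_{\mathcal{P}}$. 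The candidate implementation of $\mathcal{Q}$ relative to $\mathcal{O}$ is the composed oracle algorithm $G^{H^{(\cdot)}}$, i.e.\ the QPT algorithm obtained by having $G$ invoke $H$ (which in turn queries $\mathcal{O}$) in place of its oracle. I would first check the two requirements of Definition~\ref{def:exist-relative-to-oracle}.

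For requirement (i), I note that a unitary implementation $U$ of the channel $H^{\mathcal{O}}$ exists and is efficiently computable given access to $\mathcal{O}$ (this is exactly the content of $H$ being an efficient implementation relative to $\mathcal{O}$, per Definition~\ref{def:oracle-implementation}). Since $U$ is a unitary implementation of $\mathcal{P}$, the black-box construction property gives that $G^{U}$ is an implementation of $\mathcal{Q}$, i.e.\ $G^{U} \in \mathcal{F}_{\mathcal{Q}}$; and because $U$ is efficiently computable given $\mathcal{O}$, the composed algorithm $G^{U}$ is a QPT oracle algorithm relative to $\mathcal{O}$. Hence $G^{U}$ is an efficient implementation of $\mathcal{Q}$ relative to $\mathcal{O}$. (The isometry case is identical, replacing ``unitary implementation'' with ``isometry implementation'' throughout and invoking Definition~\ref{def:bb-isometry} instead.)

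For requirement (ii), I argue by contradiction: suppose there were a QPT $A^{(\cdot)}$ with $(G^{U}\!\!, A)\in\mathcal{R}_{\mathcal{Q}}$ relative to $\mathcal{O}$, i.e.\ $A^{\mathcal{O}}$ $\mathcal{Q}$-breaks $G^{U}$. Fix a unitary implementation $\tilde{A}$ of the channel $A^{\mathcal{O}}$; since $A$ is QPT and $\mathcal{O}$ is a fixed oracle, $\tilde{A}$ is efficiently computable given $\mathcal{O}$. The black-box security reduction then guarantees that $S^{\tilde{A}}$ $\mathcal{P}$-breaks $U$. But $S$ is QPT and $\tilde{A}$ is efficiently computable given $\mathcal{O}$, so $S^{\tilde{A}}$ is a QPT oracle algorithm relative to $\mathcal{O}$, contradicting the fact that $H^{\mathcal{O}}$ is a secure implementation of $\mathcal{P}$ relative to $\mathcal{O}$. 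Therefore no such $A$ exists, establishing (ii) and hence that $\mathcal{Q}$ exists relative to $\mathcal{O}$.

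The main obstacle, and the place where care is genuinely needed (as opposed to routine unwinding of definitions), is the bookkeeping around \emph{unitary implementations of channels}: the black-box definitions quantify over unitary implementations $U$ of $\mathcal{P}$ and over unitary implementations $\tilde{A}$ of the adversary, whereas ``$\mathcal{P}$ exists relative to $\mathcal{O}$'' is phrased in terms of channels $H^{\mathcal{O}}$ and $A^{\mathcal{O}}$. The key points to get right are that (a) an efficient oracle implementation relative to $\mathcal{O}$ does come with a unitary implementation that is efficiently computable \emph{given $\mathcal{O}$}, and (b) composing a QPT oracle algorithm ($G$ or $S$) with such an $\mathcal{O}$-efficient unitary ($U$ or $\tilde{A}$) yields again a QPT algorithm relative to $\mathcal{O}$, so that the security contradiction lands in the right complexity class. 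One should also remark that this argument is exactly why the type of access matters: it goes through verbatim for unitary access and for isometry access, but if the construction instead required access to the inverse of $\mathcal{P}$'s implementation, the same proof would only yield a separation relative to $(\mathcal{O}, \mathcal{O}^{-1})$, since reconstructing and inverting $U$ and $\tilde A$ would require querying $\mathcal{O}^{-1}$ as well.
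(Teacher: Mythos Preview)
Your proposal is correct and follows essentially the same approach as the paper: take a unitary implementation $U$ of the efficient $\mathcal{P}$-implementation relative to $\mathcal{O}$, show $G^{U}$ is an efficient implementation of $\mathcal{Q}$ relative to $\mathcal{O}$, and derive security by feeding a unitary implementation of any hypothetical QPT adversary $A^{\mathcal{O}}$ into the reduction $S$. The only quibble is cosmetic: you announce a proof ``by contraposition'' but then (correctly) give a direct argument with a contradiction step for security, exactly as the paper does.
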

This implies that a unitary (resp.\ isometry) oracle separation (i.e.\ the existence of an oracle relative to which $\mathcal{P}$ exists but $\mathcal{Q}$ does not) suffices to rule out a fully black-box construction of $\mathcal{Q}$ from unitary (resp.\ isometry) access to $\mathcal{P}$.

\begin{proof}[Proof of Theorem \ref{thm:quantumIR}]
We write the proof for the case of unitary access to $\mathcal{P}$. The proof for the case of isometry access is analogous (replacing unitaries with isometries).
	Suppose there exists a fully black-box construction of $\mathcal{Q}$ from $\mathcal{P}$. Then, by definition, there exist QPT $G^{(\cdot)}$ and $S^{(\cdot)}$ such that:
	\begin{enumerate}
		\item (\emph{black-box construction}) For every unitary implementation $U$ of $\mathcal{P}$, $G^{U}$ is an implementation of $\mathcal{Q}$.
		\item (\emph{black-box security reduction}) For every implementation $U$ of $\mathcal{P}$, every adversary $A$ that $\mathcal{Q}$-breaks $G^U$, and every unitary implementation $\tilde{A}$ of $A$, it holds that $S^{\tilde{A}}$ $\mathcal P$-breaks $U$.
	\end{enumerate}
Let $\mathcal O$ be a quantum oracle relative to which $\mathcal{P}$ exists. Since, by Definition~\ref{def:exist-relative-to-oracle}, $\mathcal{P}$ has an \emph{efficient} implementation relative to $\mathcal{O}$, there exists a uniform family of unitaries $U$ that is \emph{efficiently computable} with access to $\mathcal{O}$, such that $U$ is a unitary implementation of $\mathcal{P}$. Moreover, $U$ (or rather the quantum channel that $U$ implements) is a secure implementation of $\mathcal{P}$ relative to $\mathcal{O}$.

We show that the following QPT oracle algorithm $\tilde{G}^{(\cdot)}$ is an efficient implementation of $\mathcal{Q}$ relative to $\mathcal{O}$, i.e.\   $\tilde{G}^{\mathcal O} \in \mathcal{Q}$. $\tilde{G}^{\mathcal O}$ runs as follows: implement $G^{U}$ by running $G$, and simulate each call to $U$ by making queries to $\mathcal O$. Note that $\tilde{G}^{(\cdot)}$ is QPT because $U$ is a uniform family of efficiently computable unitaries given access to $\mathcal{O}$. Since $\tilde{G}^{\mathcal O}$ is equivalent to $G^{U}$, and $G^U \in \mathcal{Q}$ (by property 1 above), then $\tilde{G}^{\mathcal O} \in \mathcal{Q}$.

We are left with showing that $\tilde{G}^{\mathcal O}$ is a secure implementation relative to $\mathcal O$, i.e.\ that there is no QPT adversary $A^{(\cdot)}$ such that  $A^{\mathcal O}$ $\mathcal{Q}$-breaks $\tilde{G}^{\mathcal O}$. Suppose for a contradiction that there was a QPT adversary $A^{(\cdot)}$ such that $\mathcal{A}^{\mathcal{O}}$ $\mathcal{Q}$-breaks $\tilde{G}^{\mathcal O}$ (which is equivalent to $G^{U}$). Then, by property 2, $S^{A^\mathcal O}$ $\mathcal{P}$-breaks $U$. Note that adversary $S^{A^{\mathcal{O}}}$ can be implemented efficiently with oracle access to $\mathcal O$, because both $S^{(\cdot)}$ and $A^{(\cdot)}$ are QPT. Thus, this contradicts the security of $U$ relative to $\mathcal{O}$ (formally, of the quantum channel that $U$ implements).
\end{proof}

Similarly, we state a version of Theorem \ref{thm:quantumIR} for fully black-box constructions with access to the inverse.
\begin{theorem}
	\label{thm:quantumIRinverse}
	Suppose there exists a fully black-box construction of primitive $\mathcal{Q}$ from primitive $\mathcal{P}$ with access to the inverse.  Then, for every unitary $\mathcal O$, if $\mathcal{P}$ exists relative to $(\mathcal O, \mathcal O^{-1})$, then $\mathcal{Q}$ also exists relative to the oracle $(\mathcal O, \mathcal O^{-1})$. 
\end{theorem}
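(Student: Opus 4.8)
The statement to prove is Theorem~\ref{thm:quantumIRinverse}, which is the ``inverse-access'' analog of Theorem~\ref{thm:quantumIR}. So the plan is to mimic the proof of Theorem~\ref{thm:quantumIR} almost verbatim, paying attention to the one place where inverse access enters.

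\textbf{The plan.} Suppose there exists a fully black-box construction of $\mathcal{Q}$ from $\mathcal{P}$ with access to the inverse, so by Definition~\ref{def:bb-unitary-and-inverse} there exist QPT $G^{(\cdot)}$ and $S^{(\cdot)}$ with the two stated properties. Fix a unitary $\mathcal{O}$ such that $\mathcal{P}$ exists relative to $(\mathcal{O}, \mathcal{O}^{-1})$. By Definition~\ref{def:exist-relative-to-oracle}, there is a uniform family of unitaries $U$ that is efficiently computable given access to $(\mathcal{O}, \mathcal{O}^{-1})$, and $U$ is a secure implementation of $\mathcal{P}$ relative to $(\mathcal{O}, \mathcal{O}^{-1})$. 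First I would define $\tilde{G}^{(\cdot)}$ to be the oracle algorithm that runs $G^{U, U^{-1}}$, simulating each call to $U$ and to $U^{-1}$ by queries to $\mathcal{O}$ and $\mathcal{O}^{-1}$ respectively; this is QPT because $U$ (and hence $U^{-1}$) is efficiently computable given $(\mathcal{O}, \mathcal{O}^{-1})$, and it is an implementation of $\mathcal{Q}$ relative to $(\mathcal{O}, \mathcal{O}^{-1})$ by property~1 of the construction. For security, suppose toward a contradiction that some QPT $A^{(\cdot)}$ is such that $A^{\mathcal{O}, \mathcal{O}^{-1}}$ $\mathcal{Q}$-breaks $\tilde{G}^{\mathcal{O}, \mathcal{O}^{-1}} \equiv G^{U, U^{-1}}$. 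Then by property~2 of the reduction, $S^{\tilde{A}, \tilde{A}^{-1}}$ $\mathcal{P}$-breaks $U$, where $\tilde{A}$ is a unitary implementation of $A^{\mathcal{O}, \mathcal{O}^{-1}}$. The final step is to observe that $S^{\tilde{A}, \tilde{A}^{-1}}$ can be implemented efficiently with access to $(\mathcal{O}, \mathcal{O}^{-1})$, contradicting the security of $U$ relative to $(\mathcal{O}, \mathcal{O}^{-1})$.

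\textbf{The main obstacle.} Everything is routine except the last step, which is where the hypothesis ``relative to $(\mathcal{O}, \mathcal{O}^{-1})$'' (rather than just ``relative to $\mathcal{O}$'') is essential, and which I expect to require the most care to state cleanly. The reduction $S$ is given oracle access not to $A$ as an abstract channel but to a \emph{unitary implementation} $\tilde{A}$ of it \emph{and its inverse} $\tilde{A}^{-1}$. Since $A^{(\cdot)}$ is QPT and $A$ is being run with the oracle $(\mathcal{O}, \mathcal{O}^{-1})$, a unitary implementation $\tilde{A}$ of the channel $A^{\mathcal{O},\mathcal{O}^{-1}}$ is a unitary that is efficiently computable given $(\mathcal{O}, \mathcal{O}^{-1})$ (it is $A$'s circuit with each oracle gate replaced by a query to $\mathcal{O}$ or $\mathcal{O}^{-1}$, and with measurements deferred / purified). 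Crucially, its inverse $\tilde{A}^{-1}$ is then \emph{also} efficiently computable given $(\mathcal{O}, \mathcal{O}^{-1})$: one reverses the circuit, which turns each $\mathcal{O}$-gate into an $\mathcal{O}^{-1}$-gate and vice versa — and both are available precisely because we assumed $\mathcal{P}$ exists relative to the \emph{pair} $(\mathcal{O}, \mathcal{O}^{-1})$. Hence $S^{\tilde{A}, \tilde{A}^{-1}}$, being a QPT algorithm with oracle access to two unitaries each efficiently computable given $(\mathcal{O}, \mathcal{O}^{-1})$, is itself a QPT algorithm with access to $(\mathcal{O}, \mathcal{O}^{-1})$, and it $\mathcal{P}$-breaks $U$ — contradicting clause~(ii) of Definition~\ref{def:exist-relative-to-oracle} for the oracle $(\mathcal{O}, \mathcal{O}^{-1})$. (This also explains why the statement cannot be strengthened to ``relative to $\mathcal{O}$'' alone: without $\mathcal{O}^{-1}$ in hand, we could not implement $\tilde{A}^{-1}$, and the reduction would not go through — mirroring the remark earlier in the paper that a separation meant to rule out inverse-access constructions must itself hold relative to an oracle and its inverse.)
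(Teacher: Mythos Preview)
Your proposal is correct and follows essentially the same route as the paper's own proof, which simply says the argument is analogous to that of Theorem~\ref{thm:quantumIR} with the extra observation that $U^{-1}$ is efficiently implementable given access to $(\mathcal{O},\mathcal{O}^{-1})$. Your write-up is in fact more careful than the paper's in spelling out why $\tilde{A}^{-1}$ is also efficiently implementable with access to $(\mathcal{O},\mathcal{O}^{-1})$, a point the paper absorbs into the phrase ``proceeds analogously.''
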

\begin{proof}
The proof is analogous to the proof of Theorem \ref{thm:quantumIR}. The only difference is that now $G^{(\cdot)}$ additionally makes queries to the inverse of the unitary implementation $U$ of $\mathcal{P}$. Since $U^{-1}$ can be implemented efficiently given access to $(\mathcal O, \mathcal O^{-1})$, we can now define an efficient implementation $\tilde{G}^{(\cdot)}$ of $\mathcal{P}$ relative to $(\mathcal O, \mathcal O^{-1})$. Proving that $\tilde{G}^{\mathcal{O}, \mathcal{O}^{-1}}$ is a secure implementation of $\mathcal{P}$ relative to $(\mathcal O, \mathcal O^{-1})$ also proceeds analogously.
\end{proof}

\section{Quantum Public Key Digital Signatures}
In this section, we define QDS schemes with \emph{quantum} public keys (and classical secret key and signatures). Going forward, unless we specify otherwise, we use the term QDS to refer to this type of signature scheme.
\begin{definition}
\label{def:qds}
A Quantum Digital Signature scheme (QDS) is a tuple of algorithms $(\skgen, \pkgen, \sign, \verify)$ satisfying the following:
\begin{itemize}
    \item $\skgen(1^{\lambda}) \rightarrow sk$: is a QPT algorithm that takes as input  $1^{\lambda}$, and outputs a classical secret key $sk$. We assume that $sk$ has length $\lambda$. 
    \item $\pkgen(\textit{sk}) \rightarrow \ket{pk}$: is a deterministic algorithm that takes as input a secret key $sk$, and outputs the quantum state $\ket{pk}$.\footnote{To clarify, $\ket{pk}$ is allowed to be an arbitrary pure state (not necessarily a standard basis state).} We additionally require $\ket{pk}$ to be fixed given $sk$, i.e.\ the algorithm $\skgen$ is deterministic (it consists of a fixed unitary quantum circuit acting on the input $sk$, and some auxiliary registers).\footnote{We include this requirement so that the notion of ``quantum'' public key is a little more faithful to the spirit of a classical public key. This requirement ensures that the party in possession of the secret key can generate multiple copies of the corresponding public key. Note that for a completely classical digital signature scheme this requirement is without loss of generality, since any randomness used in the generation procedure can be included in the secret key.}
      
    \item $\sign(sk,m) \rightarrow \sigma$: is a QPT algorithm that takes as input a secret key $sk$ and a classical message $m$ from some message space (that may depend on the security parameter), and outputs a classical signature $\sigma$. For a security parameter $\lambda$, we denote by $\mathcal{M}_{\lambda}$ the corresponding message space.

    \item $\verify(\ket{pk},m,\sigma) \rightarrow \mathsf{accept}/\mathsf{reject}$: is a QPT algorithm that takes as input a public key $\ket{pk}$, a message $m$, and a candidate signature $\sigma$, and outputs $\mathsf{accept}$ or $\mathsf{reject}$.
\end{itemize}
We require the following ``correctness'' property. There exists a negligible function $\negl$ such that, for all $\lambda \in \mathbb{N}$, the following holds except with probability $\negl(\lambda)$ over sampling $sk \gets \textit{SKGen}(1^{\lambda})$. For all $m \in \mathcal{M}_{\lambda}$, 
$$ \Pr[Ver(\textit{PKGen}(sk), Sign(sk, m)) = 1] \geq 1 - \negl(\lambda) \,. $$

If there is a function $\ell: \mathbb{N}\rightarrow \mathbb{N}$, such that, for all $\lambda$, $\mathcal{M}_\lambda$ is the set of strings of length $\ell(\lambda)$, then we say that the scheme is a QDS for messages of length $\ell(\lambda)$.
\end{definition}

For simplicity, we will consider QDS where each $\mathcal{M}_{\lambda}$ is the set of strings of a certain length. 

\paragraph{Multi-time security}
The notion of security that we focus on in this work is the standard ``multi-time'' security. The latter allows the adversary to make an arbitrary polynomial number of queries to a signing oracle, before having to produce a valid signature of an ``unqueried'' message. Formally, multi-time security is defined in terms of the following security game between a challenger $\mathcal{C}$ and an adversary $\mathcal{A}$.

\begin{enumerate}
\item $\mathcal{C}$ runs $sk \leftarrow \skgen(1^{\lambda})$.
\item $\mathcal{A}$ receives $\ket{pk}^{\otimes t(\lambda)}$ (for some polynomially-bounded function $t$ that depends on $\mathcal{A}$).
    \item For each $i \in \{1, \dots t\}$, $\mathcal{A}$ sends a message $m_i$ to $\mathcal{C}$;  $\mathcal{C}$ runs $\sigma_i \leftarrow \sign(m_i,sk)$, and sends $\sigma_i$ to $\mathcal{A}$.
    \item $\mathcal{A}$ sends $(m,\sigma)$ to $\mathcal{C}$ such that $m \notin \{m_1, \dots m_t\}$.
    \item $\mathcal{C}$ outputs 1 iff $\verify(\ket{pk},m,\sigma)$ accepts.
\end{enumerate}
Let $\textsf{multi-time}(\lambda,\mathcal{A})$ be a random variable denoting the output of the game above. 

\begin{definition}[multi-time security]
\label{def:polytimesec}
A QDS scheme satisfies multi-time security if, for all QPT adversaries $\mathcal{A}$, there exists a negligible function $\negl$ such that
\begin{equation}
\Pr[\textsf{multi-time}(\lambda, \mathcal{A}) = 1] = \negl(\lambda) \,.
\end{equation}
\end{definition}

In this work, we consider QDS schemes relative to some oracle $O$. In this setting, the syntax and security definitions are identical to the ones given in this section, except that all of the algorithms, including the adversary, have access to $O$.



\section{Oracle separation of Quantum Digital Signatures and PRS}
\label{sec:sim-haar}
In this section, we describe an oracle relative to which PRS exist, and a QDS scheme that satisfies multi-time security (Definitions~\ref{def:qds} and \ref{def:polytimesec}) does not. Most of this section is dedicated to proving the latter. For the rest of the section, unless we mention otherwise, any QDS scheme that we consider has a quantum public key and classical signatures. We also restrict our attention to multi-time security.

\paragraph{The separating oracle.}
The oracle is similar to the one used by Kretschmer in \cite{kretschmer2021quantum}. The oracle $O$ consists of a pair of oracles $(\mathcal{U}, \mathcal{Q})$, where $\mathcal{Q}$ is a classical oracle solving a fixed $\mathsf{EXP}$-complete problem, and $\mathcal{U}$ is a collection of Haar-random unitaries $\{\mathcal{U}_{\ell}\}_{\ell \in \mathbb{N}}$, where each $\mathcal{U_{\ell}}$ is an indexed list of $2^{\ell}$ Haar-random unitaries acting on $\ell$ qubits.

\vspace{2mm} 
In this section, we show the following two theorems.

\begin{theorem}
\label{thm:prs-existence}
    With probability 1 over $\mathcal{U}$, there exists a family of PRUs relative to $\mathcal{(U,Q)}$.
\end{theorem}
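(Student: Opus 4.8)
The plan is to show that the trivial construction—on input a key $k \in \{0,1\}^\lambda$, apply the unitary indexed by $k$ from the list $\mathcal{U}_\lambda$—is a PRU relative to $(\mathcal{U}, \mathcal{Q})$, with probability $1$ over the sampling of $\mathcal{U}$. Efficient computation is immediate: querying $\mathcal{U}_\lambda$ at index $k$ on a state $\ket{\psi}$ returns $U_k \ket{\psi}$ by definition of the oracle. So the content is computational indistinguishability: for every QPT oracle algorithm $A^{(\cdot)}$ making queries to $(\mathcal{U},\mathcal{Q})$ and to a challenge unitary, we must show that $A$ cannot distinguish the case where the challenge is $U_k$ for uniform $k$ from the case where it is a fresh Haar-random unitary, except with negligible advantage.

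The key step is to quote the result of Kretschmer (Lemma 31 of \cite{kretschmer2021quantum}), which is exactly of this flavor: it establishes that relative to a collection of Haar-random unitaries $\mathcal{U}$ together with an $\mathsf{EXP}$-complete oracle $\mathcal{Q}$ that is fixed \emph{before} $\mathcal{U}$ is sampled, the ``canonical'' PRS construction (apply the indexed Haar unitary to $\ket{0}^{\otimes \lambda}$) is secure. The adaptation needed here is from PRS security to PRU security—i.e.\ the distinguisher gets oracle access to the challenge unitary rather than merely polynomially many copies of a fixed output state—but the underlying argument is unchanged, since it only relies on (i) Haar concentration for unitaries on sufficiently many qubits (Theorem~\ref{thm:conc} and Lemma~\ref{lem:lipshitz}), which lets one argue that a $\mathrm{poly}$-query distinguisher behaves almost identically on $U_k$ versus an independent Haar unitary $U'$, and (ii) the fact that $\mathcal{Q}$, being fixed independently of $\mathcal{U}$, cannot encode any information about the specific unitaries in $\mathcal{U}$, so it does not help the distinguisher locate the planted index $k$. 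The small-dimension caveat (concentration fails below $O(\log\lambda)$ qubits, but those unitaries are learnable by tomography in $\mathrm{poly}(\lambda)$ queries) is handled exactly as in Kretschmer's proof, but this is a non-issue for the PRU claim since PRU security is only required for each fixed security parameter $\lambda$, and we may take $\lambda$ large.

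Concretely, I would: first fix an arbitrary QPT distinguisher $A^{(\cdot)}$ and a polynomial bound $T = T(\lambda)$ on its number of queries to the challenge oracle; second, invoke Haar concentration (via Lemma~\ref{lem:lipshitz}, giving $2T$-Lipschitzness of the acceptance probability as a function of the challenge unitary, and Theorem~\ref{thm:conc}) to conclude that, for each fixed choice of the other unitaries in $\mathcal{U}$ and of $\mathcal{Q}$, with overwhelming probability over two independent Haar unitaries $U, U'$ on $\lambda$ qubits the quantities $\Pr[A^{U,\mathcal{U},\mathcal{Q}}=1]$ and $\Pr[A^{U',\mathcal{U},\mathcal{Q}}=1]$ differ by at most $\mathrm{negl}(\lambda)$—the concentration is strong enough to union-bound over the $2^\lambda$ indices; third, observe that in the ``planted'' experiment the challenge $U_k$ is, conditioned on the rest of $\mathcal{U}$, itself a fresh Haar unitary independent of everything $A$ can otherwise access (in particular independent of $\mathcal{Q}$, which was fixed first), so the planted experiment is statistically close to the ``ideal'' Haar experiment; fourth, conclude that the distinguishing advantage is negligible for each $\lambda$, and then use Borel--Cantelli (summing the failure probabilities over $\lambda$, which are exponentially small) to upgrade ``with overwhelming probability for each $\lambda$'' to ``with probability $1$ over $\mathcal{U}$, security holds for all QPT $A$,'' after also union-bounding over a countable enumeration of QPT algorithms in the standard way. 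The main obstacle is the second step: carefully setting up the Lipschitz function on the block-diagonal unitary group so that Theorem~\ref{thm:conc} applies with the right dimension parameter (namely that the relevant block acts on $\lambda$ qubits, so $N = 2^\lambda$), and ensuring the concentration bound beats the $2^\lambda$-size union bound over planted indices—this is where one must be slightly careful, but it goes through because the exponent in Theorem~\ref{thm:conc} scales like $(N-2)t^2/(24 L^2) = \Omega(2^\lambda t^2 / T^2)$, which dominates $\lambda \ln 2$ for any inverse-polynomial $t$.
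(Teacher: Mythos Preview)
Your proposal is correct and matches the paper's approach: the paper's entire proof of this theorem is a one-line deferral to Theorem~32 of \cite{kretschmer2021quantum} (with a footnote noting that Kretschmer's $\mathcal{Q}$ is a $\mathsf{PSPACE}$ oracle rather than $\mathsf{EXP}$, but that the query lower bound in his Lemma~31 holds against unbounded adversaries, so the argument carries over unchanged). Your sketch simply unpacks that citation into its constituent steps---the canonical construction, Lipschitz concentration via Theorem~\ref{thm:conc} and Lemma~\ref{lem:lipshitz}, a union bound over the $2^\lambda$ indices, and Borel--Cantelli to pass to a probability-$1$ statement---which is exactly Kretschmer's argument.

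One small imprecision worth tightening when you write it out: in your step~3 you say the planted challenge $U_k$ is ``independent of everything $A$ can otherwise access,'' but of course $A$ can also access $U_k$ through its queries to $\mathcal{U}$ itself. The actual point (which your union bound over indices already anticipates) is that the planted and ideal experiments differ only in whether the challenge unitary is duplicated at a single random slot $k$ of $\mathcal{U}_\lambda$, and a $T$-query algorithm cannot detect a change at one random slot out of $2^\lambda$; this is where the concentration bound beating the $2^\lambda$ union bound does the real work.
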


\begin{theorem}
\label{thm:no-qds}
    Let $\ell: \mathbb{N} \rightarrow \mathbb{N}$ be such that $\ell(\lambda) \geq 2 \cdot \log(\lambda)$ for all sufficiently large $\lambda$. Then, relative to $\mathcal{(U,Q)}$, there does not exist a QDS scheme for messages of length $\ell(\lambda)$ (where $\lambda$ is the security parameter).
\end{theorem}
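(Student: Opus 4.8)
The plan is to prove Theorem~\ref{thm:no-qds} by exhibiting, for every candidate QDS scheme relative to $(\mathcal{U},\mathcal{Q})$, a QPT adversary $\mathcal{A}^{\mathcal{U},\mathcal{Q}}$ that wins the multi-time unforgeability game with inverse-polynomial probability, contradicting Definition~\ref{def:polytimesec}. Following the technical overview, the first step is to set up the \textsf{Sim-Haar} machinery: because $\mathcal{Q}$ is a fixed $\mathsf{EXP}$-complete oracle chosen \emph{before} $\mathcal{U}$ is sampled, $\mathcal{Q}$ cannot directly evaluate any circuit that queries $\mathcal{U}$. I would first state and prove the \textsf{Sim-Haar} procedure as a standalone lemma, combining the Haar concentration bound (Theorem~\ref{thm:conc}) with the Lipschitz estimate (Lemma~\ref{lem:lipshitz}): a circuit making $T$ queries to $\mathcal{U}$ can be simulated, up to TV distance $\delta$ on \emph{all} classical inputs simultaneously (by a union bound over the at most $2^{\mathrm{poly}(\lambda)}$ inputs, which the exponential tail in Theorem~\ref{thm:conc} absorbs once the dimension $N \geq \omega(\mathrm{poly})$), by replacing each oracle call on $\geq c\log\lambda$ qubits with an exact unitary $T$-design and each call on $< c\log\lambda$ qubits with the result of process tomography. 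This yields, with probability $1-e^{-\eta}$ over $\mathcal{U}$, a $\mathcal{U}$-free circuit whose behaviour is within $\delta$ of the original on every classical input.

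The second step is the core combinatorial argument. Fix the true secret key $sk^*$. The adversary queries the signing oracle on $T' = \mathrm{poly}(\lambda)$ uniformly random messages, obtaining pairs $(m_i,\sigma_i)$, then hands $\mathcal{Q}$ the \textsf{Sim-Haar}-simulated circuit for $\verify(\pkgen(\cdot),\cdot,\cdot)$ and asks it to compute $Consistent$ (all $sk$ that accept every $(m_i,\sigma_i)$ with probability $\geq 9/10$) and then the nested sequence $S_0 = Consistent \supseteq S_1 \supseteq \cdots \supseteq S_T$, where $S_j = GoodSigner_{S_{j-1}} \cap Stingy_{S_{j-1}}$, outputting one uniformly random representative from each $S_j$. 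I would prove the three facts asserted in the overview: (a) a concentration bound showing that, with overwhelming probability over the $T'$ queried messages, \emph{every} $sk \in Consistent$ lies in $accept_{sk^*}$ (this is where polynomially many signing queries are essential — it is precisely what fails for a one-query adversary, so no contradiction with \cite{morimae2022quantum}); hence $sk^* \in GoodSigner_{S_j}$ for all $j$, so $sk^* \in S_j \iff sk^* \in Stingy_{S_{j-1}}$; (b) the shrinking lemma $|S_j| \leq \tfrac{9}{10}|S_{j-1}|$ or $|S_j|=1$, proved by a counting argument: the $GoodSigner$ condition forces each surviving $sk$ to be in $accept_{sk'}$ for a $\geq 9/10$ fraction of $sk' \in S_{j-1}$, while summing the $Stingy$ condition ($|friends_{sk}| \leq \tfrac12 |S_{j-1}|$) over survivors bounds the total number of such incidences, and the two inequalities are incompatible unless $S_j$ is small; (c) the case split: if $sk^* \in Stingy_{S_j}$ for all $j$ then $S_T = \{sk^*\}$ by (b), so the random representative of $S_T$ is $sk^*$ and $\mathcal{A}$ can sign any fresh message; otherwise $sk^* \notin Stingy_{S_j}$ for some $j$, meaning $|friends_{sk^*}|$ in $S_j$ is large, i.e.\ $sk^* \in accept_{sk}$ for a constant fraction of $sk \in S_j$, so a uniformly random representative of $S_j$ is, with constant probability, a key whose signatures $\pkgen(sk^*)$ accepts on a constant fraction of messages with constant probability. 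Either way, signing a uniformly random fresh message with a uniformly random key from $\mathcal{Q}$'s output list wins with probability $\geq 1/\mathrm{poly}(\lambda)$, once all the $\delta$-errors from \textsf{Sim-Haar} and the correctness/soundness slack of the scheme are set to sufficiently small constants and accumulated carefully.

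The third step is bookkeeping to make the quantifiers match Theorem~\ref{thm:no-qds}: I need the message-length hypothesis $\ell(\lambda) \geq 2\log\lambda$ so that (i) the message space is super-polynomial, ensuring a freshly sampled message is unqueried except with negligible probability, and (ii) "a constant fraction of the message space" is a meaningful, non-trivial event for the concentration bounds in (a) and (c); I would flag where exactly $2\log\lambda$ (versus $\log\lambda$) is used, consistent with Remark~\ref{rem:technicality1}. I also need to verify the claimed runtimes — that computing $Consistent$, the $S_j$'s, and their sizes is an $\mathsf{EXP}$ computation (the verification circuit is of polynomial size and makes polynomially many $\mathcal{Q}$-queries, so the whole search is in $\mathsf{EXP}$ and reduces to the $\mathsf{EXP}$-complete problem $\mathcal{Q}$ solves), and that $\mathcal{A}$ itself is QPT given $(\mathcal{U},\mathcal{Q})$.

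I expect the main obstacle to be step (c) — more precisely, pinning down the "reverse direction" gap between Equation~\eqref{eq:reverse good signer} and Equation~\eqref{eq:goodsigner}. The natural consistency argument only shows that keys in $Consistent$ \emph{accept} $sk^*$'s signatures, whereas to forge we need a key whose signatures $sk^*$'s public key accepts; the entire $GoodSigner/Stingy$ iteration exists to bridge this, and making the constants in the two definitions, the shrinking rate, the number of iterations $T$, and the $\textsf{Sim-Haar}$ error $\delta$ all mutually consistent — while ensuring $sk^*$ provably survives in every $S_j$ and the final success probability stays inverse-polynomial — is the delicate part. A secondary subtlety is that $\verify$'s acceptance probabilities are computed on the \emph{quantum} state $\ket{pk} = \pkgen(sk)$, but since $\pkgen$ is deterministic and $\mathcal{Q}$ can internally prepare $\pkgen(sk)$ and compute $\Pr[\verify(\pkgen(sk),m,\sigma)=1]$ exactly for the simulated ($\mathcal{U}$-free) circuit, this causes no issue — though I would make sure this point is stated carefully, since it is exactly the feature that distinguishes the quantum-public-key setting from the classical warm-up in Section~\ref{sec:techoverview1}.
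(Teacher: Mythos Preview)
Your proposal is correct and follows essentially the same route as the paper: the \textsf{Sim-Haar} simulation, the $\mathsf{Consistent}$ set, the nested $S_j = GoodSigner_{S_{j-1}} \cap Stingy_{S_{j-1}}$ iteration with the shrinking lemma, and the two-case finish are exactly what the paper does (Algorithm~\ref{algo:findsk} and Lemmas~\ref{lem:goodsigner}--\ref{lem:S}). One small correction on your bookkeeping step: with $\ell(\lambda)\geq 2\log\lambda$ the message space has size only $\lambda^2$, not super-polynomial, so a uniformly random fresh message is \emph{not} unqueried with overwhelming probability; the paper instead takes $t=40\lambda$ queries, samples the $m_i$ subject to $\bigcup_i\{m_i\}\neq\mathcal{M}_\lambda$, and in the final step samples $m$ explicitly from $\mathcal{M}_\lambda\setminus\mathcal{M}_{\text{queried}}$ --- the $2\log\lambda$ hypothesis is used only to guarantee $|\mathcal{M}_\lambda|\geq 100t$ so that removing the $t$ queried messages from the ``good'' $\tfrac{1}{10}$-fraction still leaves a constant fraction (Equation~\eqref{eq:119}).
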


As mentioned previously, we can also show that, relative to $\mathcal{(U,Q)}$, QDS schemes for shorter message length do not exist under a slightly different unforgeability definition than Definition \ref{def:polytimesec}. In this definition, for the adversary to succeed, it suffices to produce a a valid signature that has not been previously produced by the challenger (even if this is the signature of a previously queried message). We discuss this in more detail in Remark \ref{rem:technicality1}.

\subsection{Existence of PRS relative to the oracle}

Theorem \ref{thm:prs-existence} follows from the fact that an appropriate family of pseudorandom unitaries (PRU) exists relative to $O$. The latter was shown by Kretschmer\footnote{Kretschmer shows this for a slightly different oracle $(\mathcal{U}, \mathcal{Q})$, where $\mathcal{U}$ is the same, but $\mathcal{Q}$ is a PSPACE oracle, instead of an EXP oracle. The proof is analogous. The only step in the proof where this comes into play is in Lemma 31 from \cite{kretschmer2021quantum} which essentially provides a lower bound on the number of queries to $\mathcal{U}$. Crucially this lower bound holds against any unbounded quantum algorithm.} \cite{kretschmer2021quantum}.


The construction of the PRU family is very natural. For input length $\lambda$, the family is precisely $\mathcal{U_{\lambda}}$. This is a PRU family on $n(\lambda) = \lambda$ qubits. For the proof, we refer the reader to Theorem 32 in \cite{kretschmer2021quantum}. For any $n(\lambda) \geq \lambda$, we can construct a corresponding PRU family $\{\mathcal{U}'_{\lambda}\}$ on $n(\lambda)$ qubits by taking $\mathcal{U}'_{\lambda} = \mathcal{U}_{n(\lambda)}$. Security follows analogously.  

Now, a PRU family $\{\mathcal{U}_{\lambda}\}_{\lambda \in \mathbb{N}}$ on $n(\lambda)$ qubits immediately implies a PRS with output length $n(\lambda)$ as follows. Let $\mathcal{U}_{\lambda} = \{U_k\}_{k \in \{0,1\}^{\lambda}}$. Then, for $k\in \{0,1\}^{\lambda}$, one defines $\ket{\PRS(k)} = U_k \ket{0}$.

The rest of this section, is dedicated to proving Theorem \ref{thm:no-qds}.

\subsection{Simulating Haar random unitaries}
Before describing an adversary that, with query-bounded access to $(\mathcal{U, O})$, breaks the security of any QDS scheme, we need to introduce a crucial procedure that allows an adversary to ``simulate'' certain kinds of interactions with the oracle $\mathcal{U}$, by only making a small number of queries to $\mathcal{U}$. This simulation procedure was introduced by Kretschmer \cite{kretschmer2021quantum}. We give a high-level description first.

\subsubsection{Kretschmer's simulation procedure: a high-level description}
The key property of the Haar measure that makes the following simulation possible is its strong concentration, which we describe informally. Let $C$ be a fixed binary-output quantum circuit that makes $T$ queries to a Haar random unitary on $n$ qubits (i.e.\ on a Hilbert space of dimension $2^n$). Then, with high probability over sampling a pair of Haar random unitaries $U, U'$, the output distributions of the circuits $C^U$ and $C^{U'}$ (say on the $\ket{0}$ input) are within small TV distance. Quantitatively, for any $\delta>0$,
\begin{equation}
\label{eq:concentration informal}
\Pr_{U, U' \leftarrow \mu_{2^n}}\left[\left| \Pr[C^U(\ket{0}) = 1] -  \Pr[C^{U'}(\ket{0}) = 1] \right| \leq \delta \right] \leq \exp\left(-\Omega\left(\frac{2^n \cdot \delta^2}{T^2}\right)\right) \,.
\end{equation}
For example, when $T = poly(n)$, and $\delta$ is as small $2^{-c\cdot n}$ for $c<\frac12$, the upper bound is \emph{doubly} exponentially small in $n$. Thus, the concentration is strong enough to support a union bound over \emph{all} standard basis inputs. So, with overwhelming probability over $U$ and $U'$, the output distributions of $C^U$ and $C^{U'}$ on \emph{all} standard basis inputs are within a small constant TV distance of each other. 

How does this help an adversary for a QDS scheme? Suppose the adversary is trying to perform a brute-force search over inputs to a circuit $C$ (for example the $Verify$ circuit of a QDS scheme, in the hope of finding an accepting input). This search would normally require exponentially many queries to $\mathcal{U}$ (since there are exponentially many possible inputs to the circuit). The concentration property tells us that we can ignore the particular oracle $\mathcal{U}$, and instead replace queries to $\mathcal{U}$ with queries to a family of freshly sampled Haar unitaries (or $T$-designs) by paying only a small cost in TV distance.

There is only one issue. Recall that $\mathcal{U}$ is a \emph{family} of unitaries $\{\mathcal{U}_{\ell}\}_{\ell \in \mathbb{N}}$, where each $\mathcal{U_{\ell}}$ is an indexed list of $2^{\ell}$ Haar-random unitaries acting on $\ell$ qubits. When the dimension is small, relative to the number of queries, e.g.\ the number of qubits is $\ell = o(\log T)$, the concentration property no longer holds (or rather the upper bound in \eqref{eq:concentration informal} becomes trivial)! Fortunately, this issue can be circumvented because, for small enough dimension, the unitaries can be ``learnt'' to high precision efficiently by performing process tomography. More precisely, process tomography allows one to learn an arbitrary unitary on a space of dimension $d$ by making only $poly(d)$ queries. Thus, let $\lambda$ be a security parameter. Then, unitaries in $\mathcal{U}_{\ell}$, for $\ell = O(\log \lambda)$, can be learnt with only $poly(\lambda)$ queries. Moreover, notice that when $\ell = O(\log \lambda)$, there are only a total of $2^{\ell} = poly(\lambda)$ unitaries in $\mathcal{U}_\ell$, and so one can learn \emph{all} of them with a total of only $poly(\lambda)$ time and queries. On the other hand, unitaries in $\mathcal{U}_\ell$ for $\ell \geq c \cdot \log \lambda$, for a large enough constant $c$, enjoy a strong enough concentration, and can thus be replaced with fresh Haar unitaries, or $T$-designs, at a small cost in TV distance.

To summarize, let $C$ be a quantum circuit on inputs of size $\lambda$. Suppose $C$ makes $poly(\lambda)$ queries to $\mathcal{U}$. With high probability over $\mathcal{U}$, the output distribution of $C$ on \emph{all} standard basis inputs can be simulated to within a small constant in TV distance with only $poly(\lambda)$ queries to $\mathcal{U}$ by:
\begin{itemize}
    \item Using $T$-designs to replace the oracle calls to $\mathcal{U}_{\ell}$, for large enough $\ell$ (such as $\ell \geq  c \cdot \log \lambda$), for some large enough constant $c$). This perfectly simulates $T$ queries to a freshly sampled family $\mathcal{U}_{\ell}$, and, by the concentration property, this in turn approximates the original output distribution of each run to within a small $\delta$, where we can take $\delta$ to be a small constant\footnote{Note that we cannot take $\delta$ to be exponentially small in $\lambda$ here because $\ell$ could be as small as $c \cdot \log \lambda$.}. This step does not require any queries to $\mathcal{U}$.
    \item Efficiently learning all of the unitaries in $\mathcal{U}_{\ell}$, for all small enough $\ell$ (e.g.\ $\ell \leq \Theta(\log \lambda)$). This step requires $poly(\lambda)$ time and queries to $\mathcal{U}$.
\end{itemize}

\subsubsection{Kretschmer's simulation procedure: a formal description}
Before formally describing the simulation procedure, we introduce the necessary lemmas relating to process tomography and $t$-designs. For a unitary $U$, we let $U(\cdot) U^{\dag}$ denote the the quantum channel $\rho \rightarrow U \rho U^{\dag}$.
\paragraph{Process Tomography}
We focus on process tomography for a unitary channel. In this setting, the algorithm is given black-box access to a unitary $Z$. After a number of queries to $Z$, the algorithm should output a classical description of a unitary $\Tilde{Z}$ with the goal of minimizing the following quantity:
\[
\mathbb{E}\left[\left\|Z(\cdot) Z^{\dag}-\Tilde{Z}(\cdot) \Tilde{Z}^{\dag}\right\|_{\diamond}\right] \,.
\]
In \cite{HKOT23}, the authors describe an algorithm with the following guarantees. This algorithm is used as a sub-routine in Kretschmer's simulation procedure.
\begin{theorem}[\cite{HKOT23}]
\label{cor:processtom}
 There is a quantum algorithm that, given $\epsilon, \mu \in (0,1)$, as well as black-box access to an unknown $d$-dimensional unitary $Z \in \mathbb{U}(d)$, makes $O(\frac{d^2}{\epsilon}\log(1/\eta))$ queries to the black box and outputs a classical description of a unitary $\Tilde{Z} \in \mathbb{U}(d)$ that is $\epsilon$-close to $Z$ in diamond norm with probability at least $1-\eta$. Specifically, $\mathbb{E}\left[\left\| Z (\cdot) Z^{\dag}-\Tilde{Z}(\cdot) \Tilde{Z}^{\dag}\right\|_{\diamond}^2\right]\leq \epsilon^2$, which implies that $\Pr\left[\left\|Z(\cdot)Z^{\dag}-\Tilde{Z}(\cdot)\Tilde{Z}^{\dag}\right\|_{\diamond}\leq 3\epsilon\right] \geq 1-\mu$.
 The gate complexity of this algorithm is $poly(d,1/\epsilon)$, and its classical running time is $poly(d,1/\epsilon)$.
\end{theorem}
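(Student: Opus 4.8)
The plan is to reduce diamond-distance estimation of the unitary channel $Z(\cdot)Z^\dagger$ to a spectral estimation problem and then attack it in two stages: a coarse stage that pins down $Z$ up to constant diamond error, and a fine stage that amplifies the precision to $\epsilon$ with Heisenberg-limited query cost. The starting point is the standard characterization of diamond distance between unitary channels: for unitaries $Z,\tilde Z$, the quantity $\|Z(\cdot)Z^\dagger - \tilde Z(\cdot)\tilde Z^\dagger\|_\diamond$ equals the Euclidean diameter of the convex hull of the eigenvalues of $\tilde Z^\dagger Z$ on the unit circle — hence, up to universal constants, the angular width of the smallest arc containing all those eigenvalues, capped at $2$ once that width exceeds $\pi$. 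So it suffices to output $\tilde Z$ such that the eigenvalues of $\tilde Z^\dagger Z$ all lie within an arc of width $O(\epsilon)$, equivalently $\tilde Z \approx e^{i\phi} Z$ in operator norm for some global phase $\phi$, except with the allowed failure probability.

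For the coarse stage I would use the Choi–Jamio\l kowski correspondence: one query to $Z$ applied to half of a maximally entangled state produces one copy of the pure state $\ket{\psi_Z} = \tfrac{1}{\sqrt d}\sum_i (Z\ket i)\otimes\ket i \in \mathbb{C}^{d^2}$. Running an optimal pure-state tomography routine on $O(d^2\log(1/\eta))$ such copies yields an estimate $\tilde Z_0$ with $\tfrac1d|\mathrm{Tr}(\tilde Z_0^\dagger Z)|\ge 1-c$ for a small constant $c$, except with probability $O(\eta)$. The subtlety — and the reason this alone does not finish the proof — is that high Choi-state fidelity is only an \emph{average}-fidelity guarantee: a few eigenvalues of $W_0 := \tilde Z_0^\dagger Z$ can still be far from the bulk, leaving $\|\cdot\|_\diamond = \Omega(1)$. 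The fine stage must therefore locate and correct these outlier eigenphases. The idea is to bootstrap: since $W_0$ is close to a global phase times the identity in normalized Hilbert–Schmidt distance, write $W_0 \approx e^{i\phi} e^{iH}$ with $H$ Hermitian of small norm and estimate $H$ to additive operator-norm error $\epsilon$. To obtain the $1/\epsilon$ (rather than $1/\epsilon^2$) query dependence one uses queries \emph{coherently}: applying $W_0$ roughly $k$ times in sequence maps a small eigenphase $\theta$ to $k\theta$, so a phase-estimation-style measurement with $k$ up to $\Theta(1/\epsilon)$ resolves every eigenphase — including the outliers — to precision $\epsilon$ at total query cost $O(d^2/\epsilon)$; composing $\tilde Z_0$ with the resulting correction gives $\tilde Z$. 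Re-running the whole procedure $O(\log(1/\eta))$ times and outputting an estimate that is pairwise close to a majority of the others boosts the success probability; the bound $\mathbb{E}[\|Z(\cdot)Z^\dagger-\tilde Z(\cdot)\tilde Z^\dagger\|_\diamond^2]\le\epsilon^2$ then follows by accounting for the small-probability failure events, and the high-probability statement by Markov's inequality. Gate complexity stays $\mathrm{poly}(d,1/\epsilon)$ because each subroutine (entangled-query preparation, pure-state tomography on $\mathbb{C}^{d^2}$, phase estimation, and synthesis of the classical description of $\tilde Z$) is individually efficient in $d$ and $1/\epsilon$.

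The main obstacle I anticipate is precisely the interface between the two stages: arranging that the coarse estimate leaves a residual channel $W_0$ that is ``near identity'' in a strong enough sense, and then designing the coherent refinement so that it simultaneously (i) detects and removes \emph{worst-case} outlier eigenphases — the feature distinguishing diamond distance from average fidelity — and (ii) attains the optimal Heisenberg $1/\epsilon$ scaling without inflating the ancilla register or the gate count. This is the technically delicate core of \cite{HKOT23}; for the purposes of this paper we invoke their result as a black box.
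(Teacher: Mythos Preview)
The paper does not prove this statement at all: Theorem~\ref{cor:processtom} is simply quoted from \cite{HKOT23} and used as a black box, with no proof or proof sketch given. So there is nothing in the paper to compare your proposal against. Your own final sentence already acknowledges this (``for the purposes of this paper we invoke their result as a black box''), which is exactly the right disposition here; the preceding two paragraphs of sketch are extraneous for the present paper, though they are a reasonable high-level outline of the kind of two-stage (coarse Choi-state tomography followed by Heisenberg-limited eigenphase refinement) argument that underlies the cited result.
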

\paragraph{Approximate $T$ Designs}
An $\epsilon$-approximate quantum unitary $t$-design is a distribution over unitaries that ``$\epsilon$-approximates'' a Haar random unitary, when considering their action via a $t$-copy parallel repetition.
\begin{definition}[Approximate Unitary Design \cite{BHH}]
Let $\epsilon \in [0,1],  t \in \mathbb{N}$. A probability distribution $S$ over $\mathbb{U}(N)$ is an $\epsilon$-approximate unitary $t$-design if:
\[
(1-\epsilon)\mathbb{E}_{U \leftarrow \mu_N}[(U(.)U^{\dag})^{\otimes t}]\preceq\mathbb{E}_{U \sim S}[(U(.)U^{\dag})^{\otimes t}] \preceq (1-\epsilon)\mathbb{E}_{U \leftarrow \mu_N}[(U(.)U^{\dag})^{\otimes t}] \,,
\]
where $B \preceq A$ means that $B-A$ is positive semidefinite.   
\end{definition}
It is well-known that there are efficient constructions of such unitary $t$-designs.
\begin{lemma}[\cite{BHH}]
\label{lem:tdescons}
There exists $m: \mathbb{N} \rightarrow \mathbb{N}$, such that the following holds. For each $n, t \in \mathbb{N}$, and $\epsilon>0$, there is a $poly(n,t,\log(\frac{1}{\epsilon}))$-time classical algorithm $A$ that takes $m(n)$ bits of randomness as input, and outputs a description of a unitary quantum circuit on $n$ qubits such that the output distribution of $A$ is an $\epsilon$-approximate unitary $t$-design (over $\mathbb{U}(2^n)$).
\end{lemma}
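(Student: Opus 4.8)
The plan is to invoke the now-standard fact that \emph{local random quantum circuits} are approximate $t$-designs, due to Brand\~{a}o, Harrow, and Horodecki \cite{BHH}, and then to package that result into the precise efficiency-and-randomness form demanded by the lemma. Concretely, one fixes a brickwork architecture on $n$ qubits (alternating layers of nearest-neighbour two-qubit gates arranged on a line) and considers a circuit with $g$ gates in this architecture, each gate drawn independently at random. The key input from \cite{BHH} is that one layer of Haar-random two-qubit gates, viewed as an operator on the $t$-fold tensor space and restricted to the orthogonal complement of the global $t$-th-moment commutant, has operator norm at most $1 - 1/poly(n,t)$; iterating across layers, a random circuit with $g$ gates becomes an $\epsilon$-approximate $t$-design once $g \ge poly(n,t)\cdot\big(poly(n,t) + \log(1/\epsilon)\big)$, the additive $\log(1/\epsilon)$ term reflecting that, past the mixing time, the distance to a perfect design decays geometrically in the number of extra layers. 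In particular $g = poly(n,t,\log(1/\epsilon))$ gates suffice.

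The second step is to pass from Haar-random two-qubit gates --- which require infinitely many random bits to specify exactly --- to a finitely describable circuit, so that the construction can be carried out by a classical algorithm reading a bounded string of uniform random bits. For this, replace each Haar-random gate on $\mathbb{U}(4)$ by a gate drawn from an $\epsilon'$-approximate two-qubit $t$-design; since the dimension here is a constant, such a design can be taken unconditionally, e.g.\ as a random length-$poly(t,\log(1/\epsilon'))$ word over a fixed finite universal gate set, and each such word is specified by $poly(t,\log(1/\epsilon'))$ bits. Substituting the $g$ Haar gates by these approximate designs perturbs the global $t$-th-moment operator by at most $g \cdot \epsilon'$ in the relevant norm, so taking $\epsilon' = \epsilon/(2g)$ keeps the final object an $\epsilon$-approximate $t$-design at the cost of only an additive $O(\log g)$ in the per-gate bit count. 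The algorithm $A$ then merely parses its random string into $g$ gate descriptions and outputs the resulting circuit; this runs in time $poly(n,t,\log(1/\epsilon))$, and the total number of random bits it consumes is $m = poly(n,t,\log(1/\epsilon))$. (Accordingly, the function ``$m(n)$'' in the statement is to be read as $m = m(n,t,\epsilon) = poly(n,t,\log(1/\epsilon))$, which is harmless downstream: there $n, t = poly(\lambda)$ and $1/\epsilon$ is at worst $2^{poly(\lambda)}$, so $m = poly(\lambda)$.)

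The main obstacle is confined entirely to the first step: proving that the local-random-circuit moment operator mixes within $poly(n,t)$ layers. This is the technical core of \cite{BHH}, obtained by lower-bounding the relevant spectral gap through a Nachtergaele-style martingale (``detectability lemma'') argument, together with the $\Omega(1)$ gap of the single two-qubit moment operator and a path-coupling step propagating the gap along the line; I would not attempt to reprove it, and in the write-up it would be cited directly. The only genuinely new content is the routine bookkeeping of the second step --- checking that the discretization error is additive over the $g$ gates and that, after the substitution, the number of random bits stays polynomial in all three parameters --- which yields the statement as phrased.
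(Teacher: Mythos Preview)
The paper does not prove this lemma at all: it is stated with the citation \cite{BHH} attached and then used as a black box. So there is no ``paper's own proof'' to compare against; your proposal is not a re-derivation of something the authors did, but rather an expansion of a result they imported wholesale.

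That said, as a sketch of the underlying construction your outline is accurate. The two-step plan --- first invoke the \cite{BHH} spectral-gap bound to get that $poly(n,t,\log(1/\epsilon))$ layers of Haar-random two-qubit gates in a brickwork architecture form an $\epsilon$-approximate $t$-design, then discretize each two-qubit gate to a finite approximate design so that the whole circuit is specified by finitely many random bits --- is exactly how one packages the BHH result into the algorithmic form stated here. Your error accounting (additive $g\cdot\epsilon'$ perturbation, set $\epsilon' = \epsilon/(2g)$) is the standard calculation. You are also right to flag that the randomness bound ``$m(n)$'' as written depends only on $n$, whereas the construction really needs $m = poly(n,t,\log(1/\epsilon))$; this is a minor imprecision in the lemma statement, and as you note it is harmless in the paper's application since all three parameters are $poly(\lambda)$ there.
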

An $\epsilon$-approximate unitary $t$-design $S$ is said to be \emph{phase-invariant} if, for any unitary $U$ in the support of $S$, $U$ and $\omega U$ are sampled with the same probability, where $\omega$ is the $(t+1)$-th root of unity. The following lemma says that replacing the Haar measure with a phase-invariant $\epsilon$-approximate unitary $t$-design is undetectable to any algorithm that makes only $t$ queries to the Haar random unitary.
\begin{lemma}[\cite{kretschmer2021quantum}]
 \label{lem:tdesign}
     Let $S$ be a phase-invariant $\epsilon$ approximate unitary $t$-design over $\mathbb{U}(N)$, and let $D^{(\cdot)}$ be any $t$-query quantum algorithm. Then, 
     \[
     (1-\epsilon)\Pr_{U\leftarrow \mu_{N}}[D^U=1] \leq \Pr_{U\leftarrow S}[D^U=1] \leq (1+\epsilon)\Pr_{U\leftarrow \mu_{N}}[D^U=1]
     \]
  (where in the above, $D^{(\cdot)}$ is also allowed to make queries to controlled-$U$).
 \end{lemma}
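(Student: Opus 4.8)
The plan is to reduce the lemma to a single structural fact: for any distribution $\nu$ on $\mathbb{U}(N)$ that is \emph{phase-invariant} with respect to $\omega := e^{2\pi i/(t+1)}$ (this includes both $\mu_N$ and, by hypothesis, $S$), the quantity $\Pr_{U\leftarrow\nu}[D^U=1]$ is a linear functional of the moment superoperators $M_\nu^{(a)} := \mathbb{E}_{U\leftarrow\nu}[(U(\cdot)U^{\dag})^{\otimes a}]$ for $a\le t$, of the form $\Pr_{U\leftarrow\nu}[D^U=1] = \sum_{a=0}^{t} L_a\big(M_\nu^{(a)}\big)$, where each $L_a$ has the shape $L_a(M)=\sum_{S:\,|S|=a}\mathrm{Tr}[\Lambda_S\,(M\otimes\mathrm{id})(\rho_S)]$ with \emph{fixed} PSD operators $\Lambda_S$ (sub-POVM elements) and $\rho_S$ (sub-normalized states). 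Such an $L_a$ is automatically a positive linear functional that is monotone for the completely-positive order on superoperators: if $M-M'$ is completely positive then $((M-M')\otimes\mathrm{id})(\rho_S)\ge 0$, so $\mathrm{Tr}[\Lambda_S(\cdot)]\ge 0$. Given this fact the lemma is immediate: the defining inequalities of an $\epsilon$-approximate $t$-design place $M_S^{(t)}$ between $(1-\epsilon)M_\mu^{(t)}$ and $(1+\epsilon)M_\mu^{(t)}$ in that order; since $M_\nu^{(a)}$ is obtained from $M_\nu^{(t)}$ by pre-composing with ``append $t-a$ copies of $\ket 0\bra 0$'' and post-composing with ``trace out those $t-a$ registers'' (both fixed CPTP maps, and one checks this composition indeed equals $M_\nu^{(a)}$ because $\mathrm{Tr}[U\sigma U^{\dag}]=1$), the same sandwich holds for every $a\le t$; applying the positive monotone $L_a$, using $L_a(M_\mu^{(a)})\ge 0$, and summing over $a$ gives $(1-\epsilon)\Pr_{U\leftarrow\mu_N}[D^U=1]\le\Pr_{U\leftarrow S}[D^U=1]\le(1+\epsilon)\Pr_{U\leftarrow\mu_N}[D^U=1]$.

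To establish the structural fact, first assume without loss of generality that $D$ makes exactly $t$ queries and that its only non-query steps are unitaries followed by one final two-outcome measurement (purify all randomness and ancillas). Each query is to $U$ on some $N$-dimensional register, or to controlled-$U$ with some control register; a plain-$U$ query is the case where the control is hard-wired to $\ket 1$. Make two harmless modifications: (i) route the $j$-th query to a \emph{fresh} $N$-dimensional register $B_j$ via a fixed $\mathsf{SWAP}$ before and after, so distinct queries act on distinct registers; (ii) immediately before the $j$-th query, copy its control bit into a fresh ancilla $R_j$ with a $\mathsf{CNOT}$ — since the $R_j$ are never touched again and are discarded, neither change affects $\Pr[D^U=1]$. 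Now decompose according to the classical record $R=(R_1,\dots,R_t)$: $\Pr[D^U=1]=\sum_{S\subseteq[t]}\Pr[D^U=1\wedge R=S]$, and because the $R$-diagonal blocks decouple, conditioning on $R=S$ fixes which controlled queries ``fire'', so on that branch the evolution applies $U$ exactly once to each $B_j$ with $j\in S$, sandwiched between fixed operations. Hence $\Pr[D^U=1\wedge R=S]=\mathrm{Tr}\big[\Lambda_S\,(\mathcal E_U^{\otimes S}\otimes\mathrm{id})(\rho_S)\big]$ with $\mathcal E_U=U(\cdot)U^{\dag}$ acting on $\{B_j\}_{j\in S}$ and $\Lambda_S,\rho_S$ PSD and independent of $U$. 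Averaging over $U\leftarrow\nu$ replaces $\mathcal E_U^{\otimes S}$ by $M_\nu^{(|S|)}$, and grouping by $a=|S|$ produces the claimed decomposition $\Pr_{U\leftarrow\nu}[D^U=1]=\sum_{a=0}^{t}L_a(M_\nu^{(a)})$.

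Phase-invariance and the precise choice $\omega=e^{2\pi i/(t+1)}$ enter to guarantee that only the ``balanced'' moments $M_\nu^{(a)}$ (as opposed to mixed objects $\mathbb{E}_\nu[U^{\otimes a}\otimes\bar U^{\otimes b}]$ with $a\ne b$) can appear, which is what makes the $t$-design condition the relevant hypothesis. Expanding $\Pr[D^{V}=1]$ along computational-basis paths exhibits it as a polynomial of bidegree at most $(t,t)$ in the entries of $V$ and $\bar V$ (each plain-$U$ query contributes one $V$-entry per path, each controlled-$U$ query at most one). For phase-invariant $\nu$ we may write $\Pr_{U\leftarrow\nu}[D^U=1]=\mathbb{E}_{U\leftarrow\nu}\mathbb{E}_{\omega}[\Pr[D^{\omega U}=1]]$; a monomial with $a$ copies of $V$-entries and $b$ of $\bar V$-entries is scaled by $\omega^{a-b}$, and since $|a-b|\le t<t+1$, averaging over $(t+1)$-th roots of unity kills every unbalanced monomial. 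This is exactly consistent with the record-bit decomposition above (and, incidentally, for a distinguisher that only makes plain-$U$ queries the bidegree is already $(t,t)$ and phase-invariance is not needed).

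I expect the main obstacle to be the second paragraph: verifying carefully that $\Pr[D^U=1\wedge R=S]=\mathrm{Tr}[\Lambda_S(\mathcal E_U^{\otimes S}\otimes\mathrm{id})(\rho_S)]$ with $\Lambda_S$ and $\rho_S$ genuinely \emph{independent of $U$} and PSD — this needs one to check that copying the control bits does not disturb the computation, that the blocks of the state diagonal in the $R$-register truly decouple in the final trace, and that on each such block the $U$-dependence is precisely ``apply $U$ once per fired query between fixed operations'' — together with the fact that controlled-$U$ queries cannot be simulated from plain-$U$ queries black-box, so phase-invariance (with the $(t+1)$-th root of unity) is the correct device rather than a reduction to the plain-$U$ case. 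Everything after that — the downward closure of the $t$-design inequalities under fixed CPTP pre/post-composition, and pushing a completely-positive-order sandwich through a positive monotone functional — is routine.
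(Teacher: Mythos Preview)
The paper does not prove this lemma; it is quoted from \cite{kretschmer2021quantum} and used as a black box. So there is no ``paper's own proof'' to compare against, and the question is whether your argument stands on its own.

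Your high-level strategy --- write $\Pr_{U\leftarrow\nu}[D^U=1]$ as a sum of functionals $L_a(M_\nu^{(a)})$ that are monotone for the CP order, then push the $t$-design sandwich through downward closure --- is exactly right. The downward-closure step and the monotonicity step are fine. The problem is the record-bit construction in your second paragraph. Copying the control qubit of a controlled-$U$ query into a fresh ancilla $R_j$ and later discarding $R_j$ is \emph{not} innocuous: tracing out $R_j$ dephases the control register in the computational basis, and this changes the acceptance probability whenever the rest of the circuit uses the control coherently. Concretely, take one controlled-$U$ with control prepared in $\ket{+}$, followed by a Hadamard on the control and a standard-basis measurement of the control: without your CNOT the probability of outcome $0$ is $\tfrac{1}{2}(1+\mathrm{Re}\bra{0}U\ket{0})$, but with the CNOT-and-discard it is $\tfrac{1}{2}$ regardless of $U$. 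More structurally, your construction kills all cross terms between branches $S\neq S'$, including those with $|S|=|S'|$; phase-invariance only kills the $|S|\neq|S'|$ cross terms, so even after averaging over a phase-invariant $\nu$ your modified quantity differs from the true one.

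The clean fix is to do what your third paragraph hints at but to make it the \emph{main} argument rather than a consistency check. Write the final (pure) state as $\ket{\psi_f(U)}=\sum_{a=0}^t\ket{\Psi_a(U)}$ where $\ket{\Psi_a(U)}$ collects all terms homogeneous of degree $a$ in $U$ (this is an algebraic decomposition, no ancillas). Then $p(U)=\|\Pi\ket{\psi_f(U)}\|^2=\sum_{a,b}\langle\Psi_b(U)|\Pi|\Psi_a(U)\rangle$, and phase-invariance with $\omega=e^{2\pi i/(t+1)}$ kills exactly the $a\neq b$ terms, leaving $\mathbb{E}_\nu[p(U)]=\sum_a\mathbb{E}_\nu\big[\|\Pi\ket{\Psi_a(U)}\|^2\big]$. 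Since $\ket{\Psi_a(U)}$ is linear in $U^{\otimes a}$, a Choi/teleportation identity gives $\ket{\Psi_a(U)}=K_a(U^{\otimes a}\otimes I)\ket{\chi_a}$ for fixed $K_a,\ket{\chi_a}$, whence $\|\Pi\ket{\Psi_a(U)}\|^2=\mathrm{Tr}[\Lambda_a(\mathcal{E}_U^{\otimes a}\otimes\mathrm{id})(\rho_a)]$ with $\Lambda_a=K_a^\dagger\Pi K_a\ge 0$ and $\rho_a=\ket{\chi_a}\bra{\chi_a}\ge 0$. This is the positivity you need, and the rest of your proof goes through unchanged. (Note also that your ``SWAP before and after'' does not by itself give the parallel form $(\mathcal{E}_U^{\otimes S}\otimes\mathrm{id})(\rho_S)$, because the fixed operations between queries touch the $B_j$'s; it is precisely the Choi/teleportation step that does this work.)
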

\vspace{2mm}
We are now ready to describe Kretschmer's simulation procedure formally. For convenience, we denote the simulation procedure as $\textsf{Sim-Haar}$. Formally, $\textsf{Sim-Haar}$ takes as input a description of a circuit $C^{(\cdot)}$ that makes queries to $\mathcal{U}$ (and possibly some other oracle $\mathcal{Q}$), as well as some other parameters that we will describe shortly. 
It outputs a circuit $C'$ that does not make queries to $\mathcal{U}$ (but possibly to $\mathcal{Q})$. 
Here is a formal description.
\begin{algorithm}
\caption{$\textsf{Sim-Haar}^{(\cdot)}$}\label{sim-haar}
\vspace{2mm}
\textbf{Oracle access:} The algorithm has query access to an oracle $\mathcal{U}= \{\mathcal{U_{\ell}}\}_{\ell \in \mathbb{N}}$, where each $\mathcal{U_{\ell}}$ is a list of $2^{\ell}$ different $\ell$-qubit unitary transformations (and possibly to another oracle $\mathcal{Q}$).\\
\textbf{Input}: A quantum circuit $C^{(\cdot)}$ using space $s$ that makes $T$ queries to $\mathcal{U}$ (and possibly to another oracle $\mathcal{Q}$), $\eta \in \mathbb{N}$, and $\delta \in (0,1/3)$.\\
We denote the unitaries in the list $\mathcal{U_{\ell}}$ as $\{U_{k\ell} \}_{k\in \{0,1\}^{\ell}}$. Let $d=\log (192\frac{1}{\delta^2}(\eta+s)\cdot T^2+2)$.

\vspace{2mm}
For $l \in [s]$:
\begin{itemize}
\item If $\ell \in [d]$, view the list $\mathcal{U}_{\ell}$ as a unitary on a larger space that includes a control register for the index $k$. Classically simulate $\mathcal{U_{\ell}}$ by running the process tomography algorithm from Theorem~\ref{cor:processtom} on inputs $\epsilon = \frac{\delta}{T}$ and $\mu = \frac{1}{d} e^{-2(\eta+s)}$. This produces estimates $\Tilde{{\mathcal{U_{\ell}}}}$ such that $\| \Tilde{\mathcal{U_{\ell}}}(\cdot) \Tilde{{\mathcal{U_{\ell}}}}^{\dag}-{\mathcal{U_{\ell}}}(\cdot) \mathcal{U_{\ell}}^{\dag} \|_{\diamond} \leq \frac{\delta}{T}$ with probability at least $1-\frac{1}{d}e^{-2(\eta+s)}$. 
From Theorem \ref{cor:processtom}, this can be done, for each $\ell$,  with a number of queries that is $O\left(\frac{2^{2\ell}T}{\delta}\cdot \log(\frac{1}{e^{-\eta}})\right)\leq O\left(\frac{2^{2d}T}{\delta}\log(\frac{1}{e^{-\eta}})\right)\leq O
(\frac{T^3\eta^2}{\delta^3})$ and running time $poly(\ell, T/\delta).$  
\item Otherwise, if $\ell \in [d+1, s]$, do the following. Let $\epsilon= \frac{\delta}{s2^{s}}$. Let $m: \mathbb{N} \rightarrow \mathbb{N}$ be as in Lemma~\ref{lem:tdescons}. Let $A$ be the ``unitary design sampler'' algorithm from Lemma \ref{lem:tdescons} with parameters $\epsilon, n = \ell$, and $t = T$. Sample $f_{\ell}: \{0,1\}^{\ell}\rightarrow \{0,1\}^{m(l)}$ from a $2T$-wise independent family of functions. Set $\Tilde{U}_{k \ell}= A(f_{\ell}(k))$.
\end{itemize}
\textbf{Output}: The quantum circuit $C'$ defined as follows. $C'$ is identical to $C$ except that, for each $\ell$ and $k \in \{0,1\}^\ell$, all queries to $\mathcal{U}_{k \ell}$ are replaced with a direct application of the unitary $\Tilde{U}_{k\ell}$ defined earlier. Thus $C'$ makes no queries to $\mathcal{U}$ (but still makes queries to $\mathcal{Q}$, if $C$ does).
\end{algorithm}

In the following lemma, $\mathcal{U}$ is as in Algorithm~\ref{sim-haar}, and $\mathcal{Q}$ is any other fixed oracle.
\begin{lemma}[\cite{kretschmer2021quantum}]
\label{lem:sim-haar}
 Let $\eta \in \mathbb{N}$ and $\delta \in (0,1/3)$. Let $\textsf{Sim-Haar}_{\eta, \delta}$ denote Algorithm \ref{sim-haar} where the inputs $\eta$ and $\delta$ are fixed. 
Let $C^{(\cdot)}$ be a binary-output quantum circuit that uses space $s$ and makes $T$ queries to $(\mathcal{U,Q})$. Then, $\textsf{Sim-Haar}_{\eta, \delta}^{\mathcal{U}}(C)$ runs in time $poly(\eta,s,T,\frac{1}{\delta})$, and, with probability at least $1-2e^{-\eta}$ over the randomness of $\textsf{Sim-Haar}_{\eta, \delta}$ and the sampling of $\mathcal{U}$, we have that, for all $x \in \{0,1\}^{n}$ (where $n$ is the length of inputs to $C$),
\[
\Big|\Pr[C'^{\mathcal{Q}}(\ket{x})=1] - \Pr[C^{\mathcal{U},\mathcal{Q}}(\ket{x})=1]\Big|\leq 3\delta+e^{-\frac{\eta}{2}} \,.
\]
\end{lemma}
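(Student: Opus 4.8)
The plan is to bound the total error by splitting it into two parts: (a) the error introduced by replacing the small-dimension unitaries $\mathcal{U}_\ell$ (for $\ell \le d$) with their tomographic estimates $\tilde{\mathcal{U}}_\ell$, and (b) the error introduced by replacing the large-dimension unitaries $\mathcal{U}_\ell$ (for $d < \ell \le s$) with the approximate $t$-design samples $\tilde{U}_{k\ell} = A(f_\ell(k))$. For part (a), I would use Fact \ref{fact} together with the fact that $C$ makes only $T$ queries total: each query to a small-dimension unitary can be replaced by its estimate at a cost of at most $\|\tilde{\mathcal{U}}_\ell(\cdot)\tilde{\mathcal{U}}_\ell^\dag - \mathcal{U}_\ell(\cdot)\mathcal{U}_\ell^\dag\|_\diamond \le \delta/T$ in diamond norm, so a hybrid argument over the (at most) $T$ queries gives a contribution of at most $\delta$ to the statistical distance — provided all the tomography calls succeed, which happens except with probability $d \cdot \frac1d e^{-2(\eta+s)} \le e^{-2(\eta+s)}$ by a union bound over the (at most $d$) values of $\ell \le d$. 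Here I would invoke Theorem \ref{cor:processtom} with the stated parameters $\epsilon = \delta/T$ and $\mu = \frac1d e^{-2(\eta+s)}$; note the ``$3\epsilon$'' in that theorem contributes the factor that turns $\delta$ into roughly $3\delta$ in the final bound.

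For part (b), the key point is that the $t$-design replacement is ``perfect'' against any $T$-query algorithm \emph{in expectation over the fresh sampling}, via Lemma \ref{lem:tdesign}, but we then need to transfer back to the fixed Haar-random $\mathcal{U}$. The chain of reasoning is: (i) by Lemma \ref{lem:tdescons} the sampler $A$ produces a (phase-invariant, after a standard modification) $\epsilon$-approximate $T$-design with $\epsilon = \delta/(s 2^s)$, and since $C$ touches at most $2^\ell \le 2^s$ indices per level and at most $s$ levels, the cumulative $\epsilon$ across all large-$\ell$ levels and all indices is at most $\delta$; replacing each such Haar block by a fresh $t$-design block, Lemma \ref{lem:tdesign} bounds the resulting change in acceptance probability by a multiplicative $(1\pm\epsilon)$ per block, hence additively by $\delta$ in total. (ii) The $2T$-wise independent choice of $f_\ell$ means that, from the point of view of a $T$-query circuit, the collection $\{\tilde U_{k\ell}\}_k$ is distributed exactly as if each $\tilde U_{k\ell}$ were sampled independently from the design — this is the standard ``bounded independence fools $T$-query algorithms'' observation (each query reads one index, so $2T$-wise independence suffices across the $\le 2T$ register-read events including controlled queries). (iii) Finally, to replace the \emph{fixed} Haar-random $\mathcal{U}_\ell$ (large $\ell$) with \emph{freshly resampled} Haar unitaries at cost $\delta$, I would invoke the strong concentration of Haar measure — Theorem \ref{thm:conc} together with the Lipschitz bound of Lemma \ref{lem:lipshitz} (which says $f(U) = \Pr[C^U(\ket\psi)=1]$ is $2T$-Lipschitz) — applied with the parameter $d = \log(192\frac{1}{\delta^2}(\eta+s)T^2 + 2)$ chosen precisely so that for every $\ell > d$ the tail bound $\exp(-\frac{(2^\ell - 2)\delta^2}{24(2T)^2})$ is at most $e^{-(\eta+s)}$, and then union-bounding over all $\le 2^\ell$ unitaries in $\mathcal{U}_\ell$, all $\le s$ values of $\ell$, and all $2^n$ standard-basis inputs $x$; the $e^{-\eta/2}$ term in the conclusion comes from collecting these concentration failure probabilities (the $2^n$ and $2^s$ factors are absorbed because the exponent scales with $2^d \gtrsim \delta^{-2}(\eta+s)T^2$, and one should check $n \le s$ so $2^n \le 2^s$ is dominated).

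Putting the pieces together: outside a bad event of probability at most $2e^{-\eta}$ (from summing the tomography-failure and concentration-failure probabilities, using $s, n \le \mathrm{poly}$ and the generous choice of $d$), for every $x$ the acceptance probability of $C'$ differs from that of $C^{\mathcal{U},\mathcal{Q}}$ by at most (tomography error $\le 3\delta$ in part (a), where the factor $3$ absorbs the $3\epsilon$ of Theorem \ref{cor:processtom} and the hybrid constants) plus (design error, which I would fold into the same $3\delta$ by choosing constants, or alternatively track as an extra $O(\delta)$) plus (the residual fresh-Haar-vs-fixed-Haar slack, which I would show is at most $e^{-\eta/2}$ by the concentration bound above) — giving the claimed $3\delta + e^{-\eta/2}$. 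The runtime bound $\mathrm{poly}(\eta, s, T, 1/\delta)$ follows by inspection: $2^d = \mathrm{poly}(\eta, s, T, 1/\delta)$, so the tomography step over levels $\ell \le d$ costs $\mathrm{poly}(2^d, T/\delta) = \mathrm{poly}(\eta, s, T, 1/\delta)$ queries and time, and the design-sampling step over the $\le s$ large levels costs $\mathrm{poly}(s, T, \log(1/\epsilon)) = \mathrm{poly}(s, T, 1/\delta)$ by Lemma \ref{lem:tdescons}.

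\textbf{Main obstacle.} The delicate step is part (b)(iii): carefully choosing $d$ and executing the union bound so that the doubly-exponential Haar concentration (from Theorem \ref{thm:conc}, which needs dimension $2^\ell$ large relative to $T^2/\delta^2$) dominates the \emph{triple} union bound over the $\le 2^\ell$ indices in each list, the $\le s$ relevant values of $\ell$, and — crucially — all $2^n$ inputs $x$ simultaneously, while keeping the failure probability at $e^{-\eta/2}$ rather than something that degrades with $n$. The definition $d = \log(192\delta^{-2}(\eta+s)T^2 + 2)$ is reverse-engineered for exactly this: it makes $\frac{(2^d-2)\delta^2}{24(2T)^2} = 2(\eta + s)$, so each individual concentration tail is $e^{-2(\eta+s)}$, and the $2^\ell \cdot s \cdot 2^n \le 2^{O(s)}$ union-bound factor is killed by the extra $e^{-s}$ headroom (one must also note $n \le s$, which holds since the input register is part of the circuit's space). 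Everything else is a routine hybrid/triangle-inequality bookkeeping exercise.
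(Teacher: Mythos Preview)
Your proposal is correct and follows the paper's proof almost exactly: the same hybrid decomposition (Haar concentration for $\ell>d$ via Theorem~\ref{thm:conc} and Lemma~\ref{lem:lipshitz}, $T$-design replacement via Lemma~\ref{lem:tdesign}, $2T$-wise independence via \cite{Zha12}, and tomography for $\ell\le d$ via Theorem~\ref{cor:processtom}), with the union bound over inputs at the end. Two minor bookkeeping corrections: (i) the factor $3$ in $3\delta$ arises from summing three separate $\delta$-contributions (concentration, design approximation, tomography), not from the ``$3\epsilon$'' in Theorem~\ref{cor:processtom}; (ii) in the concentration step the paper applies Theorem~\ref{thm:conc} \emph{once} to the entire block-diagonal family of large-$\ell$ unitaries (so $N=\min_\ell 2^\ell = 2^{d+1}$), and then union-bounds only over the $2^n\le 2^s$ inputs $x$---your described union bound over the $2^\ell$ indices and $s$ levels is not needed, and if read literally as a per-unitary hybrid would force the per-step deviation to be exponentially small, which the concentration cannot support.
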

\begin{proof}
Recall that $d=\log (192\frac{1}{\delta^2}(\eta +s) T^2+2)$, and $\ket{x}$ is a computational basis state that $C^{\mathcal{U},\mathcal{Q}}$ takes as input. We define the following sequence of ``hybrids''. These are probability distributions, where the first is the output distribution of circuit $C^{\mathcal{U},\mathcal{Q}}(\ket{x})$, and the last is the output distribution of circuit $C'^{\mathcal{Q}}(\ket{x})$. We show that each two consecutive distributions are close. Let $x \in \{0,1\}^n$.
    \begin{enumerate}
        \item $H_1$: 
        $C^{\mathcal{U},\mathcal{Q}}(\ket{x})$. 
        \item $H_2$: $C_2(\ket{x})$, where $C_2$ is identical to $C^{\mathcal{U},\mathcal{Q}}$ except that, for all $\ell \in [d+1,s]$, $\mathcal{U}_{\ell}$ is replaced by a freshly sampled family of $2^{\ell}$ Haar random unitaries. 
        \item $H_3$: $C_3(\ket{x})$, where $C_3$ is sampled as follows. Let $A$ be the algorithm from Lemma~\ref{lem:tdescons} that samples from a phase invariant $\epsilon$-approximate  unitary $T$-design, where $\epsilon=\frac{\delta}{s2^{s}}.$ For $\ell \in [d+1,s]$, sample a function $g_{\ell}: \{0,1\}^{\ell}\rightarrow \{0,1\}^{m(\ell)}$ uniformly at random. $C_3$ is identical to $C_2$ except that, for $k \in \{0,1\}^\ell$, we replace queries to $U_{k \ell}$ with queries to $A(g_{\ell}(k))$.
        \item $H_4$: $C_4(\ket{x})$, where $C_4$ is sampled in the same way as $C_3$ except that we replace $g_\ell$ (which was previously sampled uniformly at random) with $f_\ell: \{0,1\}^{\ell} \rightarrow \{0,1\}^{m(\ell)}$, sampled from a $2T$-wise independent function family.
        
        \item $H_5$: $C_5(\ket{x})$, where $C_5$ is sampled in the same way as $C_4$ except that, for $\ell \in [d]$, queries to $\mathcal{U_{\ell}}$ are replaced by queries to $\Tilde{\mathcal{U_{\ell}}}$, obtained via the process tomography algorithm from Theorem \ref{cor:processtom} with parameters $\epsilon = \frac{\delta}{T}$, and $\mu = \frac1d e^{-2(\eta+s)}$. Note that this circuit is exactly $C'^{Q}(\ket{x})$.
    \end{enumerate}
     Let $f(\mathcal{U})= \Pr[C^{\mathcal{U,Q}}(\ket{x})=1]$. Lemma \ref{lem:lipshitz} implies that this function is $2T$-Lipshitz. Invoking the strong concentration of the Haar measure in Theorem~\ref{thm:conc} with $t=\delta$ and $L=2T$, we have that, for any standard basis input $\ket{x}$,
 $$
\Pr_{\mathcal{U}, \mathcal{U}'}[|\Pr[C^{\mathcal{U,Q}}(\ket{x})=1] -  \Pr[C^{\mathcal{U}', \mathcal{Q}}(\ket{x})=1]|\geq \delta]\leq \exp({-\frac{(2^d-2)\delta^2}{24(2T)^2}})   \leq e^{-2(\eta+s)} \,, $$
where the last inequality follows from the definition of $d$. By an averaging argument and a straightforward calculation, the latter implies that, in fact,
 $$
\Pr_{\mathcal{U}}\Big[\big|\Pr[C^{\mathcal{U,Q}}(\ket{x})=1] -  \Pr[C_2(\ket{x})=1]\big|\geq \delta + e^{-\eta-s}\Big]\leq  e^{-\eta-s} \, $$
(where the main difference from the previous expression is that the probability over $\mathcal{U}'$ has been absorbed inside $C_2$).

 
Now, $C_3$ replaces every unitary $U_{k \ell}$ for $\ell \in [d+1,s]$ and $k\in \{0,1\}^{\ell}$ with $A(g_{\ell}(k))$. Using Lemma~\ref{lem:tdesign}, the total change in acceptance probability is
 \[
 \sum_{\ell=d+1}^{s}\sum_{k \in \{0,1\}^{\ell}}\frac{\delta}{s2^{s}}\leq \delta \,.
 \]
 Thus, 
 \[
 \big|\Pr[C_3(\ket{x})=1] -\Pr[C_2(\ket{x})=1]\big|\leq \delta \,.
 \]
From \cite{Zha12}, we know that $T$ queries to a random function are perfectly indistinguishable from queries to a $2T$-wise independent family of functions. Thus, we have
 \[
 \Pr[C_4(\ket{x})=1]=\Pr[C_3(\ket{x})=1] \,.
 \]
From Theorem~\ref{cor:processtom}, for each $\ell \in [d]$, with probability at least $1- \frac{1}{d} e^{-2(\eta+s)}$ (over the randomness of the process tomography algorithm), we have that $\|\Tilde{\mathcal{U_{\ell}}}(\cdot) \Tilde{{\mathcal{U_{\ell}}}}^{\dag}-{\mathcal{U_{\ell}}}(\cdot )\mathcal{U_{\ell}}^{\dag} \|_{\diamond}\leq \frac{\delta}{T}$. Thus, by a union bound, with probability $\geq 1- e^{-2(\eta+s)}$ (over the randomness of the process tomography), we have that, for all $\ell \in [d]$,
$$\|\Tilde{\mathcal{U_{\ell}}}(\cdot) \Tilde{{\mathcal{U_{\ell}}}}^{\dag}-{\mathcal{U_{\ell}}}(\cdot )\mathcal{U_{\ell}}^{\dag} \|_{\diamond}\leq \frac{\delta}{T}\,.$$
Since $C_5$ and $C_4$ only make $T$ queries to $\mathcal{U}$, it follows, by triangle inequalities and Fact~\ref{fact}, that 
$$ \big| \Pr[C_5(\ket{x})=1]-\Pr[C_4(\ket{x})=1] \big| \leq \delta \,.$$
Adding up differences in acceptance probabilities (and adding up the probability losses) we get that, with probability at least $1-(e^{-\eta-s} + e^{-2(\eta+s)})$ over the randomness of $\mathcal{U}$, and the randomness in the process tomography (i.e. the randomness of $\textsf{Sim-Haar}_{\eta,\delta}$), 
$$ \Big|  \Pr[C^{\mathcal{U,Q}}(\ket{x})=1]-  \Pr[C'^{\mathcal{Q}}(\ket{x})=1] \Big|\leq 3 \delta + e^{-\frac{\eta}{2}} \,.
$$
Finally, taking a union bound over all standard basis inputs $\ket{x}$, we have that, with probability at least $1-2^s \cdot (e^{-\eta-s} + e^{-2(\eta+s)}) \geq 1 - 2 e^{-\eta} $ over the randomness of $\mathcal{U}$, and the randomness of $\textsf{Sim-Haar}_{\eta,\delta}$), for all standard basis inputs $\ket{x}$,
\begin{equation}
\label{eq:135}
\Big|  \Pr[C^{\mathcal{U,Q}}(\ket{x})=1]-  \Pr[C'^{\mathcal{Q}}(\ket{x})=1] \Big|\leq 3 \delta + e^{-\frac{\eta}{2}} \,,
\end{equation}
as desired.
\end{proof}

\subsection{An adversary breaking any QDS scheme relative to the oracle}
In this section, we prove Theorem \ref{thm:no-qds}. 
Concretely, we describe an adversary that, relative to $(\mathcal{U}, \mathcal{Q})$ (where $(\mathcal{U}, \mathcal{Q})$ is defined at the start of Section \ref{sec:sim-haar}), breaks any QDS scheme for messages of length $\ell(\lambda)$ for any $\ell$ such that $\ell(\lambda) \geq c \cdot \log(\lambda)$ for large enough $\lambda$. We show, for example, that one can take $c = 2$ (although our analysis is not tight).

We describe our adversary in Section~\ref{sec:adversary2}, and we provide the analysis in Section~\ref{sec:analysis2}. For a more informal overview see the technical overview (Sections \ref{sec:techoverview1} and \ref{sec:techoverview2}). 

\subsubsection{The adversary}
\label{sec:adversary2}

Let $(SKGen^{\mathcal{U, Q}}, \pkgen^{\mathcal{U, Q}}, \sign^{\mathcal{U, Q}}, Verify^{\mathcal{U, Q}})$ be a QDS scheme. We take the length of the secret key to be the security parameter $\lambda$.

$\mathcal{A}$ behaves as follows on input $1^{\lambda}$ (technically $\mathcal{A}$ also receives polynomially many copies of $\ket{pk}$, but it does not need them).
\begin{enumerate}
    \item Let $t= 40 \lambda$. $\mathcal{A}$ samples messages $m_1, \dots, m_t\leftarrow \mathcal{M}_{\lambda}$, and queries the challenger at these messages. The messages are sampled uniformly at random (possibly with repetitions) subject to the condition that $\bigcup_{i \in [t]} \{m_i\} \neq \mathcal{M}$. Let $\sigma_1,\dots, \sigma_t$ be the signatures returned by the challenger.
    \item Define the circuit $VerPKGen^{(\cdot)}(\cdot, \cdot, \cdot)$ to be such that 
    $$VerPKGen^{\mathcal{U, Q}}(sk,m,\sigma) := Verify^{\mathcal{U,Q}}(\pkgen(sk), m, \sigma)\,.$$ $\mathcal{A}$ obtains $VerPKGen'^{(\cdot)}(\cdot, \cdot, \cdot) \leftarrow \textsf{Sim-Haar}_{\lambda,\frac{1}{300}-e^{-\frac{\lambda}{2}}}^{\mathcal{U}, \mathcal{Q}}(VerPKGen)$, where $VerPKGen'^{(\cdot)}(\cdot, \cdot, \cdot)$ is a circuit that makes queries to $\mathcal{Q}$ (but not to $\mathcal{U}$). Here, as earlier, the notation $\textsf{Sim-Haar}_{\eta, \delta}$ refers to running algorithm $\textsf{Sim-Haar}$ (from Algorithm~\ref{sim-haar}) on the fixed inputs $\eta$ and $\delta$. Going forward, for ease of notation, we simply denote this circuit by $VerPKGen'$.
    \item Define the circuit $VerPKGenSign^{(\cdot)}(\cdot, \cdot, \cdot)$ to be such that 
    $$VerPKGenSign^{\mathcal{U, Q}}(sk,m, sk') := Verify^{\mathcal{U, Q}}(\pkgen(sk), m, \sign(sk',m)) \,.$$
    $\mathcal{A}$ obtains $VerPKGenSign'^{(\cdot)}(\cdot, \cdot, \cdot) \leftarrow \textsf{Sim-Haar}_{\lambda,\frac{1}{300}-e^{-\frac{\lambda}{2}}}^{\mathcal{U}, \mathcal{Q}}(VerPKGenSign)$. Note that $VerPKGenSign'^{(\cdot)}(\cdot, \cdot, \cdot)$ is a circuit that makes queries to $\mathcal{Q}$ (but not to $\mathcal{U}$). Going forward, for ease of notation, we simply denote this circuit by $VerPKGenSign'$.
 \item At this point, $\mathcal{A}$ invokes $\mathcal{Q}$. For simplicity, we will describe $\mathcal{Q}$'s behaviour as a probabilistic exponential time algorithm. However, formally, $\mathcal{Q}$ is a \emph{deterministic} function that, on input the instance of a fixed \textsf{EXP}-complete search problem, returns the solution. So, formally,
 \begin{itemize}
 \item $\mathcal{A}$ also provides the randomness as input to $\mathcal{Q}$ (and this is fine since the algorithm we describe only uses a polynomial length random string).
 \item $\mathcal{A}$ first computes a reduction from the search problem $P$ solved by the algorithm to the fixed \textsf{EXP}-complete problem, and maps the original input to the corresponding input according to the reduction.
 \end{itemize} 
From here on, we will not consider these two formalities. 

$\mathcal{A}$ provides $\{m_i, \sigma_i\}_{i \in [t]}$, $VerPKGen'$, $VerPKGenSign'$ as input to $\mathcal{Q}$, which returns a set $\mathsf{candidates}$ as in Algorithm \ref{algo:findsk} below.
\newpage
\begin{algorithm}
\caption{}
    \label{algo:findsk}
    \vspace{2mm}
    \textbf{Input}: $\{m_i, \sigma_i\}_{i \in [t]}$, $VerPKGen'$, $VerPKGenSign'$. 

    \begin{itemize}
    \item[1.] Initialize $\mathsf{Consistent} = \emptyset$. For $sk \in \{0,1\}^{\lambda}$:
    \begin{itemize}
    \item If $\Pr[VerPKGen'(sk, m_i, \sigma_i) = 1] \geq 9/10$ for all $i \in [t]$, update $\mathsf{Consistent} \leftarrow \mathsf{Consistent} \cup \{sk\}$\footnotemark\footnotetext{Recall that $VerPKGenSign'$ makes queries to $\mathcal{Q}$. Nonetheless, the problem of computing whether $\Pr[VerPKGenSign'(sk',m_i, \sigma_i)=1] \geq \frac{9}{10}$ can still be cast as an EXP problem: this is the problem of computing whether the magnitude squared of a particular entry of a vector, obtained by performing (exponentially-sized) matrices-vector multiplications, is $\geq\frac{9}{10}$. The problem can be cast in this way because each query to $\mathcal{Q}$ that the algorithm makes is a multiplication by an (exponential-sized) unitary that corresponds to solving $\mathcal{Q}$'s EXP-complete problem (on inputs of a certain polynomial size). Note that computing whether an entry of the resulting vector is greater or equal to some \emph{rational number} can be done deterministically.}.
    \end{itemize}
    \item[2.] Let $S_1 = \mathsf{Consistent}$, and $\mathsf{candidates} = \emptyset$. For $j \in [\lambda^2]$:
    \begin{itemize}
    \item Initialize $stingy_j = \emptyset$. For $sk \in S_j$:
    \begin{itemize}
    \item Let $friends_{sk} = \emptyset$.
    \item For each $sk' \neq sk  \in S_j$, do the following: 
    \begin{itemize}
    \item Count the number of $m \in \mathcal{M}$ such that $\Pr[VerPKGenSign'(sk, m, sk')=1] > \frac{1}{10}$. If this is at least $\frac{1}{10} \cdot |\mathcal{M}|$, update $friends_{sk} \leftarrow friends_{sk} \cup \{sk'\}$.
    \end{itemize}
    \item If $|friends_{sk}| \leq \frac12 \cdot |S_j|$, then $stingy_j \leftarrow stingy_j \cup \{sk\}$.
    \end{itemize}
    \item Initialize $goodSigner_j = \emptyset$. For $sk \in S_j$:
    \begin{itemize}
    \item Let $accept_{sk, j} = \emptyset$. 
     \item For $sk' \neq sk \in S_j$, do the following:
    \begin{itemize}
        \item Count the number of $m \in \mathcal{M}$ such that $\Pr[VerPKGenSign'(sk',m, sk)=1] \geq \frac{1}{10}$ (note that the role of $sk$ and $sk'$ is flipped compared to the definition of $friends_{sk}$). If this is at least $\frac{1}{10} \cdot |\mathcal{M}|$, update $accept_{sk, j} \leftarrow accept_{sk, j} \cup \{sk'\}$.
    \end{itemize}
    \item If $accept_{sk, j} = S_j\setminus\{sk\}$, update $$goodSigner_j \leftarrow  goodSigner_j \cup \{sk\}\,.$$
    \end{itemize}
    \item Let $S_{j+1} = stingy_j \cap goodSigner_j$. If $S_{j+1}=\emptyset$, halt and output $\mathsf{candidates}$. Otherwise, sample $sk \leftarrow S_{j+1}$. Update $\mathsf{candidates} \leftarrow \mathsf{candidates} \cup \{sk\}$.
    \end{itemize}
\item[3.] Output $\mathsf{candidates}$.
\end{itemize}
\end{algorithm}
\item Let $\mathcal{M}_{\text{queried}} = \bigcup_{i \in [t]} \{m_i\}$. $\mathcal{A}$ samples $sk \leftarrow \mathsf{candidates}$, $m \leftarrow \mathcal{M}\setminus \mathcal{M}_{\text{queried}}$, and runs \\$\sigma \leftarrow \sign^\mathcal{U, Q}(sk, m)$. $\mathcal{A}$ outputs $(m,\sigma)$.
\end{enumerate}

\subsubsection{The analysis}
\label{sec:analysis2}
Our adversary from Section \ref{sec:adversary2} invokes the procedure \textsf{Sim-Haar} from Section \ref{sec:sim-haar} twice: the first time, to obtain the circuit $VerPKGen' \gets \textsf{Sim-Haar}_{\lambda,\frac{1}{300}-e^{-\frac{\lambda}{2}}}^{\mathcal{U,Q}}(VerPKGen)$; the second time, to obtain the circuit $VerPKGenSign' \gets \textsf{Sim-Haar}_{\lambda,\frac{1}{300}-e^{-\frac{\lambda}{2}}}^{\mathcal{U,Q}}(VerPKGenSign)$. 

Recall that $VerPKGen'$ and $VerPKGenSign'$ are circuits that make queries to $\mathcal{Q}$ (but not to $\mathcal{U}$). Since $\mathcal{Q}$ is fixed, we will omit writing $\mathcal{Q}$ for the rest of the section, for ease of notation. By Lemma \ref{lem:sim-haar}, we have that, with probability at least $1-4\cdot e^{-\lambda}$ over $\mathcal{U}$ and the randomness of the two executions of \textsf{Sim-Haar}, it holds that for all $sk,sk', m,\sigma$:
\begin{align}
&\Big|\Pr[VerPKGen'(sk,m,\sigma) = 1] - \Pr[VerPKGen^{\mathcal{U}}(sk,m,\sigma) = 1] \Big| \leq \frac{1}{100} \quad \textnormal{ and } \label{eq:16} \\
&\Big| \Pr[VerPKGenSign'(sk,m,sk') = 1] - \Pr[VerPKGenSign^{\mathcal{U}}(sk,m,sk') = 1] \Big| \leq \frac{1}{100} \,. \label{eq:17}
\end{align}

For the rest of the analysis, we fix a $\mathcal{U}$ and circuits $VerPKGen'$ and $VerPKGenSign'$ such that \eqref{eq:16} and \eqref{eq:17} hold.

Now, let $sk^*$ denote the true secret key sampled by the generation procedure. For $sk \in \{0,1\}^{\lambda}$, let 
\begin{equation*}
p_{sk^*,sk} := \Pr_{m \leftarrow \mathcal{M}}\big[VerPKGen'(sk, m, \sign^{\mathcal{U}}(m, sk^*))=1\big]
\end{equation*}
By construction, if $sk \in \mathsf{Consistent}$, then $\Pr[VerPKGen'(sk,m_i,\sigma_i)] \geq \frac{9}{10}$ for all $i \in [t]$ (where the $(m_i, \sigma_i)$ are the message, signature pairs obtained by $\mathcal{A}$).

First, we show that, with overwhelming probability over the adversary's queries to the challenger, the set $\mathsf{Consistent}$ does not contain any $sk$ such that $p_{sk^*, sk} < \frac14$. Formally, let $sk$ be such that $p_{sk^*,sk} < \frac14$. Then, by an averaging argument, there must be a fraction $> \frac23$ of $m \in \mathcal{M}$ such that 
\begin{equation}
\label{eq:120}
\Pr[VerPKGen'(sk,m,\sign^{\mathcal{U}}(m, sk^*))=1] < 3/4 \,.
\end{equation}
By a further averaging argument, for those $m$ such that Equation \eqref{eq:120} holds, we have that with probability at least $1/9$ over $\sigma \leftarrow \sign^{\mathcal{U}}(m, sk^*)$, it must be that 
\begin{equation}
\Pr[VerPKGen'(sk,m,\sigma) = 1] < 9/10 \,.
\end{equation}
Thus, overall, if $p_{sk^*, sk} < \frac14$, then with probability at least $\frac23 \cdot \frac19 = \frac{2}{27}$ over $m \leftarrow \mathcal{M}$ and $\sigma \leftarrow \sign^\mathcal{U}(sk^*, m)$, we have
\begin{equation}
\Pr[VerPKGen'(sk,m,\sigma) = 1] < 9/10 \,.
\end{equation}
The above implies that, for any subset of size $\mathcal{M}'\subseteq \mathcal{M}_{\lambda}$ of size $|\mathcal{M}_{\lambda}|-1$, we have that, with probability at least $\frac13 \cdot \frac19 = \frac{1}{27}$ over $m \gets \mathcal{M}'$ and $\sigma \leftarrow \sign^\mathcal{U}(sk^*, m)$,\footnote{This last step is a bit of technicality. It is to deal with the fact that $\mathcal{A}$ samples the messages $m_i$ subject to the constraint that their union is not the whole of $\mathcal{M}$. The probability loss that we incur here (from $\frac23$ to $\frac13$) is a bit loose, and happens only when $\mathcal{M}_{\lambda}$ is very small (e.g.\ small constant size). When $\mathcal{M}_{\lambda}$ is exponentially large in $\lambda$, the loss is much smaller.}
\begin{equation*}
\Pr[VerPKGen'(sk,m,\sigma) = 1] < 9/10 \,. \end{equation*}
Now, recall that the condition by which Algorithm \ref{algo:findsk} places $sk$ in the set $\mathsf{Consistent}$ is precisely that, for all $i \in [t]$,
$$\Pr[VerPKGen'(sk, m_i, \sigma_i) = 1] \geq 9/10 \,.$$
Thus, the above implies that, if $p_{sk^*, sk} < \frac14$,
\begin{equation*}
    \Pr\Big[sk \in \mathsf{Consistent}\Big] \leq \left(\frac{26}{27}\right)^{t} =  \left(\frac{26}{27}\right)^{40\lambda}  \leq 2^{-2\lambda} \,,
\end{equation*}
for sufficiently large $\lambda$. Note that, crucially, this step of the proof relies on the fact that $t$ is a sufficiently large polynomial (and thus this step does not go through in the setting of one-time security).

Now, by a union bound, 
\begin{equation}
\label{eq:14}
\Pr\left[\exists sk \in \mathsf{Consistent} \textnormal{ such that } p_{sk^*,sk} <  \frac14\right] \leq 2^\lambda \cdot 2^{-2\lambda} = 2^{-\lambda} \,. 
\end{equation}
From now on, we will restrict our analysis to the case where, for all $sk \in \mathsf{Consistent}$, $p_{sk^*,sk} \geq  \frac14$, which by Equation \eqref{eq:14} happens at least with probability $1- 2^{-\lambda}$. Recall that, by definition of $p_{sk^*,sk}$, this means that
\begin{equation}
\label{eq:105}
\Pr_{m \leftarrow \mathcal{M}}\big[VerPKGen'(sk, m, \sign^{\mathcal{U}}(m, sk^*))=1\big] \geq \frac14 \,.
\end{equation}
Crucially, notice that the above guarantee is in terms of the circuit $VerPKGen'$, and it involves the honest signing procedure relative to oracle $\mathcal{U}$. However, parts (ii) and (iii) of Algorithm \ref{algo:findsk} instead involve the circuit $VerPKGenSign'$ (and no oracle $\mathcal{U}$). For the analysis above to fit with the subsequent analysis that we will derive from parts (ii) and (iii), we instead require a lower bound on $\Pr_{m \leftarrow \mathcal{M}}\big[VerPKGenSign'(sk, m, sk^*)=1\big]$. We show how to obtain such a lower bound from \eqref{eq:105}, with a (not too large) constant loss.


Notice that we have
\begin{align}
& \Pr[VerPKGenSign^{\mathcal{U}}(sk,m, sk^*) = 1]\\
&=\Pr[VerPKGen^{\mathcal{U}}(sk,m, \sign^{\mathcal{U}}(m,sk^*)  ) = 1] \\ &= \sum_{\sigma} \Pr[\sign^{\mathcal{U}}(m,sk^*) = \sigma]\cdot  \Pr[VerPKGen^{\mathcal{U}}(sk,m, \sigma) = 1] \\
&\geq \sum_{\sigma} \Pr[\sign^{\mathcal{U}}(m,sk^*) = \sigma]\cdot \left(\Pr[VerPKGen'(sk,m, \sigma) = 1] - \frac{1}{100}\right) \\
&= \sum_{\sigma} \Pr[\sign^{\mathcal{U}}(m,sk^*) = \sigma]\cdot \Pr[VerPKGen'(sk,m, \sigma) = 1] - \frac{1}{100} \\
&= \Pr[VerPKGen'(sk,m, \sign^{\mathcal{U}}(m, sk^*)) = 1] - \frac{1}{100} \\
&\geq \frac14-\frac{1}{100} = \frac{6}{25} \label{eq:109}\,, 
\end{align}
where the first inequality is by Equation \eqref{eq:16}, and the second inequality is by Equation \eqref{eq:105}.

Combining \eqref{eq:109} with \eqref{eq:17} via a triangle inequality gives
\begin{equation}
    \Pr_{m \leftarrow \mathcal{M}}[VerPKGenSign'(sk,m, sk^*) = 1] \geq \frac{6}{25} - \frac{1}{100} = \frac{23}{100} \,.
\end{equation}


To summarize, what we have established so far is that, with probability at least $1-2^{-\lambda}-4 \cdot e^{-\lambda}$ over the adversary's queries to the challenger (as well as over $\mathcal{U}$ and the randomness in the executions of \textsf{Sim-Haar}), if $sk \in \textsf{Consistent}$, then 
\begin{equation}
\Pr_{m \leftarrow \mathcal{M}}[VerPKGenSign'(sk,m,sk^*) = 1] \geq \frac{23}{100} \,.
\end{equation}

By an averaging argument, there must exist at least a $\frac{1}{10}$ fraction of $m$'s in $\mathcal{M}_{\lambda}$ such that 
\begin{equation}
\label{eq:110}
\Pr[VerPKGenSign'(sk,m,sk^*) = 1] \geq \frac{1}{10}  \,.
\end{equation}

Next, we establish the following.
\begin{lemma}
There is a negligible function $\negl$ such that $sk^* \in \mathsf{Consistent}$ with probability $1-\negl(\lambda)$ over the sampling of $sk^*$, the adversary's queries and challenger's responses, the sampling of $\mathcal{U}$, and the randomness of the two executions of $\mathsf{Sim}$-$\mathsf{Haar}$.
\end{lemma}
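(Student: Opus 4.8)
The plan is to show that the true secret key $sk^*$ passes, with room to spare, the threshold test that defines membership in $\mathsf{Consistent}$ in Algorithm~\ref{algo:findsk}. Recall this test is: $\Pr[VerPKGen'(sk, m_i, \sigma_i) = 1] \geq \frac{9}{10}$ for every queried pair $(m_i, \sigma_i)$. The main ingredients are the correctness property of the QDS scheme (Definition~\ref{def:qds}) together with the closeness guarantee between $VerPKGen'$ and $VerPKGen^{\mathcal{U}}$ from Equation~\eqref{eq:16}, which we have already conditioned to hold.

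First I would invoke correctness: except with negligible probability over $sk^* \gets \skgen(1^{\lambda})$, for every $m \in \mathcal{M}_{\lambda}$ the honestly generated signature is accepted by $Verify^{\mathcal{U}}(\pkgen(sk^*), m, \cdot)$ with probability $1 - \negl(\lambda)$, where the probability is over the randomness of $\sign^{\mathcal{U}}$ and of $Verify^{\mathcal{U}}$. The one subtlety — and essentially the only place where there is anything to do — is that membership in $\mathsf{Consistent}$ depends on the \emph{specific} signature string $\sigma_i$ returned by the challenger, not on an average over signing randomness. To bridge this I would use a Markov/averaging argument: since the expectation over $\sigma \gets \sign^{\mathcal{U}}(m, sk^*)$ of $\Pr[VerPKGen^{\mathcal{U}}(sk^*, m, \sigma) = 1]$ is at least $1 - \negl(\lambda)$, for all but an $O(\negl(\lambda))$ fraction of $\sigma$ in the support of $\sign^{\mathcal{U}}(m, sk^*)$ we have $\Pr[VerPKGen^{\mathcal{U}}(sk^*, m, \sigma) = 1] \geq \frac{99}{100}$. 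For such $\sigma$, Equation~\eqref{eq:16} then gives $\Pr[VerPKGen'(sk^*, m, \sigma) = 1] \geq \frac{99}{100} - \frac{1}{100} = \frac{98}{100} > \frac{9}{10}$.

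Next, since each $\sigma_i$ returned by the challenger is by definition drawn from $\sign^{\mathcal{U}}(m_i, sk^*)$, the above says that for each fixed $i \in [t]$ the pair $(m_i, \sigma_i)$ passes the threshold test for $sk^*$ except with probability $O(\negl(\lambda))$; a union bound over the $t = 40\lambda$ queries then shows all of them pass simultaneously except with probability $t \cdot O(\negl(\lambda))$, which is still negligible. Putting the pieces together and adding back the negligible failure probabilities we conditioned away — the $\negl(\lambda)$ chance that correctness fails for the sampled $sk^*$, and the $4e^{-\lambda}$ chance that Equations~\eqref{eq:16} and \eqref{eq:17} fail over the sampling of $\mathcal{U}$ and the randomness of the two executions of $\textsf{Sim-Haar}$ — we conclude that $sk^* \in \mathsf{Consistent}$ with probability $1 - \negl(\lambda)$. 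I do not expect any real obstacle here: the only thing to be careful about is the Markov step that passes from ``correctness in expectation over signing randomness'' to ``the concrete challenger signature $\sigma_i$ is good'', and keeping track of constants so that $\frac{9}{10}$ stays comfortably below $1 - \negl(\lambda) - \frac{1}{100}$.
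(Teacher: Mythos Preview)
Your proposal is correct and follows essentially the same approach as the paper's proof: invoke the correctness property of the QDS scheme, pass from an expectation over signing randomness to a pointwise guarantee on the specific $\sigma_i$ via a Markov/averaging argument, apply Equation~\eqref{eq:16} to transfer the guarantee from $VerPKGen^{\mathcal{U}}$ to $VerPKGen'$, and take a union bound over the $t$ queries. The only cosmetic difference is that the paper applies Equation~\eqref{eq:16} before the Markov step rather than after, but this ordering is immaterial and, if anything, your version keeps track of the constants a bit more carefully.
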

\begin{proof}
By the correctness property of the QDS scheme (Definition \ref{def:qds}), we know that there exists a negligible function $\negl'$ such that, for all $\lambda$, the following holds with probability $1-\negl'(\lambda)$ over the sampling of $sk^*$: for all $m \in \mathcal{M}_{\lambda}$,
$$ \Pr[\textit{Verify}^{\mathcal{U}}(\textit{PKGen}(sk^*), Sign(sk^*, m)) = 1] \geq 1 - \negl'(\lambda) \,. $$
Equivalently, for all $m \in \mathcal{M}_{\lambda}$, 
$$ \Pr[\textit{VerPKGen}^{\mathcal{U}}(sk^*, m, Sign(sk^*, m)) = 1] \geq 1 - \negl'(\lambda) \,, $$
which, using Equation \eqref{eq:16}, implies that, for all $m \in \mathcal{M}_{\lambda}$,
$$ \Pr[\textit{VerPKGen}'(sk^*, m, Sign(sk^*, m)) = 1] \geq 1 - \negl'(\lambda)$$
(where the probability is also over the randomness of the signing procedure).

This immediately implies that, for all $m \in \mathcal{M}_{\lambda}$, with probability $1-\negl'(\lambda)$ over $\sigma \gets Sign(sk^*, m)$, 
$$ \Pr[\textit{VerPKGen}'(sk^*, m, \sigma) = 1] \geq 1 - \negl'(\lambda) \,.$$

By a union bound, this implies that with probability $1- t \cdot \negl'(\lambda)$ over $\mathcal{A}$'s queries and the challenger's responses $(m_i, \sigma_i)$, for all $i$,
$$ \Pr[\textit{VerPKGen}'(sk^*, m_i, \sigma_i) = 1] \geq 1 - \negl'(\lambda) \geq \frac{9}{10} \,,$$
for large enough $\lambda$. This implies that, for some other negligible function $\negl$, we have that $sk^* \in \mathsf{Consistent}$ with probability $1-\negl(\lambda)$ over the sampling of $sk^*$, the adversary's queries and challenger's responses, as well as the randomness in $\mathcal{U}$ and in the two executions of $\textsf{Sim-Haar}$.
\end{proof}

The following lemma two lemmas are a crucial part of the analysis. Both lemmas, as well as the remainder of the analysis, hold with probability at least $1-\negl(\lambda)$ over the sampling of $sk^*$, the adversary's queries and challenger's responses, as well as the randomness of $\mathcal{U}$ and in the two executions of $\textsf{Sim-Haar}$.
\begin{lemma}
\label{lem:goodsigner}
$sk^* \in goodSigner_j$ for all $j \in [\lambda^2]$.
\end{lemma}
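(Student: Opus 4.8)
The plan is to show that $sk^*$ is placed into $goodSigner_j$ at every stage $j$, which by the definition in Algorithm \ref{algo:findsk} requires establishing that $accept_{sk^*, j} = S_j \setminus \{sk^*\}$. In other words, I need to show that for every $sk' \in S_j$ with $sk' \neq sk^*$, the count of messages $m \in \mathcal{M}$ satisfying $\Pr[VerPKGenSign'(sk', m, sk^*) = 1] \geq \frac{1}{10}$ is at least $\frac{1}{10} \cdot |\mathcal{M}|$. The key point is that $S_j \subseteq \mathsf{Consistent}$ for all $j$ (this is immediate by induction, since $S_1 = \mathsf{Consistent}$ and $S_{j+1} = stingy_j \cap goodSigner_j \subseteq S_j$). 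So it suffices to prove the claim for every $sk' \in \mathsf{Consistent}$.

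The work is then already done by the analysis preceding the lemma. First I would invoke the high-probability event established above: with probability $1 - \negl(\lambda)$ over all the relevant randomness (the sampling of $sk^*$, the adversary's queries and challenger's responses, the sampling of $\mathcal{U}$, and the randomness of the two $\textsf{Sim-Haar}$ executions), it holds that \eqref{eq:16} and \eqref{eq:17} are satisfied, and moreover every $sk \in \mathsf{Consistent}$ satisfies $p_{sk^*, sk} \geq \frac14$, which through the chain of inequalities culminating in \eqref{eq:109} and the triangle inequality with \eqref{eq:17} yields
\begin{equation*}
\Pr_{m \leftarrow \mathcal{M}}[VerPKGenSign'(sk, m, sk^*) = 1] \geq \frac{23}{100}
\end{equation*}
for all $sk \in \mathsf{Consistent}$. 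Then the averaging argument recorded in \eqref{eq:110} shows that for each such $sk$ there are at least $\frac{1}{10} |\mathcal{M}_\lambda|$ messages $m$ with $\Pr[VerPKGenSign'(sk, m, sk^*) = 1] \geq \frac{1}{10}$. Applying this with $sk = sk'$ ranging over $S_j \setminus \{sk^*\} \subseteq \mathsf{Consistent} \setminus \{sk^*\}$ gives exactly the condition that puts $sk'$ into $accept_{sk^*, j}$. Hence $accept_{sk^*, j} = S_j \setminus \{sk^*\}$, so $sk^* \in goodSigner_j$. Since this holds simultaneously for all $j \in [\lambda^2]$ (the underlying high-probability event is a single event, independent of $j$), the lemma follows.

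I do not anticipate a genuine obstacle here: the lemma is essentially a repackaging of inequality \eqref{eq:110} together with the observation $S_j \subseteq \mathsf{Consistent}$. The only point requiring a little care is to make sure the quantification is in the right order — namely that the $\frac{1}{10}$-fraction-of-messages guarantee holds uniformly over all $sk \in \mathsf{Consistent}$ on the same high-probability event, rather than separately for each $sk$ (which it does, since \eqref{eq:16}, \eqref{eq:17}, and the bound $p_{sk^*, sk} \geq \frac14$ were all established to hold for \emph{all} $sk$ simultaneously via the union bound in \eqref{eq:14}). It is also worth noting explicitly, as the surrounding text does, that this is precisely where polynomially many signing queries $t = 40\lambda$ are used — the bound $p_{sk^*, sk} \geq \frac14$ for all consistent $sk$ fails for $t = O(1)$.
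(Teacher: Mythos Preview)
Your proposal is correct and follows essentially the same approach as the paper's proof: both argue that $S_j \subseteq \mathsf{Consistent}$ for all $j$, invoke \eqref{eq:110} to conclude that every $sk' \in S_j \setminus \{sk^*\}$ lands in $accept_{sk^*,j}$, and hence $sk^* \in goodSigner_j$. Your write-up is more detailed (spelling out the induction for the containment, the uniformity of the high-probability event, and the role of the $t = 40\lambda$ queries), but the substance is identical to the paper's three-line argument.
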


\begin{proof}
Since, by definition, $S_j \subseteq \mathsf{Consistent}$ for all $j$, and every $sk \in \mathsf{Consistent}$ satisfies Equation \eqref{eq:110}, it follows that $accept_{sk^*, j} = S_j \setminus \{sk^*\}$ for all $j$. Therefore, by definition of the set $goodSigner_j$, $sk^* \in goodSigner_j$ for all $j$. Note that this holds even when $S_j = \{sk^*\}$.
\end{proof}

The next key step of the analysis is to show that the set $S_j$ ``shrinks'' very quickly, and hence it can only contain a single element or be empty at the last iteration ($j = \lambda^2$).

\begin{lemma}
\label{lem:S}
For all $j$, either $|S_{j+1}| \leq \frac{9}{10} |S_j|$ or $|S_j| = 1$.
\end{lemma}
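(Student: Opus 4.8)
The plan is to phrase everything in terms of a single directed ``friendship'' graph on the vertex set $S_j$: put an edge $sk \to sk'$ precisely when $sk' \in friends_{sk}$, i.e.\ when the verification key $\pkgen(sk)$ accepts signatures produced by $sk'$ on at least a $\tfrac{1}{10}$-fraction of the message space. With this convention, $accept_{sk,j}$ is exactly the set of in-neighbours of $sk$ (here I use that the thresholds in the definitions of $friends$ and $accept$ are matched, so that $sk' \in accept_{sk,j} \iff sk \in friends_{sk'}$). Then the two sets cutting down $S_j$ acquire a clean graph-theoretic meaning: $sk \in stingy_j$ says that $sk$ has \emph{out-degree} at most $|S_j|/2$, whereas $sk \in goodSigner_j$ says that $sk$ has the \emph{maximum possible in-degree}, namely $|S_j|-1$ (every other vertex of $S_j$ points to it). These are opposing constraints, and I will extract the shrinkage from the tension between them.

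Concretely, I would double-count the edges of the subgraph induced on $S_{j+1}$. On one hand, every $sk' \in S_{j+1} \subseteq goodSigner_j$ has in-degree $|S_j|-1$, so in particular every other vertex of $S_{j+1}$ points to it; hence the induced subgraph on $S_{j+1}$ is the complete directed graph, with exactly $|S_{j+1}|(|S_{j+1}|-1)$ edges. On the other hand, every $sk \in S_{j+1} \subseteq stingy_j$ has total out-degree at most $|S_j|/2$, so the number of edges with tail in $S_{j+1}$ --- which in particular includes all edges of the induced subgraph --- is at most $|S_{j+1}| \cdot |S_j|/2$. Combining, $|S_{j+1}|(|S_{j+1}|-1) \le |S_{j+1}| \cdot |S_j|/2$, and (assuming $|S_{j+1}| \ge 1$, else there is nothing to prove) dividing by $|S_{j+1}|$ yields $|S_{j+1}| \le |S_j|/2 + 1$.

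It then remains to turn $|S_{j+1}| \le |S_j|/2 + 1$ into the stated bound. When $|S_j| \ge 3$ we have $|S_j|/2 + 1 \le |S_j|/2 + |S_j|/3 = \tfrac{5}{6}|S_j| \le \tfrac{9}{10}|S_j|$, as desired. When $|S_j| = 1$ we are in the second alternative of the statement. The only remaining case, $|S_j| = 2$, I would handle by inspecting the definitions directly: there $stingy_j = S_j$ automatically, so $S_{j+1} = goodSigner_j$, and a short check of the (matched-threshold) constraints shows the two elements cannot both survive, giving $|S_{j+1}| \le 1$.

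I expect the main work to be conceptual rather than computational: the crux is recognizing that $goodSigner_j$ and $stingy_j$ impose, respectively, a near-maximal in-degree and a bounded out-degree, so that a vertex surviving into $S_{j+1}$ is simultaneously ``pointed to by almost everyone'' and ``points to at most half of everyone'' --- which can only happen for few vertices. Once this is in place the double-counting is routine, modulo being careful that the thresholds in $friends_{sk}$ and $accept_{sk,j}$ coincide (so the in-neighbour/$accept$ identification is exact) and modulo the tiny case analysis for $|S_j| \le 2$; as noted in the definitions, none of the precise constants matter.
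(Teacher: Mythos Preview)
Your approach is essentially the paper's: both arguments double-count ordered pairs $(sk,sk')$ for which the acceptance predicate holds, and extract a contradiction from the tension between ``large in-degree'' (from $goodSigner_j$) and ``small out-degree'' (from $stingy_j$). Your phrasing via a directed graph is cleaner and yields the explicit bound $|S_{j+1}|\le \tfrac12|S_j|+1$, whereas the paper argues by contradiction with somewhat looser bookkeeping; but the underlying idea is the same.

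There is, however, a real gap in your treatment of $|S_j|=2$. Your ``short check'' that the two elements cannot both survive is incorrect. Take $S_j=\{a,b\}$ with both edges $a\to b$ and $b\to a$ present (each key's public key accepts the other's signatures on at least a $\tfrac{1}{10}$-fraction of messages with probability at least $\tfrac{1}{10}$). Then both $a$ and $b$ have in-degree $1=|S_j|-1$, so both lie in $goodSigner_j$; and both have out-degree $1\le \tfrac12|S_j|$, so both lie in $stingy_j$. Hence $S_{j+1}=S_j$, and neither alternative of the lemma holds. With the thresholds exactly as written in Algorithm~\ref{algo:findsk} (the $stingy$ condition is $|friends_{sk}|\le \tfrac12|S_j|$, not $<\tfrac12|S_j|$ or $\le\tfrac12(|S_j|-1)$), this two-vertex complete digraph is a genuine counterexample. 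The paper's own proof has the same blind spot: in the stingy count it tacitly replaces ``$\le\tfrac12|S_j|$'' by ``at most a $\tfrac12$ fraction of $sk'\neq sk$'', i.e.\ $\le\tfrac12(|S_j|-1)$, which is what makes its arithmetic go through for $|S_j|=2$. The repair is cosmetic (tighten the $stingy$ threshold to be strict, or weaken the disjunct to $|S_j|\le 2$) and does not affect the downstream use of the lemma, but your ``short check'' as stated does not establish $|S_{j+1}|\le 1$.
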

\begin{proof}
Suppose $|S_j| >1$, and suppose for a contradiction that $|S_{j+1}| > \frac{9}{10} |S_j|$. Then since, by definition, $S_{j+1} = stingy_j \cap goodSigner_j$, in particular this implies that $|goodSigner_j|> \frac{9}{10} |S_j|$ and $|stingy_j| > \frac{9}{10} |S_j|$.

To help clarity in the argument, for secret keys $sk,sk'$, we define $P_{sk,sk'}$ to be the predicate ``there exist at least a $\frac{1}{10}$ fraction of $m$ such that $\Pr[VerPKGenSign'(sk',m,sk) = 1] \geq \frac{1}{10}$.''

Now, for any $sk \in goodSigner_j$, we have that, by construction, $accept_{sk, j}= S_j\setminus\{sk\}$. This means precisely that $P_{sk,sk'}$ is true for all $sk' \neq sk \in S_j$. Morevoer, since by assumption $|goodSigner_j|> \frac{9}{10} |S_j|$, then there must be at least $\frac{9}{10} \cdot |S_j| \cdot (|S_j|-1)$ pairs $(sk,sk')$ (with $sk \neq sk'$) such that $P_{sk, sk'}$ is true.


On the other hand, for any $sk \in stingy_j$, we have that, by construction, $|friends_{sk}| \leq \frac12 |S_j|$. This means precisely that there is at most a $\frac12$ fraction of $sk' \neq sk \in S_j$ such that $P_{sk', sk}$ is true. Since, by assumption, $|stingy_j|> \frac{9}{10} |S_j|$, it must be that the number of pairs $(sk,sk')$ (with $sk \neq sk'$) such that $P_{sk', sk}$ is true is at most $\frac{9}{10}\cdot \frac{|S_j| \cdot(|S_j|-1)}{2} + \frac{|S_j| (|S_j -1)}{10} <  \frac{9}{10} \cdot |S_j| \cdot (|S_j|-1) $. The order of $sk$ and $sk'$ is flipped in $P_{sk', sk}$ compared to earlier, but of course this does not matter, and, equivalently, we have that the number of pairs $(sk,sk')$ (with $sk \neq sk'$) such that $P_{sk, sk'}$ is true is $<\frac{9}{10} \cdot |S_j| \cdot (|S_j|-1) $.  This contradicts the earlier statement (note that there is no contradiction when  $|S_j| = 1$, since the RHS would be zero, but we assumed $|S_j| >1$).

\end{proof}

With Lemmas \ref{lem:goodsigner} and \ref{lem:S} in hand, it is straightforward to complete the proof.

\vspace{2mm}
\noindent \textbf{Case 1:} Suppose $sk^* \notin S_{j+1}$ for some $j$. Let $j$ denote the first such index. This choice of $j$ implies that $sk^* \in S_j$ and $sk^* \notin S_{j+1}$ (note that $sk^* \in S_1$, so this implication is valid). 

Recall that $S_{j+1} =  stingy_j \cap goodSigner_j$. Then, since $sk^* \in goodSigner_j$ by Lemma \ref{lem:goodsigner}, it must be that $sk^* \notin stingy_j$. This implies that $$|friends_{sk^*}| > \frac12 \cdot |S_j|\,.$$ 
By Equation \eqref{eq:16}, for all $m, sk$,
\begin{equation}
\label{eq:102}
\Pr[VerPKGenSign^\mathcal{U}(sk^*,m,sk)= 1] > \Pr[VerPKGenSign'(sk^*, m, sk) = 1] -\frac{1}{100} \,.
\end{equation}
Additionally, for any $sk \in friends_{sk^*}$, we know that, by definition,
$$\Big|\big\{m: \Pr[VerPKGenSign'(sk^*,m,sk) =1] \geq \frac{1}{10}\big\}\Big| \geq \frac{1}{10} \cdot |\mathcal{M}_{\lambda}| \,. $$
Denote the set on the LHS by $\mathcal{M}_{good}(sk)$. We have
\begin{align}
&\Pr[VerPKGenSign^{\mathcal{U}}(sk^*, m, sk) = 1 \,\,\land \,\, m \notin \bigcup_{i \in [t]} \{m_i\}: sk \leftarrow S_j, m \leftarrow \mathcal{M_{\lambda}}] \\
&\geq  \Pr[sk \in friends_{sk^*}: sk \leftarrow S_j] \cdot \Pr_{m \leftarrow \mathcal{M}_{\lambda}}[m \in \mathcal{M}_{good}(sk) \,\,\land \,\, m \notin \bigcup_{i \in [t]} \{m_i\} \,|\,sk \in friends_{sk^*}] \label{eq:124}\\
&\quad\cdot \Pr[VerPKGenSign^{\mathcal{U}}(sk^*, m, sk) = 1 | sk \in friends_{sk^*},\, m \in \mathcal{M}_{good}] \nonumber\\
&\geq \frac12 \cdot \frac{\frac{1}{10}|\mathcal{M}_{\lambda}|-t}{|\mathcal{M}_{\lambda}|}\cdot \left(\frac{1}{10}- \frac{1}{100}\right)  \\
&\geq \frac12 \cdot (\frac{1}{10} - \frac{t}{|\mathcal{M}_{\lambda}|}) \cdot \left(\frac{1}{10}- \frac{1}{100}\right)  \\
&\geq \frac12 \cdot \frac{9}{100 }\cdot \frac{9}{100}
= \frac{81}{20000} \,, 
\label{eq:119}
\end{align}
where in the last line we used the fact that messages have length $\geq 2 \log(\lambda)$ for large enough $\lambda$, which means that $|\mathcal{M}_{\lambda}| \geq \lambda^2 \geq 100 t$, for large enough $\lambda$ (recall that $t = 40\lambda$).

Finally, note that Algorithm \ref{algo:findsk} places an element $sk \leftarrow S_j$ in the set $\mathsf{candidates}$. Hence, $\mathcal{A}$ samples this element with probability at least $\frac{1}{\lambda^2}$ in step $5$. By Equation \eqref{eq:119}, $\mathcal{A}$ wins the unforgeability game with probability at least $\frac{1}{\lambda^2}\cdot \frac{81}{20000}$.

\vspace{2mm}
\noindent \textbf{Case 2:}
Suppose $sk^* \in S_j$ for all $j$. Now, suppose for a contradiction that, for all $j$, $|S_{j+1}| \leq \frac{9}{10} |S_j|$. Then, since $|S_1|\leq 2^{\lambda}$, we would have $S_{\lambda^2} \leq (\frac{9}{10})^{\lambda^2} \cdot 2^{\lambda} <1$ (for $\lambda >1$). This would imply $|S_{\lambda^2}| = \emptyset$, which contradicts the hypothesis that $sk^* \in S_j$ for all $j$. Thus, by Lemma \ref{lem:S}, it must be that $|S_j| = 1$ for some $j$, and thus $S_j = \{sk^*\}$. This implies that, by the end of the algorithm, $sk^* \in \mathsf{candidates}$. Hence, $\mathcal{A}$ will sample $sk^*$ with probability at least $\frac{1}{\lambda^2}$ in step $5$, and thus $\mathcal{A}$ will win the unforgeability game with probability at least $\frac{1}{\lambda^2}\cdot (1-\negl'(\lambda))$, for some negligible function $\negl'$, by the correctness property of the QDS scheme.

\begin{remark}
\label{rem:technicality1}
When the message space is very small, e.g.\ constant size, the attack does not work as is. The reason is that, at a high level, the adversary is not able to gain very much from the polynomially many queries that it can make to the challenger. For example, if $\mathcal{M}_{\lambda} = \{0,1\}$, i.e.\ the message space is just a single bit, then the adversary can only ask for (polynomially many) signatures of a \emph{single} message, but not of the other one (since the unforgeability game requires the adversary to produce a signature of an unqueried message). In the setting of a large $\mathcal{M}_{\lambda}$, the adversaries' polynomially many queries are sufficient to guarantee, with high statistical confidence, that any secret key that is included in the set $\mathsf{Consistent}$, i.e.\ it is ``consistent'' with all of the $(m_i, \sigma_i)$ pairs, will also be ``consistent'' with freshly sampled message-signature pairs. In the case of a $\mathcal{M}_{\lambda} = \{0,1\}$ such a statistical guarantee does not hold. The specific step of the proof that makes use of the size of $\mathcal{M}_{\lambda}$ is Equation \eqref{eq:119}. 

As mentioned earlier, we can still rule out QDS with a small-size $\mathcal{M}_{\lambda}$ with a slightly different unforgeability definition than Definition \ref{def:polytimesec}. In this definition, for the adversary to succeed, it suffices to produce a valid signature that has not been previously produced by the challenger (even if this is the signature of a previously queried message). In this setting, since the adversary can succeed by signing a message that has been signed before, the polynomially many $(m_i, \sigma_i)$ that the adversary obtains provide a useful statistical guarantee, even if all of the $m_i$ are the same! By dropping the requirement that the adversary should produce a signature of a previously unseen message (which affects Equation \eqref{eq:124}), our current proof goes through essentially unchanged.
\end{remark}

\subsection{Why our attack does not work for QDS with quantum secret key and/or signatures}
\label{sec:barrier}
One of the main questions left open by this work is whether there exists a black-box construction from PRS of a QDS scheme with quantum public keys and \emph{quantum} secret keys and/or signatures.
Our attack from the previous subsection crucially only applies to QDS schemes with quantum public key but \emph{classical} secret key and signatures. The main issue in extending this attack is a bit subtle. The issue is that in Lemma \ref{lem:sim-haar}, which is based on the strong concentration of the Haar measure, the closeness guarantee of Lemma \ref{lem:sim-haar} holds for all \emph{standard basis} inputs to the circuits. Crucially, it does not hold for all possible quantum state inputs. This is because, the proof of Lemma \ref{lem:sim-haar} relies on a union bound, over all standard basis inputs, to obtain Equation \eqref{eq:135}. This gives a useful bound because the set of standard basis inputs is of size ``only'' $2^n$, where $n$ is the input-size of the circuit. However, it is unclear how to argue similarly when the union bound is over quantum states, since there are infinitely many of them.
One could hope to define an $\epsilon$-net for the set of quantum states (for some sufficiently small $\epsilon$), and apply a union bound over the states in the $\epsilon$-net. Unfortunately, the number of states in the $\epsilon$-net is doubly exponential (even when $\epsilon$ is a constant)! This is too large for a union bound to provide a non-trivial bound. So it seems that the strong concentration of the Haar measure is not quite strong enough for the current simulation technique to be useful in this setting. 

\begin{remark}
\label{rem:technicality2}
As mentioned earlier, our oracle separation does extend to a restricted kind of QDS scheme with \emph{quantum} secret keys. Specifically, if the QDS scheme is such that $\textit{SKGen}$ outputs a quantum state, but does \emph{not} query $\mathcal{U}$, then (a slight variation on) our attack still works. The point is that if $\textit{SKGen}$ does not query $\mathcal{U}$ at all, then the set of possible secret keys $\ket{sk}$ is ``only'' of size $2^{poly(\lambda)}$ (rather than being doubly exponential in an $\epsilon$-net): these are all of the post-measurement states that can result from running a poly-size quantum circuit and making a partial measurement on a subset of the qubits. Thus, a union bound is indeed still useful, and one can adjust the failure probability of $\mathsf{Sim}$-$\mathsf{Haar}$ appropriately (while maintaining polynomial runtime) based on the size of the circuit for $\textit{SKGen}$. 
\end{remark}

\newpage
\bibliographystyle{alpha}
\bibliography{references}

\end{document}